\newtheorem{lemma}{Lemma}[section]
\newtheorem{prop}[lemma]{Proposition}
\newtheorem{defin}[lemma]{Definition}
\newtheorem{rem}[lemma]{Remark}
\DeclareMathOperator{\divv }{div}
\begin{document}

\title[Multicomponent incompressible fluids]{Multicomponent incompressible fluids--An asymptotic study}

\author[D.~Bothe]{Dieter Bothe}
\address{Mathematische Modellierung und Analysis, Technische Universit\"at Darmstadt, Alarich-Weiss-Str. 10, 64287 Darmstadt, Germany}
\email{bothe@mma.tu-darmstadt.de}

\author[W.~Dreyer]{Wolfgang Dreyer}
\address{Weierstrass Institute, Mohrenstr. 39, 10117 Berlin, Germany}
\email{Wolfgang.Dreyer@wias-berlin.de}

\author[P.-E.~Druet]{Pierre-Etienne Druet}
\address{Weierstrass Institute, Mohrenstr. 39, 10117 Berlin, Germany}
\email{pierre-etienne.druet@wias-berlin.de}

\date{\today}

\subjclass[2010]{Primary: 76R99, 76T30, 80A20, 80A17. Secondary: 76D05, 76N99, 76V05, 80A32, 92E20}	
\keywords{Multicomponent fluids, incompressible mixture, continuum thermodynamics, low Mach--number limit, Gamma convergence}

\thanks{Pierre-\'{E}tienne Druet acknowledges the grant D1117/1-1 of the
German Science Foundation (DFG).}
\maketitle


\begin{abstract}
This paper investigates the asymptotic behavior of the Helmholtz free energy of mixtures at small compressibility. We start from a general representation for the local free energy that is valid in stable subregions of the phase diagram. On the basis of this representation we classify the admissible data to construct a thermodynamically consistent constitutive model. We then analyze the incompressible limit, where the molar volume becomes independent of pressure. Here we are confronted with two problems:

(i) Our study shows that the physical system at hand cannot remain incompressible for arbitrary large deviations from a reference pressure unless its volume is linear in the composition.

(ii) As a consequence of the 2$^\textrm{nd}$  law of thermodynamics, the incompressible limit implies that the molar volume becomes independent of temperature as well. Most applications, however, reveal the non-appropriateness of this property.

According to our mathematical treatment, the free energy as a function of temperature and partial masses tends to a limit in the sense of epi-- or Gamma--convergence. In the context of the first problem, we study the mixing of two fluids to compare the linearity with experimental observations. The second problem will be treated by considering the asymptotic behavior of both a general inequality relating thermal expansion and compressibility and a PDE-system relying on the equations of balance for partial masses, momentum and the internal energy.
\end{abstract}
%
%
%
%
%

\section{Introduction}



As a rule, real world fluids--solutions, electrolytes, fluid mixtures etc., but also supposedly pure substances--are composed of several constituents and are termed \emph{multicomponent fluids}.
Recently, it has turned out that new energy technologies require a consistent coupling of flow of matter, diffusion, chemical reactions and further purely mechanical phenomena. For fluid mixtures, the resulting equations raised new interest also in the PDE community. In fact, only few results are available for  reaction--diffusion systems coupled to fluid dynamical equations. Moreover, these results almost all concern so-called ideal mixtures. Non-ideal fluid mixtures are treated for example in \cite{bothedruet}. There the analysis of a general class--one model was presented, without other restrictions on the thermodynamical potential -- in this case the Helmholtz free energy -- than being a function of Legendre--type.\footnote{A function $f$ defined in an open convex set $\mathcal{D} \subseteq \mathbb{R}^N$ is called \emph{of Legendre--type} if it is continuously differentiable and strictly convex in $\mathcal{D}$ and the gradient of $f$ blows up at every point of the boundary of $\mathcal{D}$ (see \cite{rockafellar}, Section 26).} In \cite{bothedruetincompress}, the results of \cite{bothedruet} for the isothermal case  are extended to incompressible mixtures. 

For a physical body, the molar volume $\upsilon$ expresses how much volume is locally available per mole. In a multicomponent system, the amount of matter (how many moles) which is overall necessary to fill a certain volume depends on temperature and pressure, but also on the composition of the mixture. This relationship, called the {\em thermal equation of state}, is an algebraic equation of the form
\begin{align}\label{EOS}
 \upsilon = \hat{\upsilon}(T, \, p, \, x_1, \ldots,x_N) \, ,
\end{align}
in which the variables $T$ (absolute temperature), $p$ (pressure) and $x = (x_1,\ldots,x_N)$ (mole fractions, expressing the composition) describe the state of the body, and   $\hat{\upsilon}$ is a constitutive function.

We define \textit{incompressibility} of a multicomponent fluid as zero compressibility, i.e.
\begin{align}\label{incom}
 \partial_p \hat{\upsilon}(T, \, p, \, x_1, \ldots,x_N) = 0 \, ,
\end{align}
meaning the pressure--invariance of the molar volume at fixed composition and temperature.

This concept still allows for local changes of the molar volume and the total mass density due to variations of temperature and/or composition. However, in the present paper we prove by rigorous asymptotics that the thermodynamical stability of the incompressible phase imposes (in a neighborhood of some reference state) the conditions
\begin{align}\label{incomCons}
 \partial_T \hat{\upsilon} = 0 \quad \text{ and } \quad D^2_{x,x}\hat{\upsilon} = 0 \, .
\end{align}
In other words, a thermodynamically consistent \textit{equation of state} for an incompressible fluid mixture does not depend on temperature and it is linear in the composition,
\begin{align}\label{EOSinfty}
 \upsilon = \hat{\upsilon}(T^0,p^0, \, x) = \sum_{i=1}^N \upsilon^{00}_i \, x_i \, ,
\end{align}
with constants $\upsilon^{00}_1, \ldots, \upsilon^{00}_N$ \footnote{ Depending on the context we shall denote the reference values of temperature and pressure by $(T^0, \, p^0)$, respectively $(T^{\rm R}, \, p^{\rm R})$. We use the superscript $00$ (resp. ${\rm R}$) to denote the values of thermodynamic functions in the reference state.}. Similar volume constraints have been used to express incompressibility in \cite{bothedreyer}, \cite{dreyerguhlkemueller}, \cite{dreyerguhlkelandstorfer} and \cite{mills,josef,donevetal}.

As a consequence, the asymptotic model forbids thermal expansion and nonlinear volume effects in the composition variable, two conclusions that seem counter intuitive. Indeed, in Section \ref{thermexp} we consider more refined scaling arguments such that these conclusions can be weakened. To this end we study the full compressible PDE-system of mixtures of fluids. There are flow regimes, where it is necessary to distinguish between the conclusions of thermodynamically consistent models for compressibility zero, and the behavior of the solutions to the PDE-system that describe multicomponent fluids at low Mach--number.

It is well known that the chemical potentials for incompressible multicomponent fluids and for compressible mixtures in the low Mach-number regime are linear in the pressure. In fact, for both cases the chemical potentials of the $N$ constituents indexed by $i=1,...,N$ can be written as
\begin{align}\label{incomCHEMPOTa}
\mu_i =  \hat{\mu}_i(T, \, p, \, x_1, \ldots,x_N) = \upsilon_i^{00} \, p + \mu_i^{0}(T,\, x_1, \ldots,x_N)~.
\end{align} 
Note that the pressure term in the chemical potentials induces the coupling between transport of matter, diffusion and chemical reactions. That term leads to a considerable complication of the mathematical analysis. Some analysis of weak solutions of the same kind of model was also performed in \cite{druetmixtureincompweak}.

So far we already introduced two constitutive functions, viz. \eqref{EOS} and \eqref{incomCHEMPOTa}. However, the PDE-system of a fluid mixture contains further functions of that kind. For example, diffusion fluxes, reaction rates, the heat flux and the specific enthalpy are likewise given by constitutive functions. In this context there arises the problem of thermodynamic consistency of the PDE-system because the data for these functions often result from quite different sources. In this paper we derive all these functions from a potential, the Helmholtz free energy density, with special features such that thermodynamic consistency is guaranteed. But when we ask for the origin of the free energy function, we are confronted with new problems because free energy models are not directly determined in experiments. Exclusively some derivatives of the free energy are measured. For this reason we shall construct some representation theorems of the free energy function in terms of independent measurable quantities.

%

\section{Thermodynamics of multicomponent fluids}

In this section, a brief introduction of thermodynamic modelling for multicomponent mixtures is given. The model relies on the treatment that is described in the seminal handbook article by Meixner, \cite{MR59}. Further details are found in \cite{dGM84}, \cite{mueller} and \cite{bothedreyer}.

\vspace{0.2cm} {\bf Variables.} Consider a multicomponent system consisting of $N \in \mathbb{N}$ chemical species $\ce{A}_1,\ldots,\ce{A}_N$ which are assumed to constitute a fluid phase.
Locally, we characterize the mixture by the absolute temperature $T$, the partial mass densities $(\rho_i )_{i=1,2,\ldots,N}$ and the barycentric velocity $\boldsymbol v$. These quantities are the basic variables of the model.

Total mass density, total mole density and the molar volume are defined by
\begin{align}\label{thermo1}
\varrho = \sum_{i=1}^N \rho_i\, , \quad n = \sum_{i=1}^N \frac{\rho_i}{M_i} \, , \quad  \quad \upsilon = \frac{1}{n} \, ,
\end{align}
respectively. The positive constants $M_i$ denote the molar masses of the constituents. For brevity in notation we often use the notation $\rho$ for the vector $(\rho_1, \, \ldots, \rho_N)$, while $\varrho$ is the total mass density.
The molar volume is related to the mass density by
\begin{align}\label{vequiv}
 \upsilon = \frac{M}{\varrho} \, ,
\end{align}
where $M$ denotes the mean molar mass, i.e.\
\begin{align}\label{MM}
 M = M(x) := \sum_{i=1}^N M_i \, x_i \, ,
\end{align}
where the mole fractions are defined as $x_i := n_i/n = \rho_i/(M_i \, n)$.

In single-component fluids the molar volume is defined as the specific volume, namely $\upsilon=1/\varrho$. In fluid mixtures the latter definition is not appropriate. Here the volume is defined as in \eqref{vequiv} and is called molar volume, because its physical unit is [m$^3$/~mole].

%
%
%

\vspace{0.2cm} {\bf Equations of balance.} In continuum thermodynamics, the relevant equations to determine the variables are balance equations for the partial masses, the momentum and the internal energy. These balance equations read
\begin{align}\begin{split}\label{thermo2}
&\partial_t \rho_i + \divv(\rho_i \, \boldsymbol v + \boldsymbol J_i) =  r_i \,  ,\\
&\partial_t (\varrho \boldsymbol v)+\divv(\varrho \boldsymbol v \otimes \boldsymbol v + p\,\boldsymbol 1 - \boldsymbol S)=\varrho\boldsymbol b \, ,\\
&\partial_t (\varrho u) + \divv(\varrho u \, \boldsymbol v + \boldsymbol q) =(-p\boldsymbol{1}+  \boldsymbol S )\, : \, \nabla \boldsymbol v ~ .\end{split}
\end{align}
Here further quantities do occur that are not in the list of variables: The diffusion fluxes ${\boldsymbol J}_1,\ldots,\boldsymbol{J}_N$ and the reaction rates $r_1,\ldots,r_N$. Moreover, we have the stress which here is decomposed into pressure $p$ and its irreversible part $\boldsymbol{S}$. The quantity $\boldsymbol b$ is the external body force acting on the fluid. Finally, there are the specific internal energy $u$ and the heat flux $\boldsymbol{q}$. Except for the body force, which is assumed to be given, these objects are called constitutive quantities because they describe the material behavior of the special fluid mixture at hand.

The diffusion fluxes and the reaction rates are subjected to the constraints
\begin{align}\label{thermo3}
\sum_{i=1}^N \boldsymbol J_i =0\quad\textrm{and}\quad\sum_{i=1}^N r_i =0~,
\end{align}
such that the balance of total mass becomes a conservation law, the continuity equation
\begin{align}\label{thermo4}
\partial_t \varrho+ \divv(\varrho \, \boldsymbol v ) = 0~.
\end{align}

Note that balance laws are primarily stated for the partial mass densities $\rho_1, \ldots,\rho_N$. Due to the simple connection
$\rho_i = M_i \, n_i $, they imply balance laws for the mole densities $n_1, \ldots, n_N$ as well. However, unlike the total mass $\varrho$, the total mole density $n$ is not a conserved quantity if chemical reactions are involved.

\vspace{0.2cm} {\bf Constitutive equations.} The constitutive quantities are related to the variables by constitutive equations in a material dependent manner.

We assume the Newtonian viscosity model, where the irreversible part of the stress is linear in the symmetric velocity gradient, i.e.\
\begin{equation}\label{thermo5}
    \boldsymbol S=\lambda\, (\divv \boldsymbol v)\, \boldsymbol 1+ 2\,\eta\, (\nabla \boldsymbol v)_\textrm{sym} \quad\textrm{with}\quad
    \lambda\geq 0,\quad \lambda+\frac{2}{3}\eta\geq 0~.
\end{equation}
The constitutive equations for the diffusion fluxes and the heat flux rely on the laws of Fick, Onsager and Fourier, whereupon these quantities are linearly related to the gradients of inverse temperature, $1/T$, and of chemical potentials, $\mu_i$, more specifically
\begin{align}
\begin{split}\label{thermo6}
 \boldsymbol J_i  = & - \sum_{j=1}^N M_{i,j} \, \nabla\frac{\mu_j}{T} + l_i \, \nabla \frac{1}{T} \, , \qquad \sum_{i=1}^N M_{i,j} = 0\,  \text{ for all } j \quad\textrm{and}\quad \sum_{i=1}^N l_{i} = 0 \, ,\\
 \boldsymbol q = & \sum_{j=1}^N \, l_j \,  \nabla\frac{\mu_j}{T}+\kappa \, \nabla \frac{1}{T}~.
\end{split}
 \end{align}
In \eqref{thermo6}, $\{M_{i,j}\}$ is a symmetric, positive semi-definite matrix. For its relationship with the \emph{Onsager operator}, see for instance \cite{mueller}. The coefficients $l_j$ are related to thermo-diffusion (the Soret and the Dufour effect), while $\mu_1,\ldots,\mu_N$ are the (mass-based) chemical potentials. For details concerning the modern way of deriving appropriate representations of the fluxes, see \cite{bothedreyer}, \cite{bothedruetMS} and the references given there.

The thermodynamically consistent modelling of the reaction rates assumes them exponentially proportional to reaction affinities of the form $\sum_{i=1}^N \gamma^k_i \, M_i \, \mu_i/T$, in which $\gamma^k$ is the stoichiometric vector associated with the $k^{\rm th}$ reaction. If there are $N_{\rm R} \in \mathbb{N}_0$ independent chemical reactions, then a thermodynamically consistent choice is
\begin{align}\label{thermo7}
 r_i = - \sum_{k=1}^{N_{\rm R}} R_k \, \Big(1 - \exp(\alpha_k \, \sum_{j=1}^N \gamma^k_j \, M_j \, \mu_j/T)\Big) \, \gamma^k_i \, M_i \, ,
\end{align}
in which $R_k$ and $\alpha_k$ are some positive functions of the local thermodynamic state. Each reaction conserves the mass, i.e. we have $\sum_{i=1}^N\gamma^k_i~M_i=0$.

The crucial constitutive quantity relating the chemical potentials to the main variables is the (Helmholtz) free energy density. It is defined by the combination $\varrho \psi := \varrho u - T \, \varrho s$ with the internal energy density $\varrho u$ and the entropy density $\varrho s$. In the current context, the most general constitutive function for the free energy density has the form
\begin{align}\label{thermo8a}
\varrho \psi = \varrho \psi(T, \, \rho_1 , \, \rho_2, \ldots, \rho_N ) \, ,
\end{align}
and then the (mass--based) chemical potentials are given by
\begin{align}\label{thermo8b}
 \mu_i = \frac{\partial \varrho\psi}{\partial \rho_{i}}  \quad \text{ for } i  = 1, 2, \ldots , N \, .
 \end{align}
If the free energy function \eqref{thermo8a} were given, the second law of thermodynamics implies rules whereupon the other constitutive quantities can be calculated, viz.\
\begin{align}\label{thermo8c}
p = -\varrho\psi +
 \sum_{i=1}^N \rho_i \, \mu_i  \,  , \quad \varrho u = - T^2 \, \partial_T(  \varrho\psi/T) \ , \quad  \varrho s = - \partial_T \varrho \psi \, .
 \end{align}
A proof of this proposition may be found for instance in \cite{bothedreyer}.

The existence of a free energy function $\varrho \psi(T, \, \rho_1,\ldots,\rho_N)$ allowing to apply \eqref{thermo8b} and \eqref{thermo8c} is a necessary condition to formulate constitutive equations for the fluxes and the reaction rates of multicomponent systems. Moreover, the existence of a free energy function is the guarantor for thermodynamic consistency of these models. But thermodynamic consistency requires more than the equations \eqref{thermo8b} and \eqref{thermo8c}: Additionally, there are some inequalities. Two of them have already been stated for the Newtonian viscosities, namely \eqref{thermo5}$_{2,3}$. Further inequalities concern the chemical potentials and the specific internal energy. They read
\begin{equation}\label{thermo8d}
   \left\{\frac{\partial \mu_i}{\partial \rho_j}\right\}_{i,j=1,\ldots,N} \quad \text{is symmetric, positive definite}\quad\text{and}\quad\frac{\partial u}{\partial T}>0~.
\end{equation}

There is a special case for which the system of PDEs resulting from the balance equations \eqref{thermo2} is well studied.  This concerns the class of \textit{ideal mixtures}. Here, the chemical potentials obey
\begin{align}\label{muiideal}
 \mu_i = \hat{\mu}_i(T, \, p, \, x_i) = g_i(T, \, p) + \frac{R \, T}{M_i} \, \ln x_i \, ,
\end{align}
where the functions $g_1, \ldots, g_N$ are the specific Gibbs energies of the pure constituents, $R$ is the universal gas  constant, and $x_i = n_i/n$ are the mole fractions\footnote{The construction of $\varrho\psi$ for chemical potentials obeying the additive splitting $\hat{\mu}_i = g(T,p) + a_i(T,x_i)$ is recalled in the Appendix  \ref{idmix}. It turns out that the choice of the functions $a_i$ and $g_i$ is subject to mathematical restrictions in order to guarantee the compatibility with \eqref{thermo8b} and the thermodynamic consistency.}. Usually, the specific Gibbs energies are not explicitly given, rather they must be read off from data tables.

However, there is one exceptional special case. For \textit{mixtures of ideal gases}, the specific Gibbs energies may be explicitly calculated from statistical mechanics, resulting in
\begin{equation}\label{thermo7id}
   g_i(T, \, p)=h_i^\textrm{R} - T \, s_i^{\rm R} +\frac{R\,T}{M_i}\,\left[ \, (z_i+1)\big(1-\frac{T^0}{T}\big) -
   \ln\Big(\big(\frac{T}{T^0}\big)^{z_i+1} \, \frac{p^0}{p}\Big) \right] ~.
\end{equation}
Herein, the constants $z_i$ indicate one-, two- and more-atomic gases according to the numbers $z_i=(3/2, 5/2, 3)$, respectively.
The constants $h_i^\textrm{R}$ and $s_i^\textrm{R}$ denote the specific enthalpies and specific entropies, respectively. For many pure substances, the entropy constants are determined and tabulated. This does not apply for the enthalpy constants. For given chemical reactions, the relevant combinations $(\sum_{i=1}^N \gamma_i^k M_i h_i^\textrm{R})_{k=1,..,N^\textrm{R}}$ only may be determined and can also be found in tables. In mixtures without both chemical reactions and phase transitions the enthalpy constants are not relevant.

%
%

\vspace{0.2cm} {\bf General representation for the free energy, Part 1: Preliminaries.} The special choices \eqref{muiideal} and \eqref{thermo7id} are not well suited for the non-ideal case. Thus if we want to study the mathematical properties of systems that are non-ideal, we are faced with the problem that empirical data do not primarily provide us with the free energy density as a function of the basic variables: rather they yield expressions for pressure, specific internal energy and chemical potentials. Consequently, thermodynamical consistency, expressed by the equations \eqref{thermo8b}, \eqref{thermo8c} and \eqref{thermo8d}, is no longer guaranteed. This observation leads us to the following problem: Find a general representation of the free energy density in terms of quantities resulting from a combination of theoretical and measured data.

A preliminary problem arises from the fact that empirical data usually do not use $T, \, \rho_1, \ldots ,\rho_N$ as basic variables. The reason is that in experiments,  temperature $T$, pressure $p$ and the composition vector, i.e.\ the mole fractions $x = (x_1, \ldots,x_N)$, are controlled.

While the pressure is related to the basic variables via \eqref{thermo8c}$_1$, the number fractions obey
\begin{align}\label{thermo8}
x_i = \frac{n_i}{n} = \upsilon \, n_i \quad \text{ subject to }  \quad \sum_{i=1}^N x_i = 1 \, .
\end{align}
In order to obtain a representation formula for the free energy, we adopt the change of variables
\begin{align}\label{thermo9}
(T, \, n_1, \ldots,n_N) \longleftrightarrow (T, \, p, \, x_1, \ldots, x_N) \, .
\end{align}
Much easier than \eqref{thermo9} is a transformation $(T, \, n_1, \ldots,n_N) \longleftrightarrow (T, \, \upsilon, \, x_1, \ldots, x_N)$ because the molar volume $\upsilon$ is introduced by the simple relation $n_i=\upsilon^{-1}x_i$ for $i=1,\ldots,N$. Therefore we perform the transformation \eqref{thermo9} by two steps according to
\begin{equation}\label{thermo10}
    (T, \, n_1, \ldots, n_N)\quad\longleftrightarrow\quad(T,\upsilon \, ,x_1,\ldots,x_{N})\quad\longleftrightarrow\quad(T,p,x_1,\ldots,x_{N})~.
\end{equation}
Due to the constraint $\sum_{i=1}^N x_i = 1$, the number of independent variables does not change here.

For a local thermodynamic quantity $f$, given by a constitutive function $f(T, \, \rho_1, \ldots,\rho_N)$ of the basic variables, we denote by $\bar{f}(T, \, \upsilon, \, x_1, \ldots,x_N)$ its representation in the variables $(T, \, \upsilon, \, x)$, and by $\hat{f}(T, \, p, \, x_1, \ldots, x_N)$ its representation in the variables $(T, \, p, \, x)$.

The change of variables \eqref{thermo10}$_2$ needs an additional constitutive equation. This is the so-called \textit{thermal equation of state}, relating the pressure to the chosen variables. The most general constitutive law we can expect in the current context is given by
\begin{equation}\label{thermo10a}
    p=\bar p(T, \, \upsilon, \, x)\quad \longleftrightarrow\quad \upsilon=\hat{\upsilon}(T, \, p, \, x)~.
\end{equation}
The invertibility of \eqref{thermo10a} is guaranteed by the convexity of the free energy function. A proof of this proposition will follow below. More comments on the rules of the combined transformations \eqref{thermo10} are to be found in the Appendix \ref{pieces}.

Throughout the present paper, we assume that the molar volume $\upsilon$ can be calculated or measured in the variables $T$ (temperature), $p$ (pressure) and $x$ (composition). Hence, there is a known function $\hat{\upsilon}$ such that
\begin{align}\label{Volumeconstraint}
 \upsilon = \frac{1}{n} =  \hat{\upsilon}(T, \, p, \, x_1, \ldots, x_N) \, .
\end{align}
If this constitutive equation is given, we can now introduce all thermodynamic quantities as functions of the variables $T$, $p$ and $x$. For example, the partial mass densities obey
\begin{align}\label{hatrho}
\rho_i  = n \, M_i \, x_i = \frac{M_i \, x_i}{\hat{\upsilon}(T, \, p, \, x)} =: \hat{\rho}_i(T, \, p, \, x)\, .
\end{align}

A further important application of the transformation rules concerns the \textit{specific heat} (at constant pressure). Its definition requires the knowledge of both the internal energy function $u=\hat u(T,p,x)$ and the combination $\hat h(T,p,x)\equiv\hat u(T,p,x)+p \hat\upsilon(T,p,x)/M(x)$, which is called specific enthalpy. Then, the specific heat is defined by
\begin{equation}\label{thermo12}
   \hat c_p(T,p,x)= \partial_T \hat h(T,p,x)~.
\end{equation}

\vspace{0.2cm} {\bf General representation for the free energy, Part 2: Available data.}\label{data} Depending on the chosen data, several representation theorems for $\varrho\psi$ can be derived. Unfortunately, the structure of the empirical data--base is somehow confusing. Only the use of the variables $(T,p,x)$ can be taken for granted. In terms of these variables, the experimental literature often lists the molar volume, the specific heat and the chemical potentials via the so--called activity coefficients. These are related to the chemical potentials from \eqref{thermo8b} by
\begin{equation}\label{thermo13}
   \mu_i=\mu_i^0+\frac{R\,T}{M_i}~\ln(x_i\gamma_i)~.
\end{equation}
Sometimes $\mu_i^0$ is considered as a constant, then the complete $(T,p,x)$ dependence is contained in the activity coefficients $\gamma_i$. However, one also finds representations of \eqref{thermo13} with $\mu_i^0(T,p)$, then $\gamma_i$ exclusively describes the deviation from ideal mixtures with respect to the composition (see \cite{PLG99}).

From the theoretical point of view we must be especially careful, because here we observe a dependence between molar volume, specific heat and the chemical potentials. In fact, the pressure dependence of the molar volume already uniquely determines the pressure dependence of the specific heat and of the chemical potentials. Thus only the use of experimental data of the kind
\begin{equation}\label{thermo12a}
   \upsilon=\hat\upsilon(T,p,x),\quad c_p=\hat c_p(T,p^0 ,x)\quad\textrm{and}\quad \mu_i=\hat\mu_i(T,p^0 ,x)
\end{equation}
would lead to thermodynamically consistent constitutive laws.

Our representation theorems for the free energy rely on the following data:
\begin{enumerate}[(a)]
 \item\label{referencevalues} The reference pressure $p^0$ and the reference absolute temperature $T^0 > 0$,
 \item\label{volume} the molar volume function $\upsilon=\hat\upsilon(T,p,x)$,
 \item\label{HC} the specific heat $c_p = \hat{c}_p(T, \, p^0, \, x)$ at a single reference pressure as a function of temperature and composition,
 \item\label{EH} the purely compositional dependence of specific entropy $s = \hat{s}(T^0, \, p^0, \, x)$ and specific enthalpy $\hat{h}(T^0, \, p^0, \, x)$ at reference temperature and pressure.
\end{enumerate}
In this paper, we prefer to use the specific entropy and the specific enthalpy instead of the chemical potentials as given data. Then the representation theorems for the free energy function assume their simplest form. The functional relations between the two data sets can be read off from the formulas of Appendix \ref{pieces}.
%

\vspace{0.2cm} {\bf General representations for the free energy, Part 3: Final results.} Next, we represent the free energy with respect to the above observations. The proof of the two following propositions is found in the appendix, Section \ref{pieces}.

At first we give the free energy density with respect to the variables $T, \, \rho$, which is needed for the mathematical treatment starting in Chapter \ref{freesol}:
\begin{align}\label{FE}
&  f(T, \, \rho) := \varrho \psi(T, \, \rho) =  n(\rho) \, \int_{p^0}^{p(T, \, \rho)} \hat{\upsilon}(T, \, p^{\prime}, \, x(\rho)) \, dp^{\prime} - p(T, \, \rho) \nonumber\\
 & \quad - \varrho \, \left(\int_{T^0}^T \int_{T^0}^{\theta} \frac{\hat{c}_p(\theta^{\prime}, \, p^0, \, x(\rho))}{\theta^{\prime}} \, d\theta^{\prime}d\theta + T \, \hat{s}(T^0, \, p^0, \, x(\rho)) - \hat{h}(T^0, \, p^0, \, x(\rho)) \right)\, .
\end{align}
In \eqref{FE}, the notations $n(\rho) = \sum_i (\rho_i/M_i)$ and $x(\rho) = \rho_i/(M_i n(\rho))$ are just needed to switch between $(n_1,\ldots,n_N)$ and $(\rho_1,\ldots,\rho_N)$, while $ p(T, \, \rho)$ is the representation of the pressure in the main variables following the definition \eqref{thermo10a}. If the function $\hat{\upsilon}$ is given, the knowledge of the latter function and of its derivatives results from \eqref{Volumeconstraint}, since $\pi = p(T, \, \rho) $ is the implicit solution to
\begin{align}\label{implicitpress}
 n(\rho) \, \hat{\upsilon}(T, \, \pi, \, x(\rho)) = 1 \, .
\end{align}
The representation formula allows to characterize all derivatives of the free energy in terms of the data. For the convenience of the reader, some related identities are recalled in the Appendix \ref{pieces}.

The following alternative representation is needed to perform the incompressible limit. It gives the free energy density with respect to the variables $(T, \, p, \, x)$. Note that in the $(T,p,x)-$setting, the Gibbs energy $M(x)\hat g(T,p,x)=M(x)\hat\psi(T,p,x)+p\hat\upsilon(T,p,x)$ is the relevant potential,
\begin{align}\label{FEbis}
 M(x)\hat g(T, \, p, \, x) = & \int_{p^0}^{p} \hat{\upsilon}(T, \, p^{\prime}, \, x) \, dp^{\prime} \\
 & - M(x)\, \left(\int_{T^0}^T \int_{T^0}^{\theta} \frac{\hat{c}_p(\theta^{\prime}, \, p^0, \, x)}{\theta^{\prime}} \, d\theta^{\prime}d\theta + T \, \hat{s}(T^0, \, p^0, \, x) - \hat{h}(T^0, \, p^0, \, x) \right)\, .\nonumber
\end{align}
Finally the balance of internal energy needs a representation of the specific enthalpy. It reads
\begin{align}\begin{split}\label{FEbisH}
M(x)\hat h(T, \, p, \, x) = & \int_{p^0}^{p}(\hat{\upsilon}-T\partial_T \hat{\upsilon})(T, \, p^{\prime}, \, x) \, dp^{\prime}\\
& +M(x)\, \Big(\int_{T^0}^T\hat{c}_p(\theta, \, p^0, \, x) \, d\theta + \hat{h}(T^0, \, p^0, \, x) \Big)\, .
\end{split}
\end{align}
In using the formulae \eqref{FE}, \eqref{FEbis} and \eqref{FEbisH} one must remain aware of the fact that the free energy might not be globally smooth on the state space, but only piecewise smooth. Phase transitions can occur upon temperature, pressure and composition.

Hence, the representation \eqref{FE} might not be valid globally, but rather in a stable subregion of the state space. By state space we denote, for the basic variables, the domain
\begin{align}\label{SSTrho}
\mathscr{D}_{T,\rho} = \{(T, \, \rho) \, : \, T > 0, \quad \rho_1, \ldots, \rho_N > 0\} \, .
\end{align}
In stable subregions, thermodynamics requires the strict concavity of the entropy function which, for the free energy, is equivalent to the two following conditions:
\begin{itemize}
 \item $\rho \mapsto f(T, \, \rho)$ is strictly convex;
 \item $T \mapsto f(T, \, \rho)$ is strictly concave.
\end{itemize}

\section{Definition of incompressibility and resulting problems.}\label{defandprob}

We now consider the asymptotic behavior of  the free energy density and the resulting PDE-system for an \emph{incompressible} multicomponent fluid.

\vspace{0.2cm} {\bf Definition of incompressibility.} A multicomponent fluid is called incompressible if its molar volume $\hat{\upsilon}(T, \, p, \, x)$ exhibits only a negligible dependence on pressure; that is, the derivative $\partial_p\hat{\upsilon}$ is small compared to some empirical characteristic value. The incompressible limiting case is defined by
\begin{equation}\label{incom1}
   \partial_p\hat{\upsilon}(T,p,x)\rightarrow 0~.
\end{equation}
Essentially equivalent definitions are exposed in \cite{bothedreyer}, Section 16 and \cite{dreyerguhlkemueller}.

In the mathematical literature, the term ''incompressible'' is a synonym for the relation $\divv {\bf v} = 0$, i.e.\ ${\bf v}$ is solenoidal. Due to the continuity equation, this is equivalent to $\dot{\varrho} =0$ (material derivative).

In conjunction with this definition of incompressibility two different problems will be met. The first problem already appears for single-component fluids, while the second problem is a feature of fluid mixtures: (i) The famous \textit{Boussinesq-Approximation} embodies $\divv {\bf v} = 0$ and nevertheless allows thermal expansion, i.e.\ $\dot{\varrho} \neq0$. (ii) In a mixture, as a rule, the volume is not conserved. The standard example adds the same volume $V$ of water and ethanol under normal conditions of temperature and pressure, to obtain a mixture which occupies a volume significantly smaller than $2V$.

As we will see, this phenomenon likewise leads to $\dot{\varrho} \neq0$.
%

\vspace{0.2cm} {\bf Incompressibility forbids thermal expansion.}

The incompressible limiting case was carefully discussed by I.~M\"uller in his book \cite{mueller}. He observes at first that the limit \eqref{incom1} may generate  problems with thermodynamic consistency. The subject is controversially discussed even in the case of a single--component fluid. Here we start the discussion with the fundamental inequalities \eqref{thermo8d}. In Appendix \ref{pieces} we show that the inequalities \eqref{thermo8d} imply the further inequality
\begin{equation}\label{incom2}
(\partial_T \hat \upsilon)^2\leq - \frac{\hat c_p \, M}{T} \, \partial_p\hat \upsilon \, ,
\quad\text{and from \eqref{Athermo15b}$_1$ we have}\quad
\partial_p \hat u = -T~\partial_T\hat \upsilon-p~\partial_p \hat \upsilon~.
\end{equation}
For a single--component fluid, in which case $M$ is independent of $x$, both the inequality \eqref{incom2}$_1$ and \eqref{incom2}$_2$ form the basis of the treatment by I.~M\"{u}ller in \cite{mueller}. I.~M\"uller argues as follows. If $\partial_p\hat{\upsilon}=0$, then the inequality \eqref{incom2}$_1$ implies $\partial_T\hat{\upsilon}=0$ as well. Then one obtains from \eqref{incom2}$_2$ that $\partial_p\hat{u}=0$.

Particularly the condition $\partial_T\hat{\upsilon}=0$ in an incompressible fluid has become known as the so-called \textit{M\"uller Paradox}: It seems to inevitably forbid the famous \textit{Boussinesq-Approximation}, whereupon an incompressible fluid may exhibit thermal expansion, i.e.\ $\partial_T\hat{\upsilon}\neq0$. However, at least in liquids the \textit{Boussinesq-Approximation} is optimally grounded on experimental evidence.

Due to this apparent paradoxical result, H.~Gouin, A.~Muracchini \&  T.~Ruggeri again picked up the topic.
The paper \cite{gouin} weakens the definition of incompressibility and introduces the notion of a \textit{quasi-thermal incompressible} body. Indeed, in \cite{gouin} the authors exclusively discuss the energy equation \eqref{incom2}$_2$ but not the inequality \eqref{incom2}$_1$. Nevertheless they provide data for two interesting examples showing that the pressure--dependence of the internal energy can not be neglected, even if $\partial_p\hat{\upsilon}\rightarrow0$ is assumed.

In \cite{bothedreyer} the important role of the inequality \eqref{incom2}$_1$ is also discussed. The authors likewise consider a single--component fluid and resolve the \textit{M\"uller Paradox} on the basis of the PDE-system of motion of an incompressible fluid. In the current paper we proceed with this discussion in Section \ref{thermexp}, but for a multicomponent fluid.

\vspace{0.2cm} {\bf Incompressibility implies a linear dependence of the molar volume function on the composition.}
In this paragraph we derive a further surprising consequence of the incompressibility definition \eqref{incom1}. If we ask for convexity of $\rho \mapsto f(T, \, \rho)$ also for large deviations of the reference pressure, then the function $\hat{\upsilon}$ must be linear in the composition variable $x$. This can be seen as follows.

%

By means of \eqref{FE}, we compute the Hessian as a sum of three matrices
 \begin{align}\label{D2ffull}
  \partial^2_{\rho_i,\rho_j}f(T, \, \rho) = A_{ij} + B_{ij} + C_{ij} \, .
 \end{align}
In the variables $(T, \, p,\, x)$, the expressions for the matrices $A$ and $B$ are\footnote{Remark: For a function $\phi$ defined on the surface $\{(x_1,\ldots,x_N) \, : \, x_i \geq 0, \, \sum_{i=1}^N x_i = 1\}$, the expressions $D_x \phi \cdot [e^i - x] = \sum_{j=1}^N (\delta_{ij} - x_j) \, \partial_{x_j} \phi$ define a tangential differential operator.}
 \begin{align*}
  \hat{A}_{ij} = & \frac{\hat{\upsilon}}{M_i\,M_j} \,\int_{p^0}^{p} (D^2_{x,x}\hat{\upsilon}(T,p^{\prime},x)\, [e^i \, -x] \cdot  [e^j-x]) \, dp^{\prime} \, ,\\
  \hat{B}_{ij} = & -\frac{\hat{\upsilon}}{M_i\,M_j} \,\frac{(\hat{\upsilon} + D_x\hat{\upsilon} \cdot [e^i-x]) \, (\hat{\upsilon} + D_x\hat{\upsilon} \cdot [e^j-x])}{\partial_p\hat{\upsilon}} \, .
  \end{align*}

With the auxiliary function $F(T,x) := -\int_{T^0}^T\int_{T^0}^{\theta} (\hat{c}_p^0(\theta^{\prime},x)/\theta^{\prime}) \, d\theta \, d\theta^{\prime} - T \, \hat{s}^{00}(x) + \hat{h}^{00}(x)$, which is independent of pressure, we obtain for the matrix $C$ the expression
\begin{align*}
  \hat{C}_{ij} = & \frac{M \, \hat{\upsilon}}{M_i \, M_j} \, D^2_{x,x}F [e^i-x]\cdot[e^j-x] \\
  & + \frac{\hat{\upsilon}}{M_i} \, \Big(1-\frac{M}{M_j}\Big) \, D_xF\cdot[e^i-x] + \frac{\hat{\upsilon}}{M_j} \, \Big(1-\frac{M}{M_i}\Big) \, D_xF\cdot[e^j-x] \, .
 \end{align*}
We note that $\hat{B} = \lambda \,  \xi  \otimes \xi$ is a rank--one matrix, where
\begin{align*}
\lambda = -\frac{\hat{\upsilon}}{\partial_p\hat{\upsilon}}, \qquad \xi_i := \frac{\hat{\upsilon} + D_x\hat{\upsilon} \cdot [e^i-x]}{M_i} \, \text{ for } i = 1,\ldots,N \, .
 \end{align*}
 For $\partial_p \hat{\upsilon} \rightarrow 0$ we have $\lambda = + \infty$, and we moreover observe that
\begin{align*}
  \hat{A}_{ij} =  \frac{\hat{\upsilon}(T, \, p^0,x)}{M_i\,M_j} \, (p-p^0) \, D^2_{x,x}\hat{\upsilon}^{\infty}(T,p^0,x)  [e^i \, -x] \cdot  [e^j-x] \, .
\end{align*}
Hence the $(T, \, p, \, x)$ representation $\widehat{D^2_{\rho,\rho}f}$ of the Hessian matrix exists as an operator on the subspace $\{\xi\}^{\perp}$, and there it is given as
\begin{align}\begin{split}\label{D2fpartial}
\widehat{D^2_{\rho,\rho}f}(T,\, p,\, x)  = &
 \frac{\hat{\upsilon}(T, \, p^0, \,  x)}{M_i\,M_j} \,  (p-p^0)\,  D^2_{x,x}\hat{\upsilon}(T, \, p^0, \, x) [e^i \, -x] \cdot [e^j-x]\\ & + \frac{M(x) \, \hat{\upsilon}(T,p^0,x)}{M_i \, M_j} \, D^2_{x,x}F(T,x) [e^i-x]\cdot[e^j-x] \\
  & + \frac{\hat{\upsilon}(T,p^0,x)}{M_i} \, \Big(1-\frac{M(x)}{M_j}\Big) \, D_xF(T,x)\cdot[e^i-x] \\
  & + \frac{\hat{\upsilon}(T,p^0,x)}{M_j} \, \Big(1-\frac{M(x)}{M_i}\Big) \, D_xF(T,x)\cdot[e^j-x]  \, .
\end{split}
  \end{align}
Due to the condition that $D^2_{\rho,\rho}f$ is positive definite, the latter matrix must generate a positive operator on $\{\xi\}^{\perp}$. But in the latter representation, the only contribution varying with pressure is
\begin{align}\label{singterm}
(p-p^0)\,  D^2_{x,x}\hat{\upsilon}(T, \, p^0, \, x) [e^i \, -x] \cdot [e^j-x] \, .
\end{align}
We conclude that unless $D^2_{x,x}\hat{\upsilon} = 0$,
we always can find a contradiction to positive definiteness at finite pressures.

Hence, full exploitation of the inequalities \eqref{thermo8d} has the following two consequences for the equation of state of an incompressible fluid:
\begin{align}\label{conseqs}
\partial_T \hat{\upsilon} = 0 \quad \text{ and } \quad D^2_{x,x} \hat{\upsilon} = 0 \, .
\end{align}
In other words, as a direct mathematical consequence of the definition \eqref{incom1}, the equation of state for an incompressible fluid would assume the form:
\begin{align}\label{basicstate}
 \upsilon = \hat{\upsilon}(T^0, \, p^0, \, x) = \sum_{i=1}^N \upsilon^{00}_i \, x_i \quad \text{ with constants } \upsilon^{00}_1, \ldots, \upsilon^{00}_N \, .
\end{align}
Note that, for a mixture allowing independent vanishing of the constituents, the positivity of volume moreover implies that the constants $\upsilon^{00}$ are all positive.

%
%
%

\vspace{0.2cm}

\section{Thermal expansion and non--solenoidal effects in incompressible mixtures}\label{thermexp}

In this chapter we straighten three items. We show: (i) There is a scaling regime where thermal expansion is possible in the incompressible limit. (ii) In single-component fluids and in dilute solutions we have $\divv {\bf v} = 0$ in that scaling regime. This is not true in concentrated solutions. (iii) A linear dependence of the molar volume on the composition vector, as it is indicated in \eqref{basicstate}, is nevertheless capable to describe the observed non-linear composition dependence of the excess volume during mixing of different fluids.

\vspace{0.2cm} {\bf The compressible fluid equations.} In a first step we rewrite the balance equations \eqref{thermo2} with respect to the variables $(T,p,x)$ which are best suited to perform the incompressible limit. To this end the partial mass balances are decomposed into $N-1$ diffusion equations and the balance equation for the total mass of the mixture. Recall the definitions \eqref{thermo1}, \eqref{MM} and \eqref{thermo8}
to obtain
\begin{align}\begin{split}\label{incom3}
\frac{M_i\, x_i}{\upsilon}\Big(\ln\big(\frac{M_i\, x_i}{M(x)}\big)\Big)^{\bullet}+\divv \boldsymbol J_i &=r_i\,,\\
\dot\upsilon -\upsilon \divv \boldsymbol v  &= \frac{\upsilon}{M(x)}\dot{M}(x) \, , \\
\frac{M(x)}{\upsilon}\dot{\boldsymbol v}+\nabla p-\divv \boldsymbol S &=\frac{M(x)}{\upsilon}\boldsymbol b \, ,\\
\frac{M(x)}{\upsilon}\dot u + \divv \boldsymbol q &=-p\divv\boldsymbol v+ \boldsymbol S\, : \, \nabla \boldsymbol v ~ .
\end{split}
\end{align}
Here we used the material time derivative which is defined by $\dot\psi=\partial_t\psi+\boldsymbol{v}\cdot\nabla\psi$. Next, we introduce the general constitutive laws $\upsilon=\hat\upsilon(T,p,x)$ and $u=\hat u(T,p,x)=\hat h(T,p,x)-p~\hat\upsilon(T,p,x)/M(x)$. The derivatives of $\hat h$ with respect to $T$ and $p$ are substituted by the equations \eqref{thermo14b} and \eqref{Athermo15b}$_1$ of Appendix, Section \ref{pieces}. Then we obtain
\begin{align}\begin{split}\label{incom4}
\frac{M_i\,x_i}{\upsilon}\Big(\ln\big(\frac{M_i\, x_i}{M}\big)\Big)^{\bullet} +\divv \boldsymbol J_i&=r_i \, ,\\
\partial_T\upsilon~\dot T +\partial_p\upsilon~\dot p-\upsilon \divv\boldsymbol v &= \frac{\upsilon}{M}\dot{M}-\sum_{j=1}^N\partial_{x_j}\upsilon~\dot x_j \, , \\
\frac{M}{\upsilon}\dot{\boldsymbol v}+\nabla p-\divv\boldsymbol S&=\frac{M}{\upsilon}\boldsymbol b \, ,\\
\frac{M}{\upsilon}c_p~\dot T-\frac{T}{\upsilon}\partial_T\upsilon~\dot p +\frac{M}{\upsilon}\sum_{j=1}^N\partial_{x_j}h~\dot x_j +\divv\boldsymbol q &= \boldsymbol S\, : \, \nabla \boldsymbol v ~ .
\end{split}
\end{align}
In order to obtain a PDE-system for the variables $(T,p,x, \boldsymbol v)$, we must insert here the various constitutive laws from above: 1. Diffusion fluxes, reaction rates, the irreversible part of stress and the heat flux according to \eqref{thermo5} and \eqref{thermo6}. 2. Molar volume, specific heat and the specific enthalpy according to the paragraph Available Data.

To study the incompressible limit of these equations, the crucial object is the constitutive function for the molar volume $\upsilon$.

\vspace{0.2cm} {\bf A simple constitutive equation.} We assume that the constitutive equation $\upsilon=\hat\upsilon(T,p,x)$ has the form
\begin{equation}\label{incom5}
\frac{\sum_{i=1}^N\upsilon_i^{\rm R} \, x_i}{\hat \upsilon(T,p,x)}  = 1-\beta T^{\rm R}\big(\frac{T}{T^{\rm R}}-1\big)+\frac{p^{\rm R}}{K} \, \big(\frac{p}{p^{\rm R}}-1\big)~,
\end{equation}
describing changes of the molar volume due to thermal expansion, elastic compression and changes of the composition. We have $N+2$ constants: The quantities $\upsilon_i^{\rm R}>0$ are the molar volumes in a reference state with $(T^{\rm R}, p^{\rm R})$, $\beta$ denotes the thermal expansion coefficient and $K>0$ is the compression modulus.
%

The simple constitutive law \eqref{incom5} allows a detailed and sufficient discussion of the incompressible limit.

\vspace{0.2cm} {\bf The incompressible limit equations.} In this paragraph we discuss the incompressible limit of both the inequality \eqref{incom2}$_1$ and the PDE-system \eqref{incom4}. We follow the discussion of \cite{bothedreyer} that is extended here to liquid mixtures.

We consider a mixture with water as the solvent, which is indicated by the lower index ${\rm S}$. Then we have $\upsilon_\textrm{S}^{\rm R}=1/55.4$ L/mol for $T^{\rm R}=293$ K, $p^{\rm R}= 10^5$ Pa (for $x_\textrm{S}^\textrm{R}= 1$). In the neighborhood of this state we have a thermal expansion coefficient of about $\beta=2.07\cdot 10^{-4}$ 1/K, a bulk modulus of $K=2.18\cdot 10^9$
Pa and a specific heat of $c_p=4.18\cdot 10^3$ J/kg/K. Moreover, we need the reference values of viscosity and heat conduction, which are chosen as $\eta^\textrm{R}=10^{-3} \mbox{Pa} \, \mbox{s}$ and $\kappa^\textrm{R}=0.6$ W/K/m.

As in \cite{bothedreyer}, we introduce a small parameter $\varepsilon\ll 1$ and two parameters $\beta_0$, $\alpha_0$ of order one such that $\beta \, T^{\rm R}=\beta_0\sqrt{\varepsilon}$ and $p^{\rm R}/K=\alpha_0\varepsilon$. The choice $\varepsilon=10^{-4}$ leads to $\beta_0=6.07$ and $\alpha_0=0.46$.

Then, up to terms of order $\varepsilon^2$, we obtain from \eqref{incom5}
\begin{equation}\label{incom6}
\hat \upsilon(T,p,x) =\sum_{i=1}^N\upsilon_i^{\rm R} \, x_i \, \Big( 1+\beta_0 \,  \big(\frac{T}{T^{\rm R}}-1\big) \, \sqrt{\varepsilon} + \big[\beta_0^2 \, (\frac{T}{T^\textrm{R}}-1)^2-\alpha_0 \, \big(\frac{p}{p^{\rm R}}-1\big)\big] \, \varepsilon\Big)~.
\end{equation}

The constitutive law \eqref{incom6} is now used to exploit the inequality \eqref{incom2}$_1$. In this chapter it suffices to simplify the subsequent discussion by assuming hat the specific heat function $\hat c_p(T,p,x)$ is given by a positive constant. Then \eqref{incom2}$_1$ may be calculated in the highest order as
\begin{equation}\label{incom8}
\beta_0^2 \, \varepsilon < \frac{c_p~ T^{\rm R} M_\textrm{S}}{p^{\rm R}\upsilon^{\rm R}_\textrm{S}}\alpha_0\frac{T^{\rm R}}{T}
\frac{1+\sum_{i=1}^{N-1}(\frac{M_i}{M_\textrm{S}}-1)x_i}{1+\sum_{i=1}^{N-1}(\frac{\upsilon_i^{\rm R}}{\upsilon_\textrm{S}^{\rm R}}-1)x_i}
\varepsilon~.
\end{equation}
%

%
Note that the smallness parameter $\epsilon$ from \eqref{incom6} drops out here.

%

Next we prepare the PDE-system \eqref{incom4}. At first we rewrite the equations \eqref{incom4} in a non--dimensional form. To this end we introduce dimensionless quantities indicated by a prime: (i) time and space as $t=t_0 t'$, $x=L_0 x'$, (ii) the variables except the dimensionless mole fractions as $T=T^{\rm R} \, T'$, $p=p^{\rm R} \, p'$, $\boldsymbol v=v_0 \,  \boldsymbol v'$ with $v_0:=L_0 / t_0$, (iii) reaction rates, molar volume and specific enthalpy as $r_i=M_S/(\upsilon^R_S \, t_0) \, r_i^{\prime}$, $\upsilon=\upsilon^{\rm R}_{\rm S}\upsilon'$ and $h=p^{\rm R} \, \upsilon^\textrm{R}_{\rm S}/(M_{\rm S} \, T^\textrm{R}) \, h'$, (iv) the fluxes $\boldsymbol J_i=L_0/t_0 \, M_\textrm{S}/\upsilon^{\rm R}_{\rm S} \boldsymbol J'_i$, $\boldsymbol q=\kappa^\textrm{R}T^{\rm R}/L_0\boldsymbol q'$, (v) the irreversible part of the stress
$\boldsymbol S =\eta^\textrm{R}/t_0 \boldsymbol S^{'}$.

The phenomena that are included in the current model are diffusion, chemical reactions, convective and non-convective flow of mass and heat conduction under the force of gravity. These phenomena are weighted by dimensionless characteristic numbers. To perform the incompressible limit, not all possible characteristic numbers are needed. The relevant numbers in the context of this paper are Mach number, Reynolds number, Froude number and Fourier number, respectively. They are defined by
\begin{equation}\label{incom9}
    \textrm{Ma}^2=\frac{L_0^2\, M_\textrm{S}}{p^{\rm R}\upsilon^{\rm R}_\textrm{S}\, t_0^2},\quad
    \textrm{Re}=\frac{M_\textrm{S} \, L_0^2 }{\eta^\textrm{R}\, \upsilon_\textrm{S}^{\rm R} \, t_0},\quad
    \textrm{Fr}^2=\frac{v_0^2}{b \, L_0},\quad
    \textrm{Fo}=\frac{\kappa^\textrm{R} \, \upsilon_\textrm{S}^{\rm R}\, t_0}{M_\textrm{S} \, c_p^\textrm{R} L_0^2}.
\end{equation}

%
Written in dimensionless quantities and with the characteristic numbers from \eqref{incom9}, the PDE-system \eqref{incom4} reads
\begin{align}\begin{split}\label{incom10}
\frac{M_i\, x_i}{\upsilon} \, \Big(\ln\big(\frac{M_ix_i}{M}\big)\Big)^{\bullet}+\divv \boldsymbol J_i &=r_i \, ,\\
\partial_T\upsilon~\dot T +\partial_p\upsilon~\dot p-\upsilon \divv \boldsymbol v  &= \frac{\upsilon}{M}\dot{M}-\sum_{j=1}^N\partial_{x_j}\upsilon~\dot x_j \,  ,\\
\frac{M}{\upsilon}\dot{\boldsymbol v}+\frac{1}{\textrm{Ma}^2}~\nabla p-\frac{1}{\textrm{Re}}~\divv\boldsymbol S &=\frac{M}{\upsilon}\frac{1}{\textrm{Fr}^2}~\boldsymbol e \, ,\\
\frac{M}{\upsilon}c_p~\dot T-\frac{T}{\upsilon}\partial_T\upsilon~\dot p +\frac{M}{\upsilon}\sum_{j=1}^N\partial_{x_j}h~\dot x_j +\textrm{Fo}~\divv \boldsymbol q &=\frac{\textrm{Ma}^2}{\textrm{Re}} \boldsymbol S\, : \, \nabla \boldsymbol v ~ .
\end{split}
\end{align}

The primes, indicating the non-dimensional quantities, are dropped here. The unit vector $\boldsymbol e$ points into the direction of the gravitational force whose magnitude is $b=9.81$ m/s$^2$.

In order to study the low Mach number limit Ma$~=\sqrt{\varepsilon}$ with finite values of Re and Fo and a fixed ratio $\textrm{Fr}^2/\textrm{Ma}$, we formally expand the variables according to
\begin{align}
\begin{split}\label{incom12}
    T=T_0+T_1 \textrm{Ma}+\ldots,\quad & p=p_0+p_1 \textrm{Ma}+p_2 \textrm{Ma}^2+ \ldots,\quad x=x_0+x_1 \textrm{Ma}+ \ldots,\\
    \quad & \boldsymbol v=\boldsymbol v_0+ \boldsymbol v_1 \textrm{Ma}+\ldots,\quad
    \end{split}
\end{align}
and correspondingly the further constitutive functions in the relevant equations \eqref{incom10}$_{2,3,4}$. For example, the right-hand side of \eqref{incom10}$_{2}$ yields in the highest order
\begin{equation}\label{incom11}
 \frac{\upsilon}{M}\dot{M}-\sum_{j=1}^N\partial_{x_j}\upsilon~\dot x_j =\upsilon_\textrm{S}^\textrm{R}
 \Big(\frac{1+\sum_{j=1}^{N-1}(\frac{\upsilon_j^\textrm{R}}{\upsilon_\textrm{S}^\textrm{R}}-1)x_j}
 {1+\sum_{j=1}^{N-1}(\frac{M_j}{M_\textrm{S}}-1)x_j}\sum_{j=1}^{N-1}
 (\frac{M_j}{M_\textrm{S}}-1)\dot x_j
 -\sum_{j=1}^{N-1}
 (\frac{\upsilon_j^\textrm{R}}{\upsilon_\textrm{S}^\textrm{R}}-1)\dot x_j\Big)~.
 \end{equation}
%
%
%

\vspace{0.2cm} {\bf Discussion of the incompressible limit equations.} We start the discussion with inequality \eqref{incom8}.  For a single-component fluid we have
\begin{equation}\label{incom8a}
\beta_0^2  < \frac{c_p~ T^{\rm R} M_\textrm{S}}{p^{\rm R}\upsilon^{\rm R}_\textrm{S}}\alpha_0\frac{T^{\rm R}}{T},\quad\textrm{implying for water}\quad
37.0  < 5618.0~\frac{T^{\rm R}}{T}~.
\end{equation}
We conclude that in the highest order the inequality \eqref{incom8} is satisfied even for large deviation of the temperature from the reference temperature.

Here a short remark on the role of the inequality \eqref{incom2}$_1$ is in order. In the context of thermodynamically consistent constitutive equations, the inequality \eqref{incom2}$_1$ restricts the class of admissible constitutive functions. In the current study we proposed by \eqref{incom5} an explicit constitutive function for the molar volume. Thus in this case, the inequality restricts the temperature domain, where the constitutive function \eqref{incom5} is thermodynamically consistent.

For a multicomponent fluid we obtain from \eqref{incom8} the inequality
\begin{equation}\label{incom8b}
37.0  < 5618.0~\frac{T^{\rm R}}{T}~
\frac{1+\sum_{i=1}^{N-1}(\frac{M_i}{M_\textrm{S}}-1)x_i}{1+\sum_{i=1}^{N-1}(\frac{\upsilon_i^{\rm R}}{\upsilon_\textrm{S}^{\rm R}}-1)x_i}~.
\end{equation}
We conclude that the highest order of the inequality \eqref{incom2}$_1$ is satisfied even for large deviation of the temperature from the reference temperature and additionally for large deviation of the composition from the pure solvent.

Thus the limit $\varepsilon\rightarrow 0$ allows thermal expansion. Merely at first glance, I.~M\"{u}llers proposition, \textit{incompressibility forbids thermal expansion} is well founded \cite{mueller}. The inequality \eqref{incom8b} yields the domain for $T$ and $x$, where the \textit{M\"uller Paradox} is removed.

Next, we study the incompressible limit of the PDE-system \eqref{incom10}. Particularly we ask whether the condition $\divv \boldsymbol v=0$ is compatible with thermal expansion and/or with variations of the composition. At first we exploit the total mass balance \eqref{incom10}$_{2}$ in the highest order. Inserting here the molar volume function \eqref{incom6} and the expansions \eqref{incom12}, we obtain
\begin{equation}\label{incom10a}
-\divv \boldsymbol v_0=\frac{\sum_{j=1}^{N-1}
 (\frac{M_j}{M_\textrm{S}}-1)\dot x_{0,j}}{1+\sum_{j=1}^{N-1}(\frac{M_\textrm{j}}{M_\textrm{S}}-1)x_{0,j}}
  -\frac{\sum_{j=1}^{N-1}
 (\frac{\upsilon_j^\textrm{R}}{\upsilon_\textrm{S}^\textrm{R}}-1)\dot x_{0,j}}{1+\sum_{j=1}^{N-1}(\frac{\upsilon_j^\textrm{R}}{\upsilon_\textrm{S}^\textrm{R}}-1)x_{0,j}}~.
\end{equation}
%

Thus in the incompressible limit for mixtures of fluids we have $\divv \boldsymbol v=0$ exclusively for dilute mixtures, where the mole fractions of the solute are small, i.e.  for $(x_i)_{i=1,\ldots,N-1}=(x_{1,i})_{i=1,\ldots,N-1}\sqrt{\varepsilon}$. Note that thermal expansion is allowed in any case.

Finally we represent the momentum and the internal energy equations for dilute mixtures in the incompressible limit,
\begin{equation}\label{incom13}
 \dot {\boldsymbol v}_0+\nabla p_2-
\frac{1}{\textrm{Re}}~\divv \boldsymbol S_0=-\beta_0 \, (T_0-1),\quad
\frac{M}{\upsilon}c_p\dot T_0+\textrm{Fo}~\divv{\boldsymbol q_0}=0.
\end{equation}
Observe that the higher--order pressure $p_2$ becomes the Lagrange
multiplier that guarantees the constraint $\divv \boldsymbol v=0$ for dilute mixtures and \eqref{incom10a} for concentrated mixtures, respectively. The PDE-system \eqref{incom13} together with $\divv \boldsymbol v=0$ constitute the \textit{Boussinesq-Approximation}.

Thus we have established a thermodynamically consistent limit
describing incompressible behavior of a dilute mixture with diffusion and thermal expansion.

\vspace{0.2cm}

\section{Non-linear volume changes during mixing of water and ethanol}\label{WaEt}

It was already noted that the mixing of 1 liter water and 1 liter ethanol under normal conditions, i.e.\ $T^{\rm R}=298$K, $p^{\rm R}=10^5$Pa, leads to a fluid mixture whose volume is significantly smaller than 2 liter. In the experiment the mole fractions of the two fluids are controlled. The measured data show a strong non-linear dependence of the molar excess volume with respect to the ethanol mole fraction
(see the graphics \cite{EXP}). In this section we show that our linear constitutive law \eqref{incom5} is capable to explain the non-linear phenomenon. To this end we study equilibria in a homogeneous mixture.

\vspace{0.2cm} {\bf A simple mixing model.} According with experimental observations (cf.\ \cite{QR9662000001}) we assume that the dissolved ethanol molecules form ethanol clusters consisting of $\kappa_\textrm{E}$ ethanol units via O-H bonding. Moreover, a cluster is hydrated by $\kappa_\textrm{S}$ water molecules. Thus the chemical reaction reads $\kappa_\textrm{A} \, W + \kappa_\textrm{S} \, E 	 \ce{<=>} \textrm{C}$. However, even the simplest choice of $\kappa_{\rm A} = \kappa_{\rm S} = 1$ already embodies the principle phenomenon. 
The three constituents W, E, C are characterized by the mole numbers $N_{\rm W}, \, N_{\rm E}, \, N_{\rm C}$, and $N=N_{\rm W}+N_{\rm E}+N_{\rm C}$ is the total mole number. $N_{\rm W}, \, N_{\rm E}, \, N_{\rm C}$ are the variables. Initially $N_{\rm W}^0=(1-x) \, N^0$, $N_{\rm E}^0=x~N^0$ and $N_{\rm C}^0=0$ are given. The quantity $x \in [0,1]$ denotes the initial mole fraction of ethanol.

We assume an incompressible simple mixture and use the linear constitutive law \eqref{incom5} at fixed temperature. Then the total volume may be written according to \eqref{incom5} as
\begin{align}\label{mix1}
V = \upsilon_{\rm W}~ N_{\rm W}  + \upsilon_{\rm E}~ N_{\rm E}+\upsilon_{\rm C}~ N_{\rm C} \quad \text{ with constants } \upsilon_{\rm W}, \, \upsilon_{\rm E}, \,\upsilon_{\rm C} > 0 \, .
\end{align}
In order to have in the end a negative excess volume, we must have
\begin{align}\label{mix2}
 \Delta\upsilon := \upsilon_{\rm W} + \upsilon_{\rm E} - \upsilon_{\rm C}  > 0 \, .
\end{align}

\vspace{0.2cm} {\bf Mass balance for a homogeneous mixture and mass action law.} The three variables are determined by the homogeneous mass balances and by the mass action law, which is for a single reaction the limiting case $R\rightarrow \infty$ of \eqref{thermo7},
\begin{align}\label{mix3}
  N_{\rm W}+N_{\rm C}=N_{\rm W}^0~,\qquad N_{\rm E}+N_{\rm C}=N_{\rm E}^0~,\qquad\mu_{\rm W}+\mu_{\rm E}=\mu_{\rm C}\, .
\end{align}
The solution of the mass balances \eqref{mix3}$_{1,2}$ can be represented by
\begin{align}\label{mix4}
& N_{\rm W} = (1-x-\gamma) \, N^0, \quad N_{\rm E} = (x-\gamma) \, N^0 \, .
\end{align}
The newly introduced variable $\gamma\equiv N_{\rm C}/N^0$ will be determined by the mass action law \eqref{mix3}$_{3}$ for an incompressible simple mixture, where the (mole based) chemical potentials are given by
\begin{align}\label{mix5}
\mu_i=g_i^{\rm R}+\upsilon_i(p-p^{\rm R})+RT \, \ln(\frac{N_i}{N}) \quad \text{ with constants }\quad g_i^{\rm R} \, .
\end{align}
Then $\gamma$ follows from the algebraic equation $(x-\gamma) \, (1-x-\gamma) = K \, \gamma \, (1-\gamma)$ where $K$ is defined as
\begin{align}\label{mix6}
 K(p) := \exp\Big( - \frac{\Delta g^{\rm R} + \Delta \upsilon \, (p-p^{\rm R})}{RT} \Big)\, , \quad \Delta g^{\rm R} :=  g_{\rm W}^{\rm R} + g_{\rm E}^{\rm R} - g_{\rm C}^{\rm R} \, .
\end{align}
Since $N_{\rm C} \leq \min\{N_{\rm W}^0, \, N_{\rm E}^0\}$ implies $\gamma \leq \min\{x, \, 1-x\} \leq 1/2$, we can solve this equation to the result
\begin{align}\label{mix7}
 \gamma = \gamma(x, \, p) = \frac{1}{2} \, \Big(1- \sqrt{1-4 \, \frac{x\, (1-x)}{1+K(p)}}\Big) \, .
\end{align}
Computing $\partial_p \gamma$ we easily show that $p \mapsto \gamma$ is increasing in $p$ because $\Delta\upsilon>0$. Moreover, $x \mapsto \gamma$ has a single maximum in $[0,1]$.

\hspace{0.2cm} 

{\bf The molar excess volume.} We define the molar excess volume as
\begin{align}\label{mix8}
  \upsilon^\textrm{E}:=V^\textrm{E}/N^0:=V/N^0-\sum_{i=1}^2\upsilon_i N_i^0/N^0\quad\textrm{implying}\quad
   \upsilon^\textrm{E} (x, \, p) =-\Delta\upsilon~\gamma(x, \, p)\, .
\end{align}
The example allows four important conclusions: (i) As a consequence of the linear constitutive law \eqref{incom5} we would have
$\upsilon^\textrm{E}=0$ if the mixing were not accompanied by a chemical reaction. (ii) If there is a chemical reaction according to our simple model from above, the molar excess volume shows qualitatively already the experimental observations. (iii) Moreover, the absolute value of the  molar excess volume increases with $p$ which is expected. (iv) An incompressible mixture of fluids may exhibit changes of its volume.

\vspace{0.2cm}

\section{Free energy and chemical potentials for an incompressible fluid}

The chemical potentials defined in \eqref{thermo8b} are the quantities directly driving diffusion and reaction mechanisms. At this level too, the incompressible limit has a heavy impact on the limit PDE models.
Indeed, the form of the chemical potentials for an incompressible system are significantly affected by the two consequences $\partial_T \hat{\upsilon}=0$ and $D^2_{x,x} \hat{\upsilon}=0$ exhibited previously (see \eqref{conseqs}).

\vspace{0.2cm} {\bf Properties of the free energy density in the incompressible case.}

Due to the definition \eqref{incom1}, recall that we have $\hat{\upsilon}(T,p,x) = \hat{\upsilon}(T,p^0,x)$.
Considering the main variables, we have
\begin{align*}
 \rho_i = \hat{\rho}_i(T,p,x) = \frac{M_i \, x_i}{\hat{\upsilon}(T, p,x)} = \frac{M_i \, x_i}{\hat{\upsilon}(T, p^0,x)} \, .
\end{align*}
Thus we see that the state $(T,\rho)$ in an incompressible system is subject to the constraint
\begin{align}\label{INCCONS}
n(\rho) \, \hat{\upsilon}(T, \, p^0, \, x(\rho)) = 1 \, .
\end{align}

A fundamental question is next: In which subset of the state space $\mathscr{D}_{T,\rho}$ can we reasonably assume the pressure--invariance of the volume? In general this will be the case only for a certain neighborhood $\mathscr{B}(T^0, \, \rho^0)$ of the reference state -- the one occurring in the formulae \eqref{FE}, \eqref{FEbis}.

In this paper, we will moreover postulate that \emph{incompressibility occurs only inside of regions of stability of the free energy}. Hence, $\partial_p \hat{\upsilon} \rightarrow 0$ in a certain region for $(T,p,x)$ presupposes that
\begin{align*}
 D^2_{\rho,\rho}f \geq 0, \qquad \partial^2_{T,T}f \leq 0
\end{align*}
in the corresponding region for $(T,\rho)$. In other words, where ever the convexity/concavity conditions on the free energy are violated, it is not meaningful to perform the limit $\partial_p\hat{\upsilon} \rightarrow 0$.

Due to this assumption, either the data in the material laws are so particular as to guarantee global stability, or the \emph{range of temperatures, pressures and compositions} for which we can perform the incompressible limit must be subject to restrictions. We will discuss both cases in our analysis.

\vspace{0.2cm}
{\bf Globally stable incompressible phase.}

The first case is assuming that, inside of a certain range of temperatures $]T^0-\delta, \, T^0+\delta[$, incompressibility is valid for arbitrary compositions and pressures. As discussed in section \ref{defandprob}, this type of limit can occur only if the constitutive function $\hat{\upsilon}$ describing the molar volume is linear in $x$ (see \eqref{conseqs}).

Since the inequality \eqref{incom2} requires that the volume of an incompressible system is independent on temperature, the asymptotic volume $\hat{\upsilon}$ depends only on composition, and this linearly. Hence, this type of incompressible system is restricted by the constraint (cp.\ \eqref{basicstate})
\begin{align}\label{VOLVOL}
\sum_{i=1}^N \frac{\upsilon^{00}_i}{M_i}  \, \rho_i = 1 \qquad \text{ with } \upsilon^{00}_1, \ldots, \upsilon^{00}_N  \, \text{ (positive) constants.}
\end{align}
Let us first describe in more details the mathematical procedure.
We consider a sequence of free energies $\{f^m\}$ obeying the representation \eqref{FE} with data $\hat{\upsilon}^m$, $\hat{c}_p^m(p^0, \, \cdot)$, $\hat{s}^m(T^0, \, p^0, \cdot)$, $\hat{h}^m(T^0, \, p^0, \cdot)$ indexed by the large parameter. We assume that these data converge --say uniformly on compact subsets-- while the isothermal compressibility $\hat{\beta}_T^m(T, \, p, \, x) = -\partial_p\hat{\upsilon}_m(T,p,x)/\hat{\upsilon}_m(T,p,x)$ tends to zero for all $(T, \, p, \, x)$. Hence, the asymptotic volume $\hat{\upsilon}^{\infty}$ is independent on $p$.

Suppose now that $\rho \mapsto f^m(T, \, \rho)$ is convex, and $T\mapsto f^m(T, \, \rho)$ is concave. We show below that the limit of $f^m$ in the sense of epi-convergence and the representation of the free energy in the main variables is given by the singular convex function
\begin{align}\label{hinfty}
 \varrho\psi = f^{\infty}(T, \, \rho_1, \ldots, \rho_N) := \begin{cases}
                                                      \varrho\psi^{\infty}(T,\, \rho_1, \ldots, \rho_N) & \text{ for }  \sum_{i=1}^N \frac{\upsilon^{00}_i}{M_i}  \, \rho_i = 1 \, ,\\
                                                      + \infty & \text{ otherwise.}
                                                     \end{cases}
\end{align}
Here $\psi^{\infty}(T, \, \rho) = \lim_{m\rightarrow \infty} \hat{\psi}^m(T, \, p^0,\, \rho)$ is the limit of reference free energies corresponding to the isobaric system at $p^0$. Hence, if the data $\hat{c}_p^m(p^0, \cdot)$, $\hat{s}^m(T^0, \, p^0, \cdot)$, $\hat{h}^m(T^0, \, p^0, \cdot)$ at reference pressure/temperature are available, and they all converge in the classical sense to corresponding limits, we will have
\begin{align}\label{psiinfty}
 \psi^{\infty}(T, \, \rho) := & -\frac{p^0}{M(x(\rho))} \, \hat{\upsilon}^{\infty}(T^0,p^0,x(\rho)) \, \\
 & - \left(\int_{T^0}^T \int_{T^0}^{\theta} \frac{\hat{c}_p(\theta^{\prime}, \, p^0, \, x(\rho))}{\theta^{\prime}} \, d\theta^{\prime}d\theta + T \, \hat{s}(T^0, \, p^0, \, x(\rho)) - \hat{h}(T^0, \, p^0, \, x(\rho)) \right) \, . \nonumber
\end{align}
For an incompressible system, the relations \eqref{thermo8b}, $\eqref{thermo8c}_1$ find natural generalizations in $$\mu \in \partial_{\rho} f^{\infty}(T, \, \rho) \, ,$$ with the subdifferential operator $\partial $ of convex analysis or, equivalently\footnote{The function $\psi^{\infty}$ of \eqref{psiinfty} is defined for all $\rho_1, \ldots, \rho_N >0$ and differentiable in $\rho$.} ,
\begin{align}\label{MU1}
\mu_i = p \, \frac{\upsilon^{00}_i}{M_i} + \partial_{\rho_i} (\varrho\psi^{\infty})(T, \, \rho) \, .
\end{align}
As shown in \eqref{D2fpartial}, if the pressure is allowed to exhibit large deviations above and below the reference value $p^0$, the Hessians $D^2_{\rho,\rho} f^m$ cannot remain positive semi--definite on all states unless $D^2_{x,x}\hat{\upsilon}^{\infty} = 0$.  Hence, the linearity of $\hat{\upsilon}^{\infty} $ in the composition follows, in the last instance, from the assumption that the incompressible phase remains globally stable.

This problem can only be avoided by a treatment of the incompressible limit restricted to local stable subregions of the state space.
For instance, it is sensible that a nontrivial matrix $$\{D^2_{x,x}\hat{\upsilon}^{\infty}[e^i \, -x] \cdot [e^j-x]\}$$ in \eqref{D2fpartial} with moderate eigenvalues is compatible with moderate oscillations of $p$ around the reference value $p^0$ and with the requirement of convexity of $\rho \mapsto f^m(T, \, \rho)$ for the free energy.
For this reason, it is desirable to also investigate, as a more realistic concept, the more complex case of incompressibility being assumed only locally in the state space, for a certain neighborhood of the reference state.

\vspace{0.2cm}
{\bf Locally stable incompressible phase.}

We do not assume that the incompressible phase is global, but it is located around a reference state $(T^0, \, \rho^0)$. The smoothness and the convexity/concavity assumptions on the free energy are valid beyond this neighborhood.

Let us explain what this means. With $\widehat{D^2_{\rho,\rho}f}(T, \, p, \, x)$ denoting the representation in the variables $(T, \, p, \, x)$ for the Hessian of $\rho \mapsto f(T, \, \rho)$, stability means that
\begin{align}\label{bolle}
&  \widehat{D^2_{\rho,\rho}f}(T, \, p, \, x) \text{ positive definite,} \qquad \widehat{\partial^2_{T,T}f}(T, \, p, \, x) < 0 \, .
\end{align}
We assume that there is a fixed region $\widehat{\mathscr{B}}$ of the reference state $(T^0, \, p^0, \, x^0)$ such that the two conditions \eqref{bolle} are valid and that there are two thresholds $\pi_{\inf} < p^0 < \pi_{\sup}$ such that
$$\widehat{\mathscr{B}} \supset \{(T, \, p, \, x) \, : \, T \in ]T^0-\delta,T^0+\delta[, \, p \in ]\pi_{\inf}(T,x), \, \pi_{\sup}(T,x)[ \} \, .$$
Consider a sequence of free energies $\{f^m\}$ obeying the representation \eqref{FE}.
We now study the asymptotic limit $\partial_p\hat{\upsilon}^m(\cdot,p,\cdot) \rightarrow 0$ only for $p \in ]\pi_{\inf},\, \pi_{\sup}[$, while assuming that $\partial_p \hat{\upsilon}^{\infty} < 0$ for $p < \pi_{\inf}$ and $p > \pi_{\sup}$.

As noted before, we must also have $\partial_T \hat{\upsilon}^m \rightarrow 0$ for all $p \in ]\pi_{\inf}, \, \pi_{\sup}[$ and $T \in ]T^0-\delta, \, T^0+\delta[$. Hence the incompressible phase is characterized by the constraint
\begin{align}\label{ic}
n(\rho) \, \hat{\upsilon}^{\infty}(T^0, \, p^0, \,x(\rho))  = 1 \, .
\end{align}
For states not satisfying this constraint, the free energy will depend on pressure again. Consider $\rho$ such that
$n(\rho) \, \hat{\upsilon}^{\infty}(T^0, \, p^0, \,x(\rho)) > 1$. For all $T \in ]T^0-\delta, \, T^0+\delta[$, we must have $$n(\rho) \, \hat{\upsilon}^{\infty}(T, \, p^0, \,x(\rho)) = n(\rho) \, \hat{\upsilon}^{\infty}(T^0, \, p^0, \,x(\rho)) > 1\, .$$ Since for $p > \pi_{\sup}$ the asymptotic volume is decreasing, we might find a solution $\pi = p(T, \, \rho) > \pi_{\sup}$ to
$$n(\rho) \, \hat{\upsilon}^{\infty}(T, \, \pi, \,x(\rho)) = 1  \, .$$
A similar consideration applies to states such that $n(\rho) \, \hat{\upsilon}^{\infty}(T^0, \, p^0, \,x(\rho)) < 1$.
For this type of system, the limit free energy in the sense of epi--convergence (and even of uniform convergence) can thus be identified as
\begin{align*}
& f^{\infty}(T, \, \rho) = \\
& \begin{cases}
n(\rho) \, \int_{p^0}^{p(T, \, \rho)} \hat{\upsilon}(T, \, p^{\prime},  \, x(\rho)) \, dp^{\prime} - p(T, \, \rho) + \varrho \psi^{\infty}(T, \, \rho) & \text{ for } n(\rho) \, \hat{\upsilon}(T^0, \, p^0, \,x(\rho)) > 1\\
\varrho\psi^{\infty}(T, \, \rho) & \text{ for } n(\rho) \, \hat{\upsilon}(T^0, \, p^0, \,x(\rho)) = 1\\
n(\rho)\,    \int_{p^0}^{p(T, \, \rho)} \hat{\upsilon}(T, \, p^{\prime},  \, x(\rho)) \, dp^{\prime} - p(T, \, \rho) + \varrho \psi^{\infty}(T, \, \rho) & \text{ for } n(\rho) \, \hat{\upsilon}(T^0, \, p^0, \,x(\rho))  < 1
                         \end{cases}
\end{align*}
As we already stressed, this description is valid in some neighborhood of a stable incompressible phase characterized by \eqref{ic}, but not globally in the state space. It is to note that $f^{\infty}$ is continuous, and even continuously differentiable except in points $\rho$ subject to \eqref{ic}. There, it is still sub--differentiable, and $\mu \in \partial_{\rho}f^{\infty}(T,\rho)$ if and only if there exists $p \in [\pi_{\inf}, \, \pi_{\sup}]$ such that, for $i = 1,\ldots,N$,
\begin{align}\label{MU2}
\mu_i = p \, \frac{1}{M_i} \, \Big(\hat{\upsilon}^{\infty}(T^0,p^0,x(\rho)) + D_x\hat{\upsilon}^{\infty}(T^0,p^0,x(\rho))\cdot[e^i-x(\rho)]\Big)  + \partial_{\rho_i} (\varrho\psi^{\infty})(T,\rho) \, .
\end{align}
Comparing \eqref{MU2} with \eqref{MU1}, the question whether $D^2_{x,x} \hat{\upsilon}^{\infty} = 0$ is satisfied or not has an essential impact on the structure of diffusion and chemical reactions in the limit PDEs describing an incompressible fluid.

\subsubsection*{Main findings and structure of the paper}

Our analysis yields several important conclusions:
\begin{itemize}
\item Incompressible mixtures defined by $\partial_p \hat{\upsilon} = 0$ allow for non--solenoidal effects and thermal expansion in certain regimes;
 \item Performing the low Mach--number limit while assuming globally the convexity of the free energy leads, for multicomponent systems, to the conclusion that incompressibility implies $D^2_{x,x} \hat{\upsilon} = 0$. In this case, the asymptotic free energy is a singular function of the main variables, and approaching free energy functions epi-- or Gamma--converge;
 \item A more careful treatment is based on the assumption that the incompressible phase is (strictly) included in a region of stability of the phase diagram, characterized by finite pressure thresholds. The thermal equation of state is singular only inside of these thresholds. Then the asymptotic free energy is a continuous function, and approaching free energy functions converge uniformly.
 \end{itemize}

The remainder of the paper is devoted to rigorously proving the validity of these limits.

We shall, however, start with some illustrating examples in the next section.

In the section \ref{notationen} we then introduce somewhat more convenient notations for the proofs, the latter occupying sections \ref{freesol}--\ref{incompressible}.

\section{Two examples}\label{twoex}

\subsection{The volume additive case}

The thermal equation of state proposed in the papers \cite{dreyerguhlkemueller}, \cite{dreyerguhlkelandstorfer}, \cite{dreyerguhlkemueller19} is $$p = p^0 + K \, (\sum_{i=1}^N \upsilon^{00}_i \, n_i - 1)\, ,$$ describing ideal electrolyte mixtures with finite volume effects of the ions. Here $K$ is a positive constant (the compression module), $p^0 \in \mathbb{R}$ is the reference pressure, and $\upsilon^{00} \in \mathbb{R}^N$ is a fixed vector. Since the model was used in an isothermal context, the temperature--dependence of the data is not precised further.

This model, which is slightly different from \eqref{incom5}, is equivalently rephrased as \eqref{Volumeconstraint} with
\begin{align*}
\hat{\upsilon}(T, \, p, \, x) = \frac{K}{p-p^0+K} \, \sum_{i=1}^N \upsilon_i^{00} \,  x_i \, .
\end{align*}
The constitutive function $\hat{\upsilon}$ is defined for all $T > 0$, $p > p_{\inf} := p^0 - K$ and $x_1, \ldots,x_N > 0$. Moreover, it is linear in the composition variable.

Note that for $K \rightarrow + \infty$ (incompressible limit), we have the properties
\begin{align*}
p_{\inf}(K)  \rightarrow & - \infty\, , \\
\hat{\upsilon}(T, \, p, \, x) \rightarrow & \sum_{i=1}^N\upsilon^{00}_i \, x_i\, ,  \quad \partial_p \hat{\upsilon}(T, \, p, \, x) \rightarrow 0 \text{ pointwise.}
\end{align*}
We can put this example into the framework of ideal mixtures.
If we propose the globally convex free energy
\begin{align*}
 f(T, \, \rho) = & K \, \sum_{i=1}^N \frac{\upsilon^{00}_i}{M_i} \, \rho_i \, \ln \Big(\sum_{i=1}^N \frac{\upsilon^{00}_i}{M_i} \, \rho_i\Big) + (p^0-K) \, \Big(\sum_{i=1}^N \frac{\upsilon^{00}_i}{M_i} \, \rho_i-1\Big) \\
 & + R \, T \, n(\rho) \, x(\rho) \cdot \ln x(\rho) \, ,
\end{align*}
we reach \eqref{muiideal} with the choice of $g_i(T, \, p) := (p^0-p_{\inf}) \,  \frac{\upsilon^{00}_i}{M_i} \, \ln (p-p_{\inf})$.

In the limit of $K \rightarrow \infty$, the functions $\rho \mapsto f^K(T, \, \rho)$ epi--converge at fixed $T$ to the limit
\begin{align*}
 f^{\infty}(T, \, \rho) = \begin{cases}
                            R \, T \, n(\rho) \, x(\rho) \cdot \ln x(\rho)  & \text{ if } \sum_{i=1}^N \frac{\upsilon^{00}_i}{M_i} \, \rho_i = 1 \, ,\\
                           + \infty & \text{ otherwise.}
                          \end{cases}
\end{align*}

\subsection{The non--ideal case}

We revisit the example proposed in \cite{bothedreyer}, Sec.\ 16. In this case, the thermal equation of state is
\begin{align*}
 p = p^0 + K \, \Big(\frac{n}{n^0} - a(T, \, x)\Big) \, ,
\end{align*}
in which $a$ is a given function, $K$ the constant compression module and $p^0$, $n^0$ reference values. This is equivalent with choosing $\hat{\upsilon}$ nonlinear in $x$ via
\begin{align}\label{section16}
 \hat{\upsilon}(T, \, p, \, x) = \frac{1}{n^0 \, (a(T, \, x) + \frac{p-p^0}{K}  )} \, .
\end{align}
The representation formula \eqref{FE} yields the free energy $f(T, \, \rho) =  f_{\rm mech}(T, \, \rho) + f_{\rm therm}(T, \, \rho)$, in which
\begin{align*}
 f_{\rm mech}(T, \, \rho) := & n \, \int_{p^0}^{p} \hat{\upsilon}(T, \, p^{\prime}, \, x) \, dp^{\prime} - p\\
 = & K \, \frac{n}{n^0} \, \ln \frac{n}{n^0} - K \, \frac{n}{n^0} \, (1 + \ln a(T, \, x)) - p^0 + K \, a(T, \, x) \,
 \end{align*}
 with $n = n(\rho)$, $p=p(T, \, \rho)$ and $x = x(\rho)$. The terms in $f_{\rm therm}$ are given as
 \begin{align*}
 f_{\rm therm}(T, \, \rho) = - \varrho \, \left(\int_{T^0}^T \int_{T^0}^{\theta} \frac{\hat{c}_p^0(\theta^{\prime}, \, x)}{\theta^{\prime}} \, d\theta^{\prime}d\theta + T \, \hat{s}^{00}(x) - \hat{h}^{00}(x) \right) \,
 \end{align*}
with data $\hat{c}_p^0 = \hat{c}_p(p^0, \, \cdot)$, $\hat{s}^{00} = \hat{s}(T^0, \, p^0, \cdot)$ and $\hat{h}^{00}= \hat{h}(T^0, \, p^0, \cdot)$ that we do not specify further.

By elementary means we compute the Hessian $\{D^2_{\rho_i,\rho_j} f_{\rm mech}\}$ (see \eqref{D2ffull}). We next want to discuss the convexity/concavity of this free energy.
For $\eta \in \mathbb{R}^n$ arbitrary, we define
\begin{align*}
 \xi_i = \eta_i - \frac{x_i}{\hat{\upsilon}} \,  \eta \cdot \{1^N \, (\hat{\upsilon} - D_x\hat{\upsilon} \cdot x) + D_x\hat{\upsilon}\} \, \text{ for } i = 1,\ldots,N \, .
\end{align*}
One can verify that
\begin{align*}
\sum_{i,j} M_i \, M_j \, \xi_i \, \xi_j \, \frac{\partial^2 f }{\partial \rho_i \partial \rho_j} = \hat{\upsilon} \, \sum_{i,j} \eta_i \, \eta_j \, \int_{p^0}^{p} (D^2_{x,x}\hat{\upsilon} [e^i \, -x] \cdot & [e^j-x]) \, dp^{\prime} \\
+ \sum_{i,j} M_i \, M_j \, \eta_i \, \eta_j \, \frac{\partial^2 f_{\rm therm}}{\partial \rho_i \, \partial \rho_j} \, .
\end{align*}
Assume that $f$ is convex in $\rho$, which is necessary for a stable phase. For $K \rightarrow + \infty$, the latter implies that
\begin{align*}
0 \leq & \sum_{i,j} M_i \, M_j \, \eta_i \, \eta_j \, \frac{\partial^2 f_{\rm therm}}{\partial \rho_i \, \partial \rho_j} +  \frac{p-p^0}{a^3(T, \, x) \, (n^0)^2}  \times\\
& \times \sum_{i,j}  \eta_i \, \eta_j \Big[\frac{2}{a} \, a_x(T, \, x) [e^i-x] \, a_x(T,x)[e^j - x] - a_{x,x}(T,x) [e^i-x]  [e^j-x]\Big]\, .
\end{align*}
Since $\eta \in \mathbb{R}^N$ was arbitrary, we obtain from the requirement of stability that the matrix
\begin{align*}
 & M_i \, M_j \, \frac{\partial^2 f_{\rm therm}}{\partial \rho_i \, \partial \rho_j} \\
 & + \frac{p-p^0}{a^3(T, \, x) \, (n^0)^2} \, \Big[\frac{2}{a} \, a_x(T, \, x) [e^i-x] \, a_x(T,x)[e^j - x] - a_{x,x}(T,x) [e^i-x]  [e^j-x]\Big] \,
\end{align*}
must be positive definite. Clearly, if $a$ is an arbitrary function, this condition cannot be valid for all temperatures, compositions and pressures.

For the stability of the incompressible phase, it must further hold that
\begin{align*}
 0 \geq \partial_{T,T}^2 f = n(\rho) \, \int_{p^0}^{p(T, \, \rho)} \partial_{T,T}^2\hat{\upsilon}(T,p^{\prime},x(\rho)) \, dp^{\prime} - n(\rho) \, \frac{(\partial_T\hat{\upsilon})^2}{\partial_p \hat{\upsilon}} - \varrho \, \hat{c}^0_p(T, \, x(\rho)) \, ,
\end{align*}
which is equivalent to
\begin{align*}
 \frac{n}{n^0} \, (\partial_Ta(T,x))^2 \leq & a^2(T,x) \, \Big(\frac{\varrho \, \hat{c}_p^0(T,x)}{K} - \partial^2_{T,T}a(T,x) \, (1- \frac{n}{n^0\, a})\Big) \\
 = & \frac{a^2(T,x)}{K} \, \Big(\varrho \, \hat{c}_p^0(T,x)- \frac{\partial^2_{T,T}a(T,x)}{a(T,x)} \, (p-p^0)\Big) \, .
\end{align*}
Hence, letting $K \rightarrow +\infty$ at fixed $(T,p,x)$, we see that $\partial_Ta = 0$ is a necessary requirement.

Overall, for a neighborhood of stability around the reference point $(T^0, \, p^0, \, x^0)$ we get the conditions
\begin{align*}
& a(T, \, x) = a(T^0, \, x) =: a^0(x)\, ,\\
& M_i \, M_j \, \widehat{\frac{\partial^2 f_{\rm therm}}{\partial \rho_i \, \partial \rho_j}}(T,p,x) + \frac{p-p^0}{a^0(x)^3 \, (n^0)^2}  \times \\
 & \qquad \times  \Big[\frac{2}{a^0(x)} \, a_{x}^0(x) [e^i-x] \, a_{x}^0(x)[e^j - x] - a_{x,x}^0(x) [e^i-x]  [e^j-x]\Big] \quad - \text{ positive definite.}
\end{align*}
We now assume that the Hessians of the thermal part are positive semi--definite at $(T^0, \, p^0, \, x_0)$.

Then there is a neighborhood $B_{T^0,\, x^0} $ of $(T^0, \, x^0)$, such that for every $(T, \, x) \in B_{T^0, \, x^0}$, we can find thresholds $\pi_{\inf}(T,x)$ and $\pi_{\sup}(T,x)$ for which all states $(T, \, p, \, x)$ subject to
\begin{align*}
(T, \, x) \in B_{T^0,\, x^0} \quad \text{ and } \quad \pi_{\inf}(T,\, x) \leq p \leq \pi_{\sup}(T,\, x) \,
\end{align*}
belong to a stable incompressible phase. Inside of this domain, the limit free energy is of the form
\begin{align}\label{result}
f^{\infty}(T, \, \rho) = f_{\rm therm}(T, \, \rho) - p^0 \, .
\end{align}
In order to obtain an epi--convergence result for this case, we can start from a slight modification of the constitutive equation \eqref{section16}. Instead of a constant $K$, we choose $1/K = \hat{\beta}_m(T,p,x)$ such
\begin{align*}
 \lim_{m\rightarrow \infty} \hat{\beta}_m(T,p,x) \,\, = \begin{cases} 0 & \text{ if } \pi_{\inf}(T,x) \leq p \leq \pi_{\sup}(T,x) \, ,\\
       \hat{\beta}^{\infty}(T,p,x)                                     > 0 & \text{ otherwise.}
                                           \end{cases}
\end{align*}
Assuming that the free energies $f_m$ can be constructed globally convex, we find a non-singular epi--limit $f^{\infty}$ matching the desired function \eqref{result} in the stable incompressible region.

\section{Additive decomposition of the free energy}\label{notationen}

In order to study the incompressible limit and the convexity properties, we introduce a decomposition of the solution formula \eqref{FE}, that we judge useful in order to extract the contributions that are essentially affected by the mechanical properties of the system.

\subsubsection*{The volume and the Gibbs potential} Making use of the identities $n_i = \rho_i/M_i$ and $x_i(\rho) = \rho_i/(M_i \, \sum_{j=1}^N (\rho_j/M_j))$, $n(\rho) = \sum_{j=1}^N n_j$, we define a further function
\begin{align}\label{VolumeTot}
 V(T, \, \pi, \, \rho) := n(\rho) \, \hat{\upsilon}(T, \, \pi, \, x(\rho))\,
\end{align}
where $\pi$ is a free variable. The function $V$ is dimensionless. It is readily checked that the map $\rho \mapsto V(T, \, \pi, \, \rho)$ is positively homogeneous for all $(T, \, \pi)$. Moreover, the definition \eqref{Volumeconstraint} is simply equivalent to
\begin{align}\label{arbeitvolume}
V(T, \, p, \, \rho) = 1 \, .
\end{align}
We may compute that
\begin{align*}
 \partial_{\rho_i} V(T, \, \pi, \, \rho) =& \frac{1}{M_i} \,  \hat{\upsilon}(T, \, \pi, \, x(\rho)) + n(\rho) \, \sum_{k=1}^N\partial_{\rho_i} x_k(\rho) \, \partial_{x_k} \hat{\upsilon}(T, \, \pi, \, x(\rho)) \, .
 \end{align*}
Recall that the composition vector $x$ is subject to the constraint $\sum_{i=1}^N x_i = 1$ and that the function $\hat{\upsilon}$ is measured only for such physical states. Hence, the functions $\hat{\upsilon}$ possess only tangential derivatives in $x$.
For a function $\phi$ depending on $x$, natural tangential derivatives are given by
\begin{align}\label{difftau}
\partial^{\tau}_{x_i} \phi(x)  := \phi_{x_i}(x) - \sum_{j=1}^N x_j \, \phi_{x_j}(x) = D_x \phi(x) \cdot [e^i-x] \, .
\end{align}
This makes sense with arbitrarily chosen differentiable extension outside of the surface $\sum_{i=1}^N x_i = 1$, and is independent of the extension. Then
 \begin{align}\label{Vinter}
  \partial_{\rho_i} V(T, \, \pi, \, \rho)  = & \frac{1}{M_i} \, (\hat{\upsilon}(T, \, \pi, \, x(\rho)) + \partial^{\tau}_{x_i} \hat{\upsilon}(T, \, \pi, \, x(\rho)) \, .
\end{align}
As an illustration, for an ideal mixture characterized by the potentials $\mu_i = g_i(T, \, p) + R \, T/M_i \, \ln x_i$ (cf.\ \eqref{muiideal}), we have
\begin{align}\label{Videal}
 V(T, \, \pi, \, \rho) := \sum_{i=1}^N \partial_{\pi}g_i(T, \, \pi) \, \rho_i, \quad   \partial_{\rho_i} V(T, \, \pi, \, \rho) =\partial_{\pi} g_i(T,\, \pi) \, .
\end{align}

\subsubsection*{Interpretation of the function $V$}

With $\psi$ being the specific free energy and $\hat{\psi}$ its representation in the variables $(T,p,x)$, the Gibbs density $g(T,p,x)$ is introduced via \eqref{FEbis}. The connection with the free energy results from  \eqref{hatrho} and \eqref{FEbis}, and is directly given as
\begin{align}\label{superdirect}
f(T, \, \rho_1,\ldots,\rho_N) = \varrho \psi(T, \, \rho_1,\ldots,\rho_N) = \varrho \, g(T, \, p(T, \, \rho), \, x(\rho)) - p(T, \, \rho) \, .
\end{align}
Together with $\eqref{thermo8c}_1$, this also shows that
\begin{align}\label{superdirect2}
g(T, \, p(T, \, \rho), \, x(\rho)) = \frac{1}{\varrho} \, \sum_{i=1}^N \rho_i \, \mu_i\, ,
\end{align}
or equivalently, with $\hat{\mu}_i(T, \, p, \, x) = \partial_{\rho_i}f\big(T, \, \hat{\rho}(T, \, p, \, x)\big)$,
\begin{align}\label{superdirect3}
g(T, \, p, \, x) = \frac{1}{M(x)} \, \sum_{i=1}^N x_i \, M_i \, \hat{\mu}_i(T, \, p, \, x)\, .
\end{align}
We introduce $M_i \, \hat{\mu}_i =: \hat{\mu}_i^{\text{mol}}$, the molar--based chemical potential. With the help of \eqref{FEbis}, we find that
\begin{equation}\label{thermo13b}
\partial_p(M\, g) = \hat{\upsilon}, \qquad \partial_{x_i}^{\tau} (M\, g) = \hat{\mu}_i^{{\rm mol}}- x \cdot \hat{\mu}^{{\rm mol}} \, .
\end{equation}
Using \eqref{superdirect2}, we see that $x \cdot \hat{\mu}^{{\rm mol}} = M \, g$ and might also rephrase \eqref{thermo13b}$_2$ as
$$ \partial_{x_i}^{\tau} (M\, g) + M \, g = \hat{\mu}_i^{{\rm mol}} \, .$$
Hence, building the $p$ derivative yields $\partial_p \hat \mu_i^{\text{mol}} =\partial_{x_i}^{\tau} \hat{\upsilon} + \hat{\upsilon}$.
We thus see with the help of \eqref{Vinter} that
\begin{align}\label{shoulde}
  \partial_{\rho_i} V(T, \, \pi, \, \rho) = \frac{1}{M_i} \, \partial_p\hat{\mu}_i^{\text{mol}}(T, \, \pi, \, x(\rho)) = \partial_p\hat{\mu}_i(T, \, \pi, \, x(\rho))\, .
\end{align}
We might call $V$ the volume potential, because for $\pi = p$, \eqref{shoulde} shows that the derivatives $\partial_{\rho_i}V(T, \, p, \, \rho)$ are the partial volumes per unit mass of the species $\ce{A}_i$.

In all subsequent considerations, the primitive of $V$ in the pressure variable, denoted by $\bar{V}$, plays an important role. Normalized at the reference pressure $p^0$, this function has the expression
\begin{align*}
\bar{V}(T, \, p, \, \rho) := \int_{p^0}^p V(T, \, p^{\prime}, \, \rho) \, dp^{\prime} \, .
\end{align*}
Clearly, $\eqref{thermo13b}_1$ implies that
\begin{align}\label{integragips1}
g(T, \, p, \, x) = \frac{1}{M(x)} \, \int_{p^0}^p \hat{\upsilon}(T, \, p^{\prime}, \, x) \, dp^{\prime} + g(T, \, p^0, \, x) \, .
\end{align}
Now, \eqref{integragips1} shows that $\bar{V}(T, \, p, \, \rho) = \varrho \, (g(T, \, p,  \, x(\rho)) - g(T, \, p^0,  \, x(\rho)))$. Invoking \eqref{shoulde}, we in particular have
\begin{align*}
\partial_p\bar{V}(T, \, p, \, \rho) = V(T, \, p, \, \rho), \quad   \partial_{\rho_i}\bar{V}(T, \, p, \, \rho) = (\hat{\mu}_i(T, \, p, \, x(\rho)) - \hat{\mu}_i(T, \, p^0, \, x(\rho))) \, .
\end{align*}

\subsubsection*{The mechanically neutral part of the free energy.}

Using the function $g$, we introduce
\begin{align}\label{kgrund}
k(T, \, \rho) := & \varrho \, \hat{\psi}(T, \, p^0, \, x(\rho)) =  \varrho \, g(T, \, p^0, \, x(\rho)) - p^0 \, n(\rho) \, \hat{\upsilon}(T, \, p^0, \, x(\rho)) \, .
\end{align}
If all data needed in the formula \eqref{FE} are directly available, then\begin{align}\label{KP0}
 k(T, \, \rho) = & - \varrho \, \left(\int_{T^0}^T \int_{T^0}^{\theta} \frac{\hat{c}_p(\theta^{\prime}, \, p^0, \, x(\rho))}{\theta^{\prime}} \, d\theta^{\prime}d\theta + T \, \hat{s}(T^0, \, p^0, \, x(\rho)) - \hat{h}(T^0, \, p^0, \, x(\rho)) \right)\nonumber\\
 & - p^0 \, n(\rho) \, \hat{\upsilon}(T, \, p^0, \, x(\rho)) \, .
\end{align}
 The function $\rho \mapsto k(T, \, \rho)$ is positively homogeneous, which implies that $- k + \rho \cdot \nabla_{\rho} k = 0$. Hence $k$ contributes to the free energy without contributing to the pressure (see \eqref{thermo8c}$_1$). For this reason we might call this function {\it mechanically neutral}.
As an illustration, in the typical example of an ideal mixture, we obtain the expression
\begin{align}\label{Boltzmannbb}
 k(T, \ \rho) = \sum_{i=1}^N \mu^0_i(T) \, \rho_i + R \, T \, n(\rho) \, \sum_{i=1}^N x_i(\rho) \, \ln x_i(\rho), \qquad \mu_i^0(T) :=  - p^{0} \,  \partial_pg_{i}(T, \, p^0) \, .
\end{align}
In the case that all data needed in the formula \eqref{FE} are available, the function $k$ is just an abbreviation for \eqref{KP0}. It is however possible that the free energy is constructed from other data. In this case it might be useful not to further specify $k$.

\subsubsection*{The additive splitting}

We can now rewrite \eqref{FE} as
\begin{align}\label{FESimple}
f(T, \, \rho) =  k(T, \,\rho) + p^0 \, V(T, \, p^0, \, \rho) + \underbrace{\bar{V}(T, \, p(T, \,\rho), \, \rho) - p(T, \,\rho)}_{=: f_{\rm mech}} \, .
\end{align}
We note that $f$ is the sum of $k(T, \, \rho)+p^0 \, V(T, \, p^0, \, \rho) $ --which is positively homogeneous in $\rho$-- and of the mechanical part $f_{\text{mech}}$ involving the pressure.

In the introduction of the present paper, the free energy was constructed by means of the integration method of classical thermodynamics to obtain the formula \eqref{FE}. From now on, we shall adopt a slightly different viewpoint in order to study asymptotic limits: We assume that the function $V$ of \eqref{VolumeTot} and the function $k$ of \eqref{kgrund} are the relevant objects. Recall that, in fact, these new ''data'' are constructed from the thermodynamic functions $\hat{\upsilon}$, $\hat{c}_p(p^0)$, $\hat{h}(T^0,p^0)$ and $\hat{s}(T^0,p^0)$.

\section{The free energy function as a mathematical object}\label{freesol}

\subsection{Some preliminary remarks}

So far we mainly addressed the meaning of the free energy function within classical thermodynamics. From this viewpoint, the formula \eqref{FE} and its reformulation \eqref{FESimple} are meaningful for all temperatures and densities around the reference state $(T^0, \, \rho^0)$ of a certain physical system. The original data, or the inferred ''data'' $V$ and $k$ in \eqref{FESimple}, can be evaluated at $(T, p(T,\rho), \, x(\rho))$ for all $(T, \rho)$ in this neighborhood. If this region is moreover stable in the phase--diagram, the function $f$ in \eqref{FESimple} has to be postulated convex in $\rho$ and concave in $T$, for otherwise the data must be rejected as incompatible with the second law of thermodynamics.

Considering now the PDEs \eqref{thermo2}$_1$ of mass transport in multicomponent systems, a description of the free energy which is only local leads to state--constraints. These PDEs would loose the parabolic structure if the convexity/concavity behavior of the free energy is violated in a neighborhood of the solution. A local description of the free energy generates rather severe obstacles to solution, approximation or convergence analysis with available methods.

From this viewpoint, the best case is a description of the free energy for the entire state space ($T > 0$ and $\rho_1, \ldots,\rho_N > 0$ arbitrary)! Moreover, as long as we do not expect phase--transitions of the material, this description would be a stable one --by which we mean that $f$ is convex in $\rho$ and concave in $T$ for all states.

Such global free energy models can be constructed. The simplest procedure is to assume that the data in the formula \eqref{FESimple} are extrapolated in such a way that they can be evaluated at arbitrary $(T, \, \rho)$.

We shall next implement this idea into some mathematical formalism. Let us at first introduce the notations
\begin{align*}
\overline{\mathbb{R}}^N_{+} := & \{ \rho \in \mathbb{R}^N \, : \, \rho_i \geq 0 \text{ for } i = 1,\ldots,N\} \, ,\\
 S^1_+ := & \{ y \in \mathbb{R}^N_+ \, : \, |y|_1 = 1\} \, ,\\
 S^2_+ := & \{ \nu \in \mathbb{R}^N_+ \, : \, |\nu|_2 = 1\} \, ,
\end{align*}
where $| x |_{r} := (\sum_{i=1}^N |x_i|^r)^{1/r}$ is the $r-$norm (here $r=1,2$).

The function $\hat{\upsilon}$ occurring in \eqref{Volumeconstraint} is a positive function of the variables $T, \, p, \, x$ defined in a set $$\mathcal{D}_{\hat{\upsilon}} \subseteq \mathbb{R}_+ \times \mathbb{R} \times  S^1_+ \, .$$

The proper range of pressures and compositions for which the volume of a mixture exhibits a smooth dependence --or can be measured at all-- is obviously a delicate topic: We think of limited possibility of the measurement apparatus, of phenomena like phase transitions, or of complex mixtures, for instance  electrolytes, where the charged constituents cannot vanish independently of each other. For the reasons mentioned just before, we now assume that the volume function $\hat{\upsilon}$ possesses a domain of the form
\begin{align}\label{DOM}
\mathcal{D}_{\hat{\upsilon}} := ]T_{\inf}, \, T_{\sup}[\times ]p_{\inf}, \, p_{\sup}[ \times S^1_+\, .
\end{align}
The latter means that $\hat{\upsilon}$ is well-defined for \emph{all possible compositions} vectors $x \in S^1_+$ inside of some thresholds $p_{\inf} < p_{\sup} \in \mathbb{R} \cup\{+\infty\}$ of pressure, and  $0 < T_{\inf} \leq T_{\sup} \leq + \infty$ of temperature \emph{which will be assumed independent of each other}.

Minimal regularity assumptions are that the function $(T, \, p) \mapsto \hat{\upsilon}(T, \, p, \, x)$ is continuous for all $x \in S^1_+$ fixed, and that the map $x \mapsto \hat{\upsilon}(T, \, p, \, x)$ possesses continuous extensions up to the boundary of $S_1^+$ (class $C(\overline{S}^1_+)$) for all $p \in ]p_{\inf}, \, p_{\sup}[$. For short, we write\begin{align}\label{conti}
 \hat{\upsilon} \in  C(]T_{\inf}, \, T_{\sup}[ \times ]p_{\inf}, \, p_{\sup}[ \times \overline{S}^1_+) \, .
\end{align}

The positively homogeneous volume-potential $V$ introduced in \eqref{VolumeTot} is then defined in $$\mathcal{D}_V :=  ]T_{\inf}, \, T_{\sup}[ \times ]p_{\inf}, \, p_{\sup}[ \times \mathbb{R}^N_{+} \, .$$ Under the assumption \eqref{conti}, we have $V \in C(]T_{\inf}, \, T_{\sup}[ \times ]p_{\inf}, \, p_{\sup}[ \times \overline{\mathbb{R}}^N_+) $.

If $\hat{\upsilon}$ is strictly positive, which is an obvious physical requirement, then
\begin{align}\label{posV}
V(T, \, p, \, \rho) > 0 \text{ for all } \rho \in \overline{\mathbb{R}}^N_+, \,\rho \neq 0 \text{ and all } (T, \, p) \in ]T_{\inf}, \, T_{\sup}[ \times ]p_{\inf}, \, p_{\sup}[ \, .
\end{align}
Next we want to evaluate the free energy formula for all states. This implies that we must find solutions to the equation \eqref{arbeitvolume} for states $\rho$ with arbitrary small or large norm. Hence it is necessary that the asymptotic values of $V$ obey
\begin{align}\label{consistency}
 \lim_{p \rightarrow p_{\inf}} V(T, \, p, \, \rho) = + \infty, \quad  \lim_{p \rightarrow p_{\sup}} V(T, \, p, \, \rho) = 0 \, \text{ for all } \rho \in \mathbb{R}^N_+ \, .
\end{align}
Since we consider the system in the range where it is stable, then surely\begin{align}\label{decreaseV}
& \hat{\upsilon}(T, \, p_2, \, x) < \hat{\upsilon}(T, \, p_1, \, x) \quad \text{ and } \quad V(T, \, p_2, \, \rho) < V(T, \, p_1, \, \rho) \nonumber\\
\text{ for all } & x \in S^1_+, \, \rho \in \mathbb{R}^N_+ \text{ and } p_{\inf} < p_1 < p_2 < p_{\sup}  \, .
\end{align}

To see that this condition is necessary in a smooth setting, we recall that the mass densities as functions of the variables $(T, \, p, \, x)$ are defined via \eqref{hatrho}.
We rephrase $\eqref{thermo8c}_1$ as $p = -f(T, \, \hat{\rho}) + \sum_{i=1}^N \hat{\rho}_i \, \hat{\mu}_i$, with $f = \varrho\psi$, $\hat{\rho}$ defined via \eqref{hatrho}, and $\hat{\mu} := \nabla_{\rho}f(T, \, \hat{\rho})$ depending on $T$, $p$ and $x$. Differentiation with respect to $p$ yields
\begin{align}\label{vp<0}
1 = \hat{\rho} \cdot D^2_{\rho,\rho} f(T, \, \hat{\rho}) \, \partial_p \hat{\rho} = - \frac{\hat{\upsilon}_p}{\hat{\upsilon}} \,  \hat{\rho} \cdot D^2_{\rho,\rho} f(T, \, \hat{\rho}) \, \hat{\rho} \, ,
\end{align}
proving that $\hat{\upsilon}$ must be strictly decreasing in $p$ for convex $f$.

For a strictly decreasing volume function, we can define the concept of a thermal equation of state.
\begin{defin}[Thermal equation of state for the pressure]\label{Def1}
Assume \eqref{conti},
and that the function $\pi \mapsto V(T, \, \pi, \, \rho)$ satisfies the conditions \eqref{consistency} and \eqref{decreaseV}.
Then there exists $p \in C(]T_{\inf}, \, T_{\sup} [ \times \overline{\mathbb{R}}^N_+)$ such that \eqref{arbeitvolume} is valid if and only if $p = p(T, \, \rho)$. Note that $p_{\inf} < p(T, \, \rho) < p_{\sup}$ by definition.
\end{defin}
\begin{rem}[Threshold of pressure]\label{p+infty}
In consequence of \eqref{conti}, \eqref{posV} and \eqref{consistency}, we note that for a finite upper pressure threshold $p_{\sup} < + \infty$, the molar volume satisfies $\hat{\upsilon}(T, \, p_{\sup}, \, x) = 0$. Of course, it is perfectly meaningful to formulate a free energy model under the restriction that its validity does not apply to pressures exceeding some number $p_{\sup}$. This is in the present context the source of conceptional difficulties though, because it seems to imply that we can have volume zero at finite pressures. Mathematically, we will see that it is not possible to construct a free energy function of Legendre--type on $\mathbb{R}^N_+$ for models having a finite pressure threshold from above. Thus, it is certainly reasonable to assume that
\begin{align*}
p_{\sup} = + \infty \, .
\end{align*}
\end{rem}

\subsection{A PDE for the free energy}

For $T$ fixed, the reference isobar $S_0 = S_0(T)$ at given $p^0$ is defined as
\begin{align*}
S_0 = \{\rho \in \mathbb{R}^N_+ \, : \, V(T, \, p^0, \, \rho) = 1 \} \, .
\end{align*}
This is a well--defined hyper--surface in $\mathbb{R}^N_+$: For $x \in S^1_+$ fixed, the line $\{t \, x \, : \, t > 0\}$ intersects $S_0(T)$ in exactly one point $\rho = \hat{\rho}(T, \, p^0, \, x)$ with $\hat{\rho}$ defined in \eqref{hatrho}.

We can then directly verify that $f$ defined by \eqref{FESimple} solves the following first order \emph{linear} PDE boundary-value-problem (bvp), in which the temperature is only a parameter:
\begin{alignat}{2}\label{probPbulk}
-f(T, \, \rho) + \rho \cdot \nabla_{\rho} f(T, \, \rho) = & p(T, \, \rho)  & &  \text{ for } \rho \in \mathbb{R}^N_+ \, ,\\
\label{probPbounda} f(T, \, \rho) = & k(T, \, \rho) & & \text{ for } \rho \in S_0(T) \, .
\end{alignat}
If the ''data'' $V$ and $k$ are not smooth -- for instance merely continuous -- the formula \eqref{FESimple} does not provide a continuously differentiable solution to the equation \eqref{probPbulk} anymore. Still, in the case that $f$ is convex in the variables $\rho_1,\ldots,\rho_N$, we can prove (see the Appendix) that it is sub--differentiable and satisfies
\begin{align}\label{bulkweak}
-f(T, \, \rho) + \rho \cdot \mu = p(T, \, \rho) \text{ for all } \rho \in \mathbb{R}^N_+ \text{ such that } \partial_{\rho} f(T,\rho) \neq \emptyset, \,  \mu \in \partial_{\rho} f(T,\rho) \, ,
\end{align}
where $\partial_{\rho} f$ is the subdifferential of $\rho \mapsto f(T, \, \rho)$.

The free energy of an incompressible system cannot, however, be interpreted as solution to \eqref{probPbulk} or even \eqref{bulkweak}. The reason is simple: the incompressibility is defined as $\partial_p V(T, \, p, \, \rho) = 0$, so that the definition of a function $p(T,\rho)$ as implicit solution to $V(T,\pi,\rho ) =1$ is ill posed.

In order to study singular limits, we shall therefore rely on an equivalent formulation of \eqref{probPbulk}, \eqref{probPbounda} which avoids introducing the function $p(T, \, \rho)$ explicitly. Due to \eqref{arbeitvolume} and the definition \eqref{kgrund}, the function $f$ also solves the first order \emph{nonlinear} bvp
\begin{alignat}{2}\label{probPbulkneu}
 V\big(T, \, -f(T, \, \rho) + \rho \cdot \nabla_{\rho} f(T, \, \rho), \, \rho\big) = & 1  & &  \text{ for } \rho \in \mathbb{R}^N_+ \, ,\\
\label{probPboundaneu} f(T, \, \rho) = & k(T, \, \rho) & & \text{ for } \rho \in S_0(T) \, .
\end{alignat}

The concept of solution for the problem \eqref{probPbulkneu}, \eqref{probPboundaneu} can as well be generalized in the framework of convex analysis as follows: Find a free energy function $f: \, ]0, \, +\infty[ \times \mathbb{R}^N_+ \rightarrow \mathbb{R} \cup \{+ \infty\}$ such that
\begin{align}\label{probPprimebulk}
 V\big(T, \, -f(T, \, \rho) + \rho \cdot \mu, \, \rho\big) =   1   \text{ for all } \rho \in \mathbb{R}^N_+ \text{ s.\ t.\ } \partial_{\rho} f(T, \rho) \neq \emptyset, \,  \mu \in \partial_{\rho} f(T,\rho) \, ,\\
\label{probPprimebounda} f(T, \, \rho) =  k(T, \, \rho)   \text{ if } \rho \in S_0(T) \, .
 \end{align}
Such weaker solutions to the problem \eqref{probPbulk} are precisely the relevant class to study mathematically the incompressible limit.
We next define the concept of a solution to the Gibbs-Duhem/Euler equation\footnote{Note that for the integration of \eqref{probPbulk}, the temperature can be treated as a parameter.}.
\begin{defin}\label{DEF}
Assume that $V \in C(\{T\} \times ]p_{\inf}, \, p_{\sup}[ \times \overline{\mathbb{R}}^N_+)$ is given, and that $p \in C(\{T\} \times\overline{\mathbb{R}}^N_+)$ is the corresponding thermal equation of state for the pressure.
\begin{enumerate}[(1)]
\item We call $f = f(T, \, \cdot) \in C^1(\mathbb{R}^N_+)$  a classical solution to the Gibbs-Duhem equation if the identity $-f + \rho \cdot \nabla_{\rho} f = p$ is valid in $\mathbb{R}^N_+$ {\rm (}cf.\ \eqref{probPbulk}{\rm )}. We call $f(T, \, \cdot)$ a classical solution \emph{of Legendre--type} if the two following additional conditions are satisfied:
\begin{itemize}
 \item $f(T, \, \cdot)$ is strictly convex;
 \item $f(T, \, \cdot)$ is essentially smooth {\rm (}see \cite{rockafellar}, page 251{\rm )}: $\lim_{m\rightarrow \infty} |\nabla_{\rho} f(T,\rho^m)| = +\infty$ for all $\{\rho^m\}_{m\in \mathbb{N}} \subset \mathbb{R}^N_+$ approaching a boundary point of $\mathbb{R}^N_+$.\footnote{Essential smoothness for a strictly convex function implies that the classical Legendre transform and the operation of convex conjugation are equivalent.}
 \end{itemize}
The solution is moreover called co-finite if $\nabla_{\rho} f(T) \big(\mathbb{R}^N_+\big) = \mathbb{R}^N$ {\rm (}$\nabla_{\rho} f(T, \, \cdot)$ is surjective from $\mathbb{R}^N_+$ onto $\mathbb{R}^N${\rm )}.

\item A (strictly) convex function $f = f(T, \, \cdot): \, \mathbb{R}^N_+ \rightarrow ]-\infty, \, + \infty]$ is called weak solution to the Gibbs-Duhem equation if $f$ is not identically $+\infty$, and for every $\rho \in \mathbb{R}^N_+$ and $\mu \in \partial_{\rho} f(T, \, \rho)$, the identity $V(T, \, -f + \mu \cdot \rho, \, \rho) = 1$ is valid {\rm (}cf.\ \eqref{probPprimebulk}{\rm )}. If $\partial_{\rho} f(T,\rho) = \emptyset$, the condition is assumed to hold vacuously. We call $f(T, \, \cdot)$ co-finite if the subdifferential $\partial_{\rho} f(T, \, \cdot)$ maps $\mathbb{R}^N_+$ onto $\mathbb{R}^N$.
\end{enumerate}
\end{defin}
\begin{rem}
For a co-finite, strictly convex function $f$ with domain in $\mathbb{R}^N_+$, the convex conjugate $f^*(x) := \sup_{\rho \in \mathbb{R}^N_+} (x \cdot \rho - f(\rho))$ is continuously differentiable in the whole of $\mathbb{R}^N$. From the point of view of mathematical analysis of models for multicomponent mass transport in fluids, this property is very important, as shown in \cite{dredrugagu20}, \cite{druetmixtureincompweak}, \cite{bothedruet}, \cite{bothedruetincompress}.
\end{rem}

\subsection{Mathematical properties of the solution formula} \label{mathprop}

The equations \eqref{probPbulk}, \eqref{probPbounda}, in which $T$ is only a parameter, determine the free energy function uniquely. \emph{A-posteriori} it will of course be necessary to verify the concavity in the variable $T$, which imposes compatibility restrictions on the data $V$ and $k$. As far as only the integration of the PDE is concerned we shall, for the sake of simplicity, not include the temperature in the notation.
\begin{prop}\label{formula}
Suppose that $V = V(T) \in C^{1,2}(]p_{\inf}, \, p_{\sup}[ \times \mathbb{R}^N_+)\cap C(]p_{\inf}, \, p_{\sup}[ \times \overline{\mathbb{R}}^N_+) $ satisfies $V > 0$, $\partial_p V < 0$ at all $(p, \, \rho) \in \mathcal{D}_V := ]p_{\inf}, \, p_{\sup}[ \times \mathbb{R}^N_+$. Suppose that $p^0 \in ]p_{\inf}, \, p_{\sup}[$ is a fixed reference value. For $(p, \, \rho) \in \mathcal{D}_V$ we introduce the primitive
\begin{align}\label{primitive}
\bar{V}(p, \, \rho) = \int_{p^0}^p V(p^{\prime}, \, \rho) \, dp^{\prime} \, ,
\end{align}
and we denote by $p \in C^1( \mathbb{R}^N_+) \cap C(\overline{\mathbb{R}}^N_+)$ the associated equation of state according to Def.\ \ref{Def1}. Assume that $k \in C^2(\mathbb{R}^N_+)$ is a positively homogeneous function. Then the formula
\begin{align}\label{solform}
f(\rho) :=  k(\rho) +p^0 \, V(p^0, \, \rho) + \bar{V}(p(\rho), \, \rho) - p(\rho) \,
\end{align}
provides the unique classical solution of class $C^2(\mathbb{R}^N_+)$ to the bvp $-f + \rho \cdot \nabla_{\rho} f = p$ in $\mathbb{R}^N$ with $f(\rho) = k(\rho)$ if $p(\rho) = p^0$ {\rm (}cf.\ \eqref{probPbulk}, \eqref{probPbounda} or, equivalently, \eqref{probPbulkneu}, \eqref{probPboundaneu}{\rm )}. The derivatives of the solution satisfy
\begin{align}\label{Gradienth}
\partial_{\rho_i} f(\rho) = & p^0 \, V_{\rho_i}(p^0, \, \rho) + k_{\rho_i}(\rho) + \bar{V}_{\rho_i}(p(\rho), \, \rho) \, ,\\
\label{Hessianh}
 \partial^2_{\rho_i,\rho_j} f(\rho)  = & p^0 \, V_{\rho_i,\rho_j}(p^0, \, \rho) + k_{\rho_i,\rho_j}(\rho) + \bar{V}_{\rho_i,\rho_j}(p(\rho), \, \rho) \nonumber\\
 & - \frac{1}{V_{p}(p(\rho), \, \rho)} \, V_{\rho_i}(p(\rho), \, \rho) \, V_{\rho_j}(p(\rho), \, \rho) \, .
\end{align}
The solution is (strictly) convex whenever the function $\rho \mapsto p^0 \, V(p^0, \, \rho) + k(\rho) + \bar{V}(p, \, \rho)$ is (strictly) convex for all $p \in ]p_{\inf}, \, p_{\sup}[$ or, equivalently, if it is (strictly) sub-additive.
\end{prop}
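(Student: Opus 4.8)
The plan is to verify directly that the formula \eqref{solform} solves the boundary-value problem, then to obtain the derivative identities by careful differentiation, and finally to read off the convexity criterion from the Hessian. Throughout I would lean on three structural facts: the positive homogeneity of degree one in $\rho$ of each building block $V(p^0, \cdot)$, $k$ and $\bar V(p, \cdot)$ (the last because integrating a degree-one homogeneous integrand over $p$ preserves homogeneity); the identity $\bar V_p = V$ from the fundamental theorem of calculus; and the defining relation $V(p(\rho), \rho) = 1$ together with its consequence that $p \mapsto V(p, \rho)$ is strictly decreasing.

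First I would compute $\nabla_\rho f$. Writing $F(\rho) := \bar V(p(\rho), \rho)$, the chain rule gives $\nabla_\rho F = \bar V_p(p(\rho), \rho)\,\nabla_\rho p + \bar V_\rho(p(\rho), \rho) = V(p(\rho), \rho)\,\nabla_\rho p + \bar V_\rho(p(\rho), \rho)$, and since $V(p(\rho),\rho)=1$ this equals $\nabla_\rho p + \bar V_\rho(p(\rho), \rho)$. The crucial point is that the $-p(\rho)$ summand in \eqref{solform} contributes $-\nabla_\rho p$, cancelling exactly the $\nabla_\rho p$ coming from $F$; what survives is precisely \eqref{Gradienth}. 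Applying the Euler operator and using homogeneity (so that $\rho\cdot\nabla_\rho k = k$, $\rho\cdot\nabla_\rho V(p^0,\cdot)=V(p^0,\rho)$ and $\rho\cdot \bar V_\rho(p,\rho)=\bar V(p,\rho)$) yields $\rho\cdot\nabla_\rho f = k + p^0 V(p^0,\rho) + \bar V(p(\rho),\rho)$, whence $-f + \rho\cdot\nabla_\rho f = p(\rho)$, which is the bulk equation \eqref{probPbulk}. The boundary condition is immediate: on $S_0(T)$ one has $V(p^0,\rho)=1$, hence $p(\rho)=p^0$ by strict monotonicity, hence $\bar V(p(\rho),\rho)=0$, and the remaining terms collapse to $f = k$, giving \eqref{probPbounda}.

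For the Hessian I would differentiate \eqref{Gradienth} once more. The only non-frozen $\rho$-dependence sits in the first argument of $\bar V_{\rho_i}(p(\rho), \rho)$; differentiating produces $\bar V_{\rho_i p}\,\partial_{\rho_j} p + \bar V_{\rho_i \rho_j}$, and $\bar V_{\rho_i p} = V_{\rho_i}$. The factor $\partial_{\rho_j} p$ follows by implicit differentiation of $V(p(\rho),\rho)=1$, namely $\partial_{\rho_j} p = -V_{\rho_j}/V_p$ at $(p(\rho),\rho)$, which upon substitution gives exactly the rank-one correction $-V_{\rho_i}V_{\rho_j}/V_p$ in \eqref{Hessianh}. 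Uniqueness of the classical $C^2$ solution I would obtain from the method of characteristics: $-f + \rho\cdot\nabla_\rho f$ is a first-order linear transport operator whose characteristics are the rays $t\mapsto t\,x$, each meeting the cross-section $S_0(T)$ exactly once (because $t\mapsto V(p^0,tx)=t\,V(p^0,x)$ increases strictly from $0$ to $+\infty$); along each ray the equation reduces to the scalar linear ODE $t\phi' - \phi = p(tx)$ with Cauchy datum fixed at the crossing, which has a unique solution.

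Finally I would read off the convexity criterion from \eqref{Hessianh}. Writing $g_p(\rho) := p^0 V(p^0, \rho) + k(\rho) + \bar V(p, \rho)$, the Hessian decomposes as $D^2_\rho f(\rho) = D^2_\rho g_{p(\rho)}(\rho) - \tfrac{1}{V_p(p(\rho),\rho)}\,\nabla_\rho V\otimes\nabla_\rho V$, with the last factor evaluated at $(p(\rho),\rho)$. Since $V_p < 0$, this rank-one correction is positive semi-definite, so convexity (respectively strict convexity) of every $g_p$ in $\rho$ forces $D^2_\rho f \geq 0$ (respectively $>0$) at each point, i.e.\ $f$ is (strictly) convex. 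The equivalence with sub-additivity is the standard fact that a positively homogeneous degree-one function is convex if and only if it is sub-additive, applicable to $g_p$ because all its summands are degree-one homogeneous. The argument is essentially a chain of verifications; the only place that demands genuine care is the bookkeeping of total versus partial $\rho$-derivatives once $p=p(\rho)$ is inserted, and in particular recognising the cancellation of $\nabla_\rho p$ that produces \eqref{Gradienth} — this I expect to be the real crux.
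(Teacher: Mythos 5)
Your proposal is correct and follows essentially the same route as the paper's proof in the appendix: the same cancellation of the $\nabla_\rho p$ terms via $\bar V_p(p(\rho),\rho)=V(p(\rho),\rho)=1$ to obtain \eqref{Gradienth}, the same use of Euler's identity for the degree-one homogeneous blocks to verify the bulk equation, the same implicit differentiation $\partial_\rho p=-V_\rho/V_p$ for \eqref{Hessianh}, and the same observation that the rank-one correction is positive semi-definite because $V_p<0$. The only cosmetic difference is in the uniqueness step, where you integrate the scalar ODE along each ray with datum on $S_0(T)$, while the paper subtracts two solutions, notes the difference is positively homogeneous and vanishes on the reference isobar, and scales along the ray — these are the same argument in different clothing.
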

We next study additional properties of the solution.
\begin{prop}\label{exi}
We adopt the assumptions of Prop.\ \ref{formula} and assume that
\begin{enumerate}[(a)]
\item \label{condition-1} $p_{\sup} = + \infty$;
\item\label{condition0} $\lim_{p\rightarrow +\infty} V(p, \, \rho) = 0$, $\lim_{p\rightarrow p_{\inf}} V(p, \, \rho) = + \infty$ for all $\rho \in \mathbb{R}^N_+$;
 \item \label{condition1} $\lim_{p\rightarrow +\infty} \bar{V}(p, \, \rho) = + \infty$ and $\lim_{p\rightarrow p_{\inf}} \bar{V}(p, \, \rho) = -\infty$ for all $\rho \in \mathbb{R}^N_+$;
\end{enumerate}
 Assume further that $k \in C^2(\mathbb{R}^N_+)$ is a given positively homogeneous convex function and there exist $p^0 \in ]p_{\inf}, \, +\infty[$ such that:
\begin{enumerate}[(a)]
\addtocounter{enumi}{+3}
\item \label{condition4} The positively homogeneous function $\bar{k}(\rho) := k(\rho) + p^0 \, V(p^0, \, \rho)$ is convex and its restriction to $S^1_+$ is strictly convex and essentially smooth;
\item \label{condition4prime} For all $p_{\inf} < p_1 < p_2 < + \infty$ the quantity $\inf_{p \in ]p_1, \, p_2[} | \bar{k}_{\rho}(y) + \bar{V}_{\rho}(p, \, y)|$ tends to $+\infty$ for $y \in S^+_1$, $y \rightarrow \partial S^1_+$;
\item \label{suprem} $\sup_{y \in S^1_+} \{p^0 \, V(p^0, \, y) + k(y) + \bar{V}(p_m, \, y)\} \rightarrow -\infty \text{ for } p_m \rightarrow p_{\inf}$;
\item \label{inf2} $
\inf_{y \in S^1_+} \{p^0 \, V(p^0, \, y) + k(y) + \bar{V}(p_m, \, y)\} \rightarrow + \infty \text{ for } p_m \rightarrow +\infty$;
\item \label{condition5} For all $y \in S^1_+$ and $p > p_{\inf}$, the matrix $D^2_{\rho,\rho} \bar{k}(y) + D^2_{\rho,\rho} \bar{V}(p, \, y)$ possesses $N-1$ positive eigenvalues.
\end{enumerate}
Then $f$ defined by \eqref{solform} is the unique co-finite, classical solution of Legendre--type to $- f + \rho \cdot \nabla_{\rho}  f= p$ such that $f = k$ on the surface $\{\rho \in \mathbb{R}^N_+  \, : \, V(p^0, \, \rho) = 1\}$. Moreover, $f$ belongs to $C^2(\mathbb{R}^N_+)$ and $D^2f(\rho)$ is positive definite for all $\rho \in \mathbb{R}^N_+$.
\end{prop}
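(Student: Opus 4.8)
The plan is to read Proposition~\ref{exi} as a refinement of Proposition~\ref{formula}: formula \eqref{solform} already provides a classical $C^2$ solution together with its uniqueness, so the task is to show that the extra hypotheses (a)--(h) upgrade it to a co-finite solution of Legendre type with positive definite Hessian. First I would verify that $f$ is genuinely defined on all of $\mathbb{R}^N_+$. By (a) and (b), for each fixed $\rho\in\mathbb{R}^N_+$ the map $p\mapsto V(p,\rho)$ is continuous, strictly decreasing (since $V_p<0$), and runs from $+\infty$ at $p_{\inf}$ to $0$ at $p_{\sup}=+\infty$; hence it attains the value $1$ at a unique $p=p(\rho)$, so the equation of state of Definition~\ref{Def1} is globally well defined, and $f$ together with its gradient \eqref{Gradienth} and Hessian \eqref{Hessianh} are available throughout $\mathbb{R}^N_+$. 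Uniqueness among classical solutions is then inherited directly from Proposition~\ref{formula}.

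Second, the central computation is the positive definiteness of $D^2f$. I would abbreviate $g_p:=\bar k+\bar V(p,\cdot)$ with $\bar k(\rho)=k(\rho)+p^0V(p^0,\rho)$; both $\bar k$ and $\bar V(p,\cdot)$ are positively homogeneous of degree one in $\rho$ (because $V(p,\cdot)$ is), hence so is $g_p$. Reading \eqref{Hessianh}, the first three terms assemble exactly into $D^2 g_{p(\rho)}(\rho)$, while the last term equals $(-1/V_p)\,V_\rho\otimes V_\rho$, which is positive semidefinite of rank one since $V_p<0$. Euler's identity for the degree-one function $g_p$ gives $D^2 g_p(\rho)\rho=0$, and by homogeneity $D^2 g_p(\rho)=|\rho|_1^{-1}D^2 g_p(\rho/|\rho|_1)$, so hypothesis (h) shows $D^2 g_p(\rho)$ has exactly $N-1$ positive eigenvalues with kernel $\mathbb{R}\rho$. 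Thus for $\xi\neq 0$ one has $\xi\cdot D^2f(\rho)\xi=\xi\cdot D^2 g_{p(\rho)}(\rho)\xi+(-1/V_p)(V_\rho\cdot\xi)^2\ge 0$, and equality forces simultaneously $\xi\in\mathbb{R}\rho$ and $V_\rho\cdot\xi=0$; but $V_\rho(p(\rho),\rho)\cdot\rho=V(p(\rho),\rho)=1\neq 0$, so $\xi=0$. Hence $D^2f$ is positive definite and $f$ is strictly convex.

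Third, essential smoothness. Using degree-zero homogeneity of the gradients, \eqref{Gradienth} reads $\nabla f(\rho)=\bar k_\rho(y)+\bar V_\rho(p(\rho),y)$ with $y=\rho/|\rho|_1\in S^1_+$. For $\rho^m\to\partial\mathbb{R}^N_+$ I would split into two cases. If $\rho^m\to\bar\rho\neq 0$ lying on a face of the orthant, then $|\rho^m|_1\to|\bar\rho|_1\in(0,\infty)$, the root $p(\rho^m)$ converges to the finite pressure solving $V(\cdot,\bar\rho)=1$ and therefore remains in a compact subinterval of $]p_{\inf},\infty[$, while $y^m\to\partial S^1_+$; hypothesis (e) then forces $|\nabla f(\rho^m)|\to\infty$. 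If instead $\rho^m\to 0$, then $V(p(\rho^m),y^m)=1/|\rho^m|_1\to\infty$ drives $p(\rho^m)\to p_{\inf}$, and by Euler $\nabla f(\rho^m)\cdot y^m=\bar k(y^m)+\bar V(p(\rho^m),y^m)\le\sup_{S^1_+}\{\bar k+\bar V(p(\rho^m),\cdot)\}\to-\infty$ by hypothesis (f); since $|y^m|_2\le 1$ this again yields $|\nabla f(\rho^m)|\to\infty$. Hence $f$ is essentially smooth, so of Legendre type.

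Finally, co-finiteness. I would compute the recession function of $f$: for a direction $d\in\overline{\mathbb{R}}^N_+\setminus\{0\}$ one has $p(td)\to+\infty$ as $t\to\infty$, and $\tfrac{d}{dt}f(td)=\nabla f(td)\cdot d=\bar k(d)+\bar V(p(td),d)\to+\infty$ by hypothesis (g) (and consistently by (c)), whence $f(td)/t\to+\infty$ and $(f0^+)(d)=+\infty$; for any $d$ with a negative component $(f0^+)(d)=+\infty$ trivially since $\rho_0+td$ leaves $\mathbb{R}^N_+$. Thus the recession function of $f$ is $+\infty$ off the origin, so by \cite{rockafellar} (Thm.~13.3) the conjugate has full domain $\operatorname{dom}f^*=\mathbb{R}^N$, and because $f$ is of Legendre type $\nabla f$ is a bijection of $\mathbb{R}^N_+$ onto $\operatorname{int}\operatorname{dom}f^*=\mathbb{R}^N$. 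This proves co-finiteness and completes the identification of \eqref{solform} as the unique co-finite classical Legendre-type solution. The main obstacle I anticipate is the essential-smoothness step: the gradient must be analyzed in the $(p,y)$ variables, and one must recognize that the pressure excursion differs qualitatively between approach to a finite boundary point (where $p(\rho)$ stays in a compact interval, handled by (e)) and approach to the origin (where $p(\rho)\to p_{\inf}$, handled by (f)), which is exactly why both technical hypotheses are needed.
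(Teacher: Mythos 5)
Your treatment of strict convexity and essential smoothness is correct and essentially the paper's own argument. The kernel computation (Euler's identity puts $\rho$ in $\ker D^2g_{p(\rho)}$, hypothesis (h) forces that kernel to be exactly $\mathbb{R}\rho$, and $V_\rho(p(\rho),\rho)\cdot\rho=V(p(\rho),\rho)=1$ eliminates the remaining direction) is a cleaner packaging of the paper's explicit basis $\{\xi^i-(V_\rho\cdot\xi^i/V)\,\rho,\ \rho\}$ and the projector $\Pi$; and your two-case boundary analysis ($\rho^m\to 0$, hence $p(\rho^m)\to p_{\inf}$, handled by (f); $\rho^m\to\bar\rho\neq 0$, hence $p(\rho^m)$ in a compact interval and $y^m\to\partial S^1_+$, handled by (e)) is exactly the paper's.

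The gap is in the co-finiteness step. Rockafellar's Theorem 13.3 requires $(f0^+)(d)=+\infty$ for \emph{every} $d\neq 0$, but your verification covers only interior directions $d\in\mathbb{R}^N_+$ (where the ray $td$ lies in $\operatorname{dom}f$ and $p(td)\to+\infty$) and directions with a negative component. For a boundary direction $d\in\partial\overline{\mathbb{R}}^N_+\setminus\{0\}$ with some components zero and others positive, the ray $td$ is not in the domain, so the assertion ``$p(td)\to+\infty$'' is vacuous; one must instead control $\lim_{t\to\infty}\nabla f(\rho_0+td)\cdot d$ along $\rho_0+td\in\mathbb{R}^N_+$, whose normalized direction tends to a point of $\partial S^1_+$, where hypotheses (c) and (g) --- stated for $\rho\in\mathbb{R}^N_+$ and $y\in S^1_+$ respectively --- do not directly apply. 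Since $\operatorname{dom}(f0^+)$ is a convex cone, it could a priori be a nontrivial face of the orthant, in which case $\operatorname{dom}f^*\neq\mathbb{R}^N$. The paper sidesteps this entirely: it uses (g) only to show $|\nabla_\rho f(\rho^m)|\to+\infty$ as $|\rho^m|\to\infty$, combines this with essential smoothness to conclude that preimages under $\nabla_\rho f$ of convergent sequences of gradients remain in compact subsets of $\mathbb{R}^N_+$ (so the image $\nabla_\rho f(\mathbb{R}^N_+)$ is closed), and invokes the inverse function theorem (available because $D^2f>0$) to see that the image has no boundary points, hence is open; a nonempty open and closed subset of $\mathbb{R}^N$ is all of $\mathbb{R}^N$. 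You should either adopt that connectedness argument or supply the missing recession estimate for boundary directions.
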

A proof for the two Propositions \ref{formula} and \ref{exi} can be found in the appendix, section \ref{detrop}.

Thus, due to the Gibbs-Duhem equation, the free energy obeys the representation \eqref{FESimple}. Now, the data in this formula are also restricted by the requirement of concavity in temperature. In order to verify the validity of this condition, we compute
\begin{align}\label{FEconcTun}
\partial^2_{T,T} f(T, \, \rho) = & \partial^2_{T,T}k(T, \, \rho) + p^0 \, \partial^2_{T,T} V(T, \, p^0, \, \rho) + \int_{p^0}^{p(T,\rho)}  \partial^2_{T,T}V(T, \, p^{\prime}, \, \rho) \, dp^{\prime} \nonumber\\
& - \frac{\partial_TV(T, \, p(T,\rho), \, \rho)}{\partial_pV(T, \, p(T,\rho), \, \rho)} \, .
\end{align}
Hence, in addition to the restrictions formulated in the Propositions \ref{formula} and \ref{exi}, the functions $V$ and $k$ are restricted by the condition that the right-hand side in \eqref{FEconcTun} is negative.

To be even more specific, employing the full formula \eqref{FE} and using this time the representation in the variables $(T, \, p, \, x)$, we obtain
\begin{align}\label{FEconcT}
\partial^2_{T,T} \hat{f}(T, \, p, \, x) = & - \frac{1}{\hat{\upsilon}(T, \, p, \, x)} \,  \frac{(\partial_T\hat{\upsilon}(T, \, p, \, x))^2}{\partial_p\hat{\upsilon}(T, \, p, \, x)} + \frac{1}{\hat{\upsilon}(T, \, p, \, x)} \, \int_{p^0}^p \partial^2_{T,T} \hat{\upsilon}(T, \, p^{\prime}, \, x) \, dp^{\prime} \nonumber\\
& - \frac{M(x)}{\hat{\upsilon}(T, \, p, \, x)} \,  \frac{\hat{c}_p(T, \, p^0, \, x)}{T} \, ,
\end{align}
and the data in \eqref{FE} are restricted, for all relevant $(T,p,x)$, by the inequality
\begin{align}\label{compatoublie}
 \int_{p^0}^p \partial^2_{T,T} \hat{\upsilon}(p^{\prime}, \, \cdot) \, dp^{\prime} - \frac{M(x)}{T} \,  \hat{c}_p(p^0, \, \cdot) <  \frac{(\partial_T\hat{\upsilon})^2}{\partial_p\hat{\upsilon}} \, .
\end{align}

\section{The convergence result}\label{mainres}

The main result in the mathematical part of the paper concerns deriving the volume constraint which defines the incompressible limit and the structure of the free energy. Let $T \in ]T_{\inf}, \, T_{\sup}[$ be fixed. We consider a sequence of volume functions
\begin{align}\label{Vregum}
                                                                                                                                                                                            \{V^m(T)\}_{m\in \mathbb{N}} \subset C^{1,2}(]p_{\inf}^m, \, + \infty[ \times \mathbb{R}^N_+) \cap C(]p_{\inf}^m, \, + \infty[ \times \overline{\mathbb{R}}^N_+ ) \, .
                                                                                                                                                                                            \end{align}
The reference value $p^0$ belongs to $]p_{\inf}^m, \, +\infty[$ for all $m$.
We define $$p_{\inf} := \limsup_{m\rightarrow \infty} p_{\inf}^m \, ,$$and we assume that $p_{\inf}$ is any real value or $\{-\infty\}$.

As to the general procedure, we \underline{assume} that for all values of the parameter $m$, a free energy function $f^{m}(T, \cdot)$ is \emph{globally available} on $\mathbb{R}^N_+$ and $\rho \mapsto f^m(T, \, \rho)$ satisfies all conditions for a smooth co-finite function of Legendre--type.

To construct this function, we consider sequences $\{V^m(T)\}$ with the regularity \eqref{Vregum} and, for all $m$,
\begin{align}\label{kregum}
\{k^m(T)\} \subset C^{2}(\mathbb{R}^N_+), \quad \text{ subject to } \quad \rho \mapsto k^m(T, \, \rho) \text{ pos.\ homog.\ and convex on } \mathbb{R}^N_+ \, .
\end{align}
For each $m$, we assume that the pair of $V^m(T)$ and $k^m(T)$ satisfies the assumptions of Prop.\ \ref{formula} and \ref{exi}. Then, there is a unique classical co-finite solution of Legendre--type $f^m(T) \in C^2(\mathbb{R}^N_+)$ to the problem \eqref{probPbulk}, \eqref{probPbounda} (equivalently \eqref{probPbulkneu}, \eqref{probPboundaneu}) with data $V^m(T), \, k^m(T)$.

We further \underline{assume} some natural convergence properties for the data $V^m$ and $k^m$: These are:
\begin{enumerate}[(a)]
\addtocounter{enumi}{+8}
 \item \label{belle1} $V^m(T)$ converges uniformly on compact subsets of $]p_{\inf}, \, +\infty[ \times \overline{\mathbb{R}}^N_+$, and $D^2_{\rho,\rho}V^m(T)$ converges uniformly on compact subsets of $]p_{\inf}, \, +\infty[ \times \mathbb{R}^N_+$.
 \item \label{belle2} There exists a function $k(T)$ such that $k^m(T,\cdot) \rightarrow k(T,\cdot)$ uniformly on compact subsets of $\overline{\mathbb{R}}^N_+$ and $D^2_{\rho,\rho}k^m(T,\cdot)$ converges uniformly to $D^2_{\rho,\rho}k(T,\cdot)$ on compact subsets of $\mathbb{R}^N_+$. Moreover, the limit $k(T)$ is essentially smooth on $S^1_+$ and $D^2_{\rho,\rho}k(T, \, y)$ possesses $N-1$ strictly positive eigenvalues for all $y \in S^1_+$.
 \end{enumerate}

Then we prove that also $f^m(T, \cdot)$ converges, and we \underline{derive} necessary structural consequences for the limit. As mentioned in the introduction, the structure of the limit depends on the range of pressures for which we have pressure-independence of the volume, and we consider two cases: Incompressibility in the entire state space or incompressibility in a bounded subregion.

In the first scenario, we let the isothermal compressibility tend to zero in the entire domain of definition. This case exhibits two striking features: First, the incompressiblity constraint is necessarily linear; second the free energies epi--converge (\cite{rockawets}, Def. 7.1), or Gamma- or Mosco-converge\footnote{Since we are in a finite-dimensional convex setting, these concepts are all equivalent.} to the limit free energy.
\begin{prop}\label{MAIN}
In addition to the assumptions \eqref{Vregum}, \eqref{kregum}, \eqref{belle1} and \eqref{belle2} for $V^m$ and $k^m$, we assume that $p_{\inf} = -\infty$, and that
\begin{enumerate}[(a)]
\addtocounter{enumi}{+10}
\item \label{belle0} $\partial_p V^m \rightarrow 0$ pointwise in $]-\infty, \, +\infty[ \times \mathbb{R}^N_+$.
 \end{enumerate}
Let $V^{\infty}(\rho) := \lim_{m\rightarrow \infty} V^m(p^0, \, \rho)$ and assume that $V^{\infty}$ is strictly positive on $\overline{\mathbb{R}}^N_+ \setminus\{0\}$.

Then there is a fixed vector $\bar{\upsilon}\in \mathbb{R}^N_+$ such that $V^{\infty}(\rho) = \sum_{i=1}^N\bar{\upsilon}_i \,  \rho_i$. The convex function
 \begin{align*}
  f^{\infty}(\rho) := \begin{cases}
                                 k(\rho) & \text{ if } V^{\infty}(\rho) = 1 \, ,\\
                                 + \infty & \text{ otherwise, }
                                \end{cases}
 \end{align*}
is the limit of $\{f^m\}_{m\in \mathbb{N}}$ in the sense of epi-convergence, and $f^{\infty}$ is a strictly convex weak solution to \eqref{probPbulkneu}, \eqref{probPboundaneu}.
Moreover, for all $(\mu, \, \rho) \in \mathbb{R}^N \times \mathbb{R}^N_+$ the following statements are equivalent:
\begin{enumerate}[(1)]
 \item $\mu \in \partial f^{\infty}(\rho)$;
 \item $\mu = p \, \bar{\upsilon} + \nabla_{\rho} k(\rho)$ with $p = (f^{\infty})^*(\mu)$, $(f^{\infty})^*$ being the convex conjugate of $f^{\infty}$.
 \end{enumerate}
The pressures $p^m(\rho) = -f^m(\rho) + \rho \cdot \nabla_{\rho}f^m(\rho)$ satisfy the set convergence
\begin{align*}
 \text{Graph}(p^m) = \{(\rho, \, p^m(\rho)) \, : \, \rho \in \mathbb{R}^N_+\} \longrightarrow & \{(\rho, \, -f^{\infty}(\rho) + \mu \cdot \rho) \, : \,  \bar{\upsilon} \cdot \rho = 1, \, \mu \in \partial f(\rho)\}\\
& =  \{(\rho, \, p) \, : \, \rho >0, \, \bar{\upsilon} \cdot \rho = 1, \, p \in \mathbb{R}\} \, ,
\end{align*}
where $\longrightarrow $ denotes the set convergence of Painlev\'e--Kuratowski in $\mathbb{R}^{N+1}$ (see \cite{rockawets}, Ch.\ 4).
\end{prop}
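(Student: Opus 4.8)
\emph{Setup and strategy.} The plan is to exploit the representation of each $f^m$ furnished by Proposition~\ref{formula}. Writing $G^m(p,\rho):=p^0\,V^m(p^0,\rho)+k^m(\rho)+\bar V^m(p,\rho)$, the solution formula \eqref{solform} together with the fact that $p^m(\rho)$ maximises $p\mapsto\bar V^m(p,\rho)-p$ (its $p$-derivative is $V^m(p,\rho)-1$, which changes sign at $p^m(\rho)$ by the monotonicity \eqref{decreaseV}) yields the Legendre-type expression
\[
 f^m(\rho)=\sup_{p\in\,]p^m_{\inf},\,\infty[}\bigl(G^m(p,\rho)-p\bigr),
\]
a supremum of convex functions of $\rho$, the convexity of $G^m(p,\cdot)$ for each $p$ being exactly the criterion stated in Proposition~\ref{formula}. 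I would establish the four assertions in the order: linearity of $V^\infty$, epi-convergence, identification of $\partial f^\infty$ and $(f^\infty)^*$, and finally the Painlev\'{e}--Kuratowski convergence of the pressure graphs.

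\emph{Linearity.} First I would record that \eqref{belle1} produces a continuous limit $\tilde V(p,\rho)=\lim_m V^m(p,\rho)$, and that \eqref{belle0} forces $\tilde V$ and its $\rho$-Hessian to be independent of $p$; this is precisely where $\partial_pV^m\to0$ enters, the continuity of $\tilde V$ (from the uniform convergence) ruling out concentration of the decreasing increments \eqref{decreaseV}. Consequently $\tilde V(p,\rho)=V^\infty(\rho)$, $\bar V^m(p,\rho)\to(p-p^0)V^\infty(\rho)$, and $G^m(p,\cdot)\to k+p\,V^\infty$. Being a pointwise limit of convex functions, $\rho\mapsto k(\rho)+p\,V^\infty(\rho)$ is convex for \emph{every} real $p$; here the hypothesis $p_{\inf}=-\infty$ is essential, since it lets $p$ run over all of $\mathbb{R}$. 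Freezing a direction $\eta$, the affine map $p\mapsto\eta^{\mathsf T}D^2k(\rho)\eta+p\,\eta^{\mathsf T}D^2V^\infty(\rho)\eta$ is nonnegative on all of $\mathbb{R}$, so its slope $\eta^{\mathsf T}D^2V^\infty(\rho)\eta$ vanishes; as $\eta$ is arbitrary, $D^2V^\infty\equiv0$. Positive homogeneity of degree one (inherited from the $V^m$) upgrades affinity to linearity, $V^\infty(\rho)=\bar\upsilon\cdot\rho$ with $\bar\upsilon_i=V^\infty(e^i)>0$.

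\emph{Epi-convergence and the limit subdifferential.} For the $\liminf$ inequality I would use, for an arbitrary $\rho^m\to\rho$ and any fixed $p$, the bound $f^m(\rho^m)\ge G^m(p,\rho^m)-p\to k(\rho)+p(\bar\upsilon\cdot\rho-1)$; optimising over $p\in\mathbb{R}$ (admissible in the limit since $p^m_{\inf}\to-\infty$) returns $f^\infty(\rho)$. For the recovery inequality the point is to avoid the uncontrolled pressure $p^m(\rho)$: using homogeneity I take the radial projection $\rho^m:=\rho/V^m(p^0,\rho)$ onto the reference isobar $S_0^m=\{V^m(p^0,\cdot)=1\}$, so that $p^m(\rho^m)=p^0$ and \eqref{solform} collapses to $f^m(\rho^m)=k^m(\rho^m)$; since $V^m(p^0,\rho)\to\bar\upsilon\cdot\rho=1$ one gets $\rho^m\to\rho$ and $f^m(\rho^m)\to k(\rho)=f^\infty(\rho)$. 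In a finite-dimensional convex setting this gives epi-, Gamma- and Mosco-convergence. Writing $f^\infty=k+\iota_{H}$ with $H=\{\bar\upsilon\cdot\rho=1\}$ and using that $k$ is differentiable on $\mathbb{R}^N_+$, the sum rule yields $\partial f^\infty(\rho)=\nabla k(\rho)+\mathbb{R}\,\bar\upsilon$; the Fenchel--Young identity together with Euler's relation $\nabla k(\rho)\cdot\rho=k(\rho)$ and $\bar\upsilon\cdot\rho=1$ identifies the multiplier as $p=(f^\infty)^*(\mu)$, which is the asserted equivalence and matches \eqref{MU1}. Strict convexity holds because any segment direction $\eta$ in $\operatorname{dom}f^\infty\subset H$ satisfies $\bar\upsilon\cdot\eta=0$, hence is non-radial (as $\bar\upsilon\cdot\rho\neq0$), so $\eta^{\mathsf T}D^2k\,\eta>0$ by the $N-1$ positive eigenvalues in \eqref{belle2}; and $f^\infty$ solves \eqref{probPbulkneu}--\eqref{probPboundaneu} weakly in the sense of Definition~\ref{DEF} since, $V^\infty$ being $p$-independent, the nonlinear identity reduces to $\bar\upsilon\cdot\rho=1$ on the domain.

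\emph{Graph convergence and the main obstacle.} For the outer Painlev\'{e}--Kuratowski limit, if $(\rho^m,p^m(\rho^m))\to(\rho_*,p_*)$ along a subsequence, the defining relation $V^m(p^m(\rho^m),\rho^m)=1$ and the uniform-on-compacts convergence \eqref{belle1} give $\bar\upsilon\cdot\rho_*=1$, placing the cluster point in the target set. For the inner limit, given $(\rho_*,p_*)$ with $\bar\upsilon\cdot\rho_*=1$, homogeneity again provides the realising sequence: $\rho^m:=\rho_*/V^m(p_*,\rho_*)$ satisfies $V^m(p_*,\rho^m)=1$, whence $p^m(\rho^m)=p_*$ exactly and $\rho^m\to\rho_*$; the two descriptions of the limit set coincide by the previous paragraph. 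The essential difficulty throughout is the degeneracy $\partial_pV^m\to0$: it both \emph{drives} the linearity, because convexity must persist as $p\to\pm\infty$ (hence the role of $p_{\inf}=-\infty$), and \emph{obstructs} the analysis, because the equation-of-state pressure $p^m(\rho)$ may diverge, so that the naive choices $\rho^m\equiv\rho$ must be replaced by radial projections onto the exact level surfaces $\{V^m(p,\cdot)=1\}$. The one analytic step demanding real care is thus the $p$-independence of $V^\infty$: upgrading the pointwise statement $\partial_pV^m\to0$ to independence of the continuous uniform limit, using the monotonicity \eqref{decreaseV} to exclude escaping or concentrating increments.
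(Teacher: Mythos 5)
Your argument is correct, but it follows a genuinely different and more elementary route than the paper. The paper funnels everything through Lemma \ref{asymptotics}: it passes to the conjugates $g^m=(f^m)^*$, derives uniform $C^2$ bounds from the inverse--Hessian relation $D^2g^m=(D^2f^m)^{-1}$, extracts a $C^1_{\mathrm{loc}}$ limit $g$ solving $V^{\infty}(g,\nabla g)=1$, and recovers $f^{\infty}=g^*$; the linearity of $V^{\infty}$ is extracted from the positive definiteness of $D^2f^m=H^m$ via \eqref{eigenschaft2} by sending $p\to\pm\infty$. You instead stay entirely in the primal variables: the sup-representation $f^m=\sup_p\bigl(G^m(p,\cdot)-p\bigr)$ is legitimate, since $G^m(p,\cdot)$ is convex by condition \eqref{condition5} of Proposition \ref{exi} together with positive homogeneity, and it gives the liminf inequality directly; the radial projections onto the isobars $\{V^m(p,\cdot)=1\}$ then give explicit recovery sequences for the epi-limit and explicit realizing sequences for the inner Painlev\'e--Kuratowski limit -- a part of the statement for which the paper offers no explicit argument at all, so your treatment is actually more complete there. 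Your linearity argument (convexity of the limit $k+p\,V^{\infty}$ for every $p\in\mathbb{R}$ forces the slope of the affine map $p\mapsto \eta^{\sf T}D^2(k+p\,V^{\infty})\eta$ to vanish) is a streamlined version of the paper's and isolates correctly where $p_{\inf}=-\infty$ enters. What your route forgoes is the regularity of the dual potential $g$ and the surjectivity of $\nabla g$, which Lemma \ref{asymptotics} delivers as byproducts and reuses for Proposition \ref{MAINNN}; your sum-rule computation $\partial f^{\infty}=\nabla_{\rho} k+\mathbb{R}\,\bar{\upsilon}$ replaces that machinery adequately here because the constraint set is a hyperplane. One caveat: the step you yourself flag -- upgrading the pointwise convergence $\partial_pV^m\to 0$ of \eqref{belle0} to $p$-independence of the uniform limit $V^{\infty}$ -- is not actually settled by continuity of the limit alone, since the negative increments of $V^m$ could in principle spread out rather than concentrate; but the paper asserts exactly the same step as ``readily verified,'' so your proposal is no less rigorous than the source on this point.
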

This type of limit describes the free energy of an incompressible phase assumed globally stable. The linearity of $V^{\infty}$ is a major drawback with respect to description of real incompressible mixtures. In the next situation, we can avoid this conclusion by suitably restricting the incompressible phase.
\begin{prop}\label{MAINNN}
Assume \eqref{Vregum} and \eqref{kregum}, \eqref{belle1} and \eqref{belle2} for $V^m$ and $k^m$. Suppose that $p_{\inf} \in \mathbb{R}$,
and, instead of \eqref{belle0}, that
\begin{enumerate}[(a')] \addtocounter{enumi}{+10}
 \item \label{belle0pr} For $x \in S_1^+$, there are real numbers $a(T, \, x) < p^0 <  b(T, \, x)$ such that $\partial_p V^m(T,\pi, \, x) \rightarrow 0$ for all $\pi \in ]a(T, \, x), \, b(T, \, x)[$.
\end{enumerate}
For the limit function $V^{\infty}$ we moreover assume that:
\begin{enumerate}[(a)]
\addtocounter{enumi}{+11}
 \item\label{bien5strich} $\partial_pV^{\infty}(T,p, \, \rho) < 0$ for all $p < a(T, \, x(\rho))$ and all $p > b(T,x(\rho))$. Moreover,  for all $x \in S^1_+$,
\begin{gather*}
\lim_{p\rightarrow + \infty} V^{\infty}(T,p, \, x) = 0 \, ,\\
\lim_{p\rightarrow p_{\inf}} \int_p^{a(T, \, x)} V^{\infty}(T, \, p^{\prime}, \, x) \, dp^{\prime} = + \infty, \quad \lim_{p\rightarrow + \infty} \int_{b(T, \, x)}^{p} V^{\infty}(T,p^{\prime}, \, x) \, dp^{\prime} = +\infty \, ;
\end{gather*}

 \item \label{bien6strich} $|\nabla_{\rho}V^{\infty}| \leq C$ in $]p_{\inf}, \, + \infty[\times \mathbb{R}^N_+$ for some constant $C>0$.
\end{enumerate}
Then the threshold functions $a$, $b$ are subject to the necessary condition
\begin{align*}
 \inf_{\pi \in [a(T, \, x), \, b(T, \, x)], \, \eta \in \mathbb{R}^N} \big(D^2_{\rho,\rho}V^{\infty}(T, \, p^0, \, x) + \pi \, D^2_{\rho,\rho}k(T, \, x)\big) \eta \cdot \eta \geq 0 \, \quad \text{ for all } x \in S^1_+ \, .\end{align*}
Moreover, for all $\rho \in \mathbb{R}^N_+$ subject to $V^{\infty}(T,p^0, \, \rho) \neq 1$, the equation $V^{\infty}(T,\pi,\rho) = 1$ possesses a unique solution $\pi = p(T,\rho)$ and
\begin{align*}
\begin{cases} p(T,\rho) > b(T,x(\rho)) & \text{ if } V^{\infty}(T,p^0, \, \rho) >1\, ,\\
 p(T,\rho) < a(T,x(\rho)) & \text{ if } V^{\infty}(T,p^0, \, \rho) < 1 \, .
\end{cases}
\end{align*}
At fixed $T$, the functions $\{f^m(T)\}_{m\in \mathbb{N}}$ converge uniformly on compact subsets of $\mathbb{R}^N_+$ to the continuous convex function
 \begin{align*}
  & f^{\infty}(T,\rho) := \\
  & \begin{cases}
                         k(T,\rho) + p^0 \, V^{\infty}(T,p^0,\, \rho) +\bar{V}(T,p(T,\rho), \,\rho) - p(T,\rho) & \text{ for } \rho  \text{ s.\ t.\ } V^{\infty}(T,p^0,\, \rho) \neq 1 \, ,\\
                         k(T,\rho) & \text{ for } \rho  \text{ s.\ t.\ } V^{\infty}(T,p^0,\, \rho) = 1
                         \end{cases}
 \end{align*}
and $f^{\infty}(T)$ is a strictly convex weak solution to \eqref{probPbulkneu}, \eqref{probPboundaneu} for $V = V^{\infty}$.
A point $\mu \in \mathbb{R}^N$ belongs to $\partial_{\rho} f^{\infty}(T,\rho)$ if and only if one of the two following conditions is valid:
\begin{itemize}
\item[\text{Either}] $V^{\infty}(T,p^0, \, \rho) \neq 1$ and $\mu = p^0 \, V^{\infty}_{\rho}(T,p^0,\, \rho) + \nabla_{\rho}k(T,\rho) + \bar{V}_{\rho}(T,p(T,\rho), \,\rho)$
\item[\text{or}] $V^{\infty}(T,p^0, \, \rho) = 1$ and $\mu = p \, V^{\infty}_{\rho}(T,p^0, \, \rho) + \nabla_{\rho}k(T,\rho)$ with a number $p$ subject to $a(T,x(\rho)) \leq p \leq b(T,x(\rho))$.
\end{itemize}
For all $\rho$ subject to $V^{\infty}(T,p^0, \, \rho) \neq 1$, the pressures converge as functions of $(T, \, \rho)$ and we have in fact $p^m(T, \rho) \rightarrow p(T,\rho)$.
\end{prop}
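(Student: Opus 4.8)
The plan is to run everything through the explicit solution formula of Proposition~\ref{formula}: at the fixed parameter $T$ each Legendre--type free energy reads $f^m(\rho) = k^m(\rho) + p^0\,V^m(p^0,\rho) + \bar V^m(p^m(\rho),\rho) - p^m(\rho)$, where $p^m(\rho)$ is the unique root of $V^m(\pi,\rho)=1$ (unique since $\partial_\pi V^m<0$). The whole proposition then reduces to the limiting behaviour of the implicit pressure $p^m(\rho)$ together with passage to the limit in this formula, fed by the uniform convergence of $V^m,\,D^2_{\rho,\rho}V^m$ and $k^m,\,D^2_{\rho,\rho}k^m$ granted by \eqref{belle1}, \eqref{belle2}. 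I would first settle the \emph{localisation} of the limiting pressure. For $\rho$ with $V^\infty(T,p^0,\rho)\neq 1$, assumption \eqref{belle0pr} forces $V^\infty(T,\cdot,\rho)$ to be constant and equal to $V^\infty(T,p^0,\rho)$ on $[a(T,x(\rho)),b(T,x(\rho))]$, while \eqref{bien5strich} makes it strictly decreasing outside with the prescribed asymptotics; the intermediate value theorem produces a unique root $p(T,\rho)$, lying in $]b,+\infty[$ when $V^\infty(T,p^0,\rho)>1$ and in $]p_{\inf},a[$ when $V^\infty(T,p^0,\rho)<1$. Because $\partial_\pi V^\infty<0$ at this root, the implicit function is stable under the uniform perturbation $V^m\to V^\infty$, giving $p^m(T,\rho)\to p(T,\rho)$ locally uniformly; this already delivers both the localisation and the pressure--convergence claims.

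\textbf{The threshold condition.} Next I would extract the necessary condition on $a,b$ by passing to the limit in the Hessian \eqref{Hessianh}. On the constraint surface $V^\infty(T,p^0,\rho)=1$ the root $p^m(\rho)$ accumulates in $[a,b]$, where $\partial_\pi V^m\to 0$; consequently the rank--one term $-(\partial_\pi V^m)^{-1}\,V^m_\rho\otimes V^m_\rho$ diverges to $+\infty$ along the direction $V^m_\rho$, whereas the three remaining terms converge. Positive definiteness of $D^2_{\rho,\rho}f^m$ therefore forces the bounded part, restricted to the orthogonal complement of $V^\infty_\rho(T,p^0,\rho)$, to remain positive semidefinite in the limit. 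Using that $V^\infty$ is $\pi$--independent on $[a,b]$, so that $\bar V^\infty_{\rho,\rho}(\pi,\rho)=(\pi-p^0)\,D^2_{\rho,\rho}V^\infty(T,p^0,\rho)$ there, this bounded part collapses to an affine pencil in the limiting pressure $\pi$, assembled from $D^2_{\rho,\rho}V^\infty(T,p^0,x)$ and $D^2_{\rho,\rho}k(T,x)$; its positive semidefiniteness for every $\pi\in[a,b]$ is exactly the asserted necessary condition.

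\textbf{Uniform convergence (the crux).} The heart of the proof is the uniform convergence $f^m\to f^\infty$, and the engine is the exact rewriting
\begin{equation*}
 f^m(\rho) = k^m(\rho) + p^0\,(V^m(p^0,\rho)-1) + \int_{p^0}^{p^m(\rho)} (V^m(p',\rho)-1)\,dp' \, ,
\end{equation*}
obtained by subtracting $\int_{p^0}^{p^m(\rho)}1\,dp'$ from $\bar V^m(p^m(\rho),\rho)-p^m(\rho)$. Away from the surface, $p^m\to p$ outside $[a,b]$ and each term converges to the corresponding term of $f^\infty$. The genuine obstacle is a neighbourhood of the surface, where $p^m(\rho)$ need not converge to a single value but only accumulate in the whole degenerate interval $[a,b]$. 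There, however, both correction terms still vanish: $V^m(p^0,\rho)\to 1$ by definition of the surface, and for $p'$ between $p^0$ and $p^m(\rho)$ --- both trapped near $[a,b]$ --- the integrand $V^m(p',\rho)-1\to 0$ uniformly over the bounded integration interval. Hence $f^m\to k=f^\infty$ uniformly across this boundary layer. Gluing the two regimes, using a uniform a priori bound on $p^m$ over compact sets, produces the claimed locally uniform convergence; I expect this boundary--layer control, where the implicit pressure degenerates, to be the main difficulty.

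\textbf{Weak solution, subdifferential, strict convexity.} Finally I would identify $f^\infty$. Off the surface it is $C^1$ and solves $-f^\infty+\rho\cdot\nabla_\rho f^\infty=p(T,\rho)$ with $V^\infty(T,p(T,\rho),\rho)=1$ directly by Proposition~\ref{formula} applied to $V^\infty$, and convexity passes to the limit from the $f^m$. On the surface, the one--sided gradient limits computed from \eqref{Gradienth}, together with $\bar V^\infty_\rho(\pi,\rho)=(\pi-p^0)\,V^\infty_\rho(T,p^0,\rho)$ on $[a,b]$, equal $b\,V^\infty_\rho(T,p^0,\rho)+\nabla_\rho k$ from the side $V^\infty(T,p^0,\rho)>1$ and $a\,V^\infty_\rho(T,p^0,\rho)+\nabla_\rho k$ from the side $V^\infty(T,p^0,\rho)<1$; convexity then identifies $\partial_\rho f^\infty(\rho)$ with the entire segment $\{p\,V^\infty_\rho(T,p^0,\rho)+\nabla_\rho k(\rho):a\le p\le b\}$, which is the stated characterisation, and for each such $\mu$ one checks that $-f^\infty(\rho)+\mu\cdot\rho$ equals a pressure $p\in[a,b]$, so that $V^\infty(T,-f^\infty+\mu\cdot\rho,\rho)=1$ and the weak equation \eqref{probPbulkneu}, \eqref{probPboundaneu} holds. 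Strict convexity follows because the bounded pencil is positive definite transversally to $\rho$ off the surface, while the kink supplies strictness across it.
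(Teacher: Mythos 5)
Your overall strategy reaches the right conclusions, but it is organised quite differently from the paper, and one step has a genuine gap. On the difference of route: the paper does not argue in the primal variables for the convergence. It verifies the hypotheses of the dual-variable Lemma \ref{asymptotics} (uniform $C^2$ bounds on the conjugates $g^m=(f^m)^*$, $C^1_{\rm loc}$ convergence of $g^m$, identification of $f^\infty=g^*$ as the epi-limit and of $\partial f^\infty$ through $\nabla g$), records the outcome as Lemma \ref{redonde}, and only afterwards upgrades epi-convergence to locally uniform convergence by showing that the pressures $p^m(\rho)$ in the representation \eqref{lesfem} are pointwise bounded and convergent off the surface, and then invoking the classical fact that a pointwise convergent sequence of finite convex functions converges uniformly on compacta. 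Your direct route --- the exact rewriting $f^m = k^m + p^0\,(V^m(p^0,\cdot)-1) + \int_{p^0}^{p^m}(V^m-1)\,dp'$ together with the observation that, since $V^m$ is decreasing in $p$ and $V^m(p^m(\rho),\rho)=1$, the integrand is dominated by $|V^m(p^0,\rho)-1|$ --- controls the boundary layer and yields the uniform convergence without any duality; it is more elementary and self-contained in that respect. What it does not give for free is the weak-solution property and the subdifferential characterisation, which the paper harvests from Lemma \ref{asymptotics}; your one-sided-gradient argument on the kink surface is correct in substance but needs the standard fact that the subdifferential of a convex function at a point is the closed convex hull of limits of gradients at nearby points of differentiability.

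The gap is in the derivation of the threshold condition. You fix $\rho$ on the surface $V^\infty(T,p^0,\rho)=1$ and let $m\to\infty$; but then $p^m(\rho)$ accumulates only at \emph{some} points of $[a,b]$, not at every point, so restricting $D^2_{\rho,\rho}f^m(\rho)$ to $\{V^m_\rho\}^{\perp}$ only yields nonnegativity of the pencil at those accumulation values. For instance, if $V^m(\pi,\rho)=V^\infty(T,p^0,\rho)\,(1+\epsilon_m\,(p^0-\pi))$ on $[a,b]$ with $\epsilon_m\downarrow 0$, then $p^m(\rho)\equiv p^0$ and your argument only tests $\pi=p^0$. To sweep the whole interval you must, as the paper does around \eqref{eigenschaft2}, evaluate the Hessian at the moving densities $\rho^m_\pi=y/V^m(\pi,y)$ for each prescribed $\pi\in[a,b]$: these converge to the same surface point while their exact pressures equal $\pi$, and passing to the limit there gives the condition for every $\pi$. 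Note also that the pencil produced by this computation is $D^2_{\rho,\rho}k+\pi\,D^2_{\rho,\rho}V^\infty(T,p^0,\cdot)$ (consistent with the paper's own proof), rather than the transposed expression displayed in the statement of the proposition; your noncommittal phrasing happens to be compatible with the correct one, but you should make the roles of the two Hessians explicit.
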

The condition \eqref{bien5strich} corresponds to requiring that we have a meaningful compressible model below and beyond the pressure thresholds $a, \, b$, while \eqref{bien6strich} is a simplifying assumption. Note also that, if we assume $b = +\infty$ (only a lower pressure threshold), then we obtain a mixed version of the limit. We will however not sketch this interesting case here.

%

In both cases we also recover, for variable temperature, the M\"uller--''paradox'' under the following general assumptions:
\begin{enumerate}[(a)]
\addtocounter{enumi}{+14}
\item \label{belle3} The function $T \mapsto k^m(T, \, \rho)$ is twice continuously differentiable on $]T_{\inf}, \, T_{\sup}[$ for all $\rho$, and the convergence $\partial^2_{T,T} k^m \rightarrow \partial^2_{T,T} k$ holds pointwise in $]T_{\inf}, \, T_{\sup}[ \times \mathbb{R}^N_+$;
 \item \label{belle4} The function $T \mapsto V^m(T, \, p, \, \rho)$ is twice continuously differentiable on $]T_{\inf}, \, T_{\sup}[$ for all $(p, \, \rho)$, and $\partial^2_{T,T} V^m \rightarrow \partial^2_{T,T} V^{\infty}$ pointwise in $]T_{\inf}, \, T_{\sup}[ \times ]p_{\inf}, \, + \infty[\times \mathbb{R}^N_+$;
 \item \label{belle5} For all $m$ the condition $\partial^2_{T,T} f^m < 0$ is valid on $]T_{\inf},T_{\sup}[  \times \mathbb{R}^N_+$. [In other words, the pair of $(V^m, \, k^m)$ satisfies the compatibility conditions necessary for $\partial^2_{T,T} f^m < 0$ being true (cp.\ \eqref{FEconcTun})].
 \end{enumerate}

\begin{lemma}\label{MUELLER}
If we adopt all assumptions of Proposition \ref{MAIN} and moreover \eqref{belle3}, \eqref{belle4} and \eqref{belle5}, then $\partial_T V^{\infty} = 0$.

If we adopt the assumptions of Proposition \ref{MAINNN}, and \eqref{belle3}, \eqref{belle4} and \eqref{belle5}, then $\partial_T V^{\infty}(T, \, \pi, \, \rho) = 0$ for all $(T, \, \pi, \, \rho)$ subject to $\pi \in [a(T,x(\rho)), \, b(T,x(\rho))]$.
\end{lemma}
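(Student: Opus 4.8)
The plan is to extract from the required concavity $\partial^2_{T,T}f^m<0$ the single sign‑definite term that becomes singular in the incompressible limit, and to show that its vanishing denominator forces $\partial_T V^m\to 0$. Recall that, by \eqref{FEconcTun} (equivalently, in the $(T,p,x)$ representation, by \eqref{FEconcT} together with the data restriction \eqref{compatoublie}), the condition $\partial^2_{T,T}f^m<0$ demanded in \eqref{belle5} is equivalent to
\begin{align*}
\frac{(\partial_T V^m(T,p^m(\rho),\rho))^2}{-\partial_p V^m(T,p^m(\rho),\rho)} < -\partial^2_{T,T}k^m(T,\rho) - p^0\,\partial^2_{T,T}V^m(T,p^0,\rho) - \int_{p^0}^{p^m(\rho)}\partial^2_{T,T}V^m(T,p',\rho)\,dp' \, ,
\end{align*}
where $p^m(\rho)$ is the equilibrium pressure solving $V^m(T,p^m(\rho),\rho)=1$, and the left‑hand side is nonnegative because $\partial_p V^m<0$ for the convex (Legendre–type) $f^m$ (cf.\ \eqref{vp<0}). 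The crucial feature is that the left‑hand side carries the \emph{square} of $\partial_T V^m$ over the vanishing quantity $-\partial_p V^m$: if I can keep the right‑hand side bounded along a suitable family of states, then, since $-\partial_p V^m\to 0$, I am forced to conclude $\partial_T V^m\to 0$ there, irrespective of the sign of $\partial_T V^m$.

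For the globally incompressible case of Proposition \ref{MAIN}, I would evaluate the inequality on the reference isobar. Fix $y\in S^1_+$ and take $\rho$ to be the unique point on the ray $\mathbb{R}_+ y$ with $V^m(T,p^0,\rho)=1$ (cf.\ \eqref{hatrho}), so that $p^m(\rho)=p^0$ and the integral term vanishes identically. By \eqref{belle3} and \eqref{belle4} the surviving right‑hand side $-\partial^2_{T,T}k^m - p^0\,\partial^2_{T,T}V^m(p^0)$ converges to a finite limit, hence is bounded uniformly in $m$; since $-\partial_p V^m(T,p^0,\rho)\to 0$ by \eqref{belle0}, the displayed inequality forces $\partial_T V^m(T,p^0,\rho)\to 0$, i.e.\ $\partial_T V^{\infty}(T,p^0,\rho)=0$ at every such $\rho$. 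As the rays through $S^1_+$ sweep out all of $\mathbb{R}^N_+$ and $\partial_T V^{\infty}$ is positively homogeneous of degree one in $\rho$, this gives $\partial_T V^{\infty}(T,p^0,\cdot)\equiv 0$; because $V^{\infty}$ is already pressure‑independent, $\partial_T V^{\infty}=0$ as claimed.

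For the locally incompressible case of Proposition \ref{MAINNN}, the pressure is no longer pinned to $p^0$ on the constraint surface, so instead of fixing $\rho$ I would \emph{prescribe} the equilibrium pressure. Fix $x\in S^1_+$ and $\pi\in\,]a(T,x),b(T,x)[$, and let $\rho^m$ be the point on the ray $\mathbb{R}_+ x$ with $V^m(T,\pi,\rho^m)=1$, so that $p^m(\rho^m)=\pi$ exactly. Now the integral runs over the \emph{fixed} bounded interval $[p^0,\pi]$; using \eqref{belle4} (pointwise convergence of $\partial^2_{T,T}V^m$ together with a local bound on this sequence) the integral converges, and with \eqref{belle3} the whole right‑hand side stays bounded as $m\to\infty$. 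Since $\pi$ lies in the incompressible window, \eqref{belle0pr} yields $-\partial_p V^m(T,\pi,\rho^m)\to 0$, and the same blow‑up argument forces $\partial_T V^{\infty}(T,\pi,\rho^{*})=0$, where $\rho^{*}=\lim_m\rho^m = x/V^{\infty}(T,p^0,x)$. Letting $x$ range over $S^1_+$ and invoking homogeneity in $\rho$, then letting $\pi$ range over $]a(T,x),b(T,x)[$ and passing to the endpoints by continuity, I obtain $\partial_T V^{\infty}(T,\pi,\rho)=0$ for all $\rho$ and all $\pi\in[a(T,x(\rho)),b(T,x(\rho))]$.

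The one substantive point — everything else being bookkeeping — is the squared, sign‑definite structure of the singular term $(\partial_T V^m)^2/(-\partial_p V^m)$. With a bare first power of $\partial_T V^m$ the case $\partial_T V^{\infty}<0$ would send this term to $-\infty$ and leave the concavity inequality harmless; it is precisely the interplay of convexity in $\rho$ (guaranteeing $\partial_p V^m<0$, hence the correct sign of the denominator) with concavity in $T$ that produces the square and closes the argument. The only remaining care concerns the integral $\int_{p^0}^{p^m}\partial^2_{T,T}V^m\,dp'$: in the global case it vanishes outright, while in the local case a uniform local bound on $\partial^2_{T,T}V^m$ over the compact set swept out by the converging states $\rho^m$ and the fixed interval $[p^0,\pi]$ — supplied by the pointwise convergence in \eqref{belle4} — suffices to keep it bounded.
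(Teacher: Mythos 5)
Your proposal is correct and follows essentially the same route as the paper: both exploit the concavity condition $\partial^2_{T,T}f^m<0$ in the form \eqref{FEconcTun}, evaluate it along states $\rho^m$ chosen so that the equilibrium pressure equals a prescribed value $p$ (i.e.\ $\rho^m=\hat{\rho}^m(T,p,x)$), and conclude that the nonnegative singular term $(\partial_T V^m)^2/|\partial_p V^m|$ can stay bounded only if $\partial_T V^m\rightarrow 0$. Your specialization to $p=p^0$ in the global case (killing the integral term) and the endpoint-continuity remark in the local case are minor cosmetic refinements of the paper's single unified argument.
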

The next sections are devoted to the proof of these statements.

\section{Passage to dual variables}\label{dualvar}

In this section we set up the main technical tool for performing asymptotic limits {\it in the main variables}. In the incompressible limit, the function $V$ is independent of pressure on parts of the interval $]p_{\inf}, \, + \infty[$.
We claim that it is useful to look at the equation \eqref{probPbulk} from the more abstract viewpoint of \eqref{probPbulkneu}, that is, we rephrase the Gibbs--Duhem equation as $V(-f + \rho \cdot \nabla_{\rho} f, \, \rho) = 1$ on $\mathbb{R}^N_+$. This formulation possesses another interpretation in terms of dual variables. Indeed, assuming that a co-finite solution $f$ of Legendre--type on $\mathbb{R}^N_+$ to the equation
\begin{align}\label{GIBBSDUHEM3}
p(\rho) = -f(\rho) + \sum_{i=1}^N \rho_i\, \partial_{\rho_i}f(\rho) \text{ in } \mathbb{R}^N_+
\end{align}
is at hand, then we are allowed to introduce on $\mathbb{R}^N$ the Legendre transform $g$ of $f$ via $ g(\nabla_{\rho} f(\rho)) := - f(\rho) + \rho \cdot \nabla_{\rho} f(\rho)$. Since $f$ is convex, we may also introduce on $\mathbb{R}^N$ the conjugated convex function $f^*$ via
\begin{align*}
 f^*(x) := \sup_{\rho \in \mathbb{R}^N_+} \{x \cdot \rho - f(\rho)\} \text{ for } x \in \mathbb{R}^N \, .
\end{align*}
Essentially smooth, strictly convex functions satisfy $g = f^*$ and $\nabla_{x} f^* = (\nabla_{\rho} f)^{-1}: \, \mathbb{R}^N \rightarrow \mathbb{R}^N_+$, which induces a helpful reformulation of the equations \eqref{probPbulkneu}.
\begin{lemma}\label{pdelemma}
Assume that $f$ is a function of Legendre--type on $\mathbb{R}^N_+$ solving \eqref{probPbulk} or, equivalently, \eqref{GIBBSDUHEM3}. Then, the function $g = f^* \in C^1(\mathbb{R}^N)$ is strictly convex on $\mathbb{R}^N$, and it solves the first order non-linear PDE
 \begin{align}\label{PDE}
 V(g, \, \nabla_x g) = 1  \text{ in } \mathbb{R}^N \, .
 \end{align}
Suppose, conversely, that $g \in  C^1(\mathbb{R}^N)$ is a strictly convex solution to \eqref{PDE} in $\mathbb{R}^N$ such that the image of $\nabla_x g$ is $\mathbb{R}^N_+$. Then $f := g^*$ is of Legendre--type on $\mathbb{R}^N_+$. Moreover, the identity $g(\nabla_{\rho} f) = \nabla_{\rho}  f \cdot \rho - f$ is valid on $\mathbb{R}^N_+$.
\end{lemma}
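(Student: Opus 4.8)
The plan is to treat the whole statement as an exercise in Legendre duality for convex functions of Legendre--type, as developed in \cite{rockafellar}, §26, whose relevant facts are already recalled in the text preceding the lemma. Throughout I freeze the temperature $T$ and suppress it. For the forward direction I would first record the consequences of $f$ being a co-finite function of Legendre--type on $\mathbb{R}^N_+$: by \cite{rockafellar}, Theorem~26.5, the gradient $\nabla_\rho f$ is a homeomorphism of $\mathbb{R}^N_+$ onto $\mathbb{R}^N$ (co-finiteness yields surjectivity onto all of $\mathbb{R}^N$), the conjugate $g = f^*$ is finite and continuously differentiable on all of $\mathbb{R}^N$, and $\nabla_x g = (\nabla_\rho f)^{-1}$. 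Since $f$ is simultaneously strictly convex and essentially smooth, the duality between these two properties shows that $g$ is again essentially smooth and strictly convex; in particular $g \in C^1(\mathbb{R}^N)$ is strictly convex, which is the first assertion.

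To obtain the PDE \eqref{PDE}, I would fix $x \in \mathbb{R}^N$ and set $\rho := \nabla_x g(x) \in \mathbb{R}^N_+$, so that $x = \nabla_\rho f(\rho)$. The Fenchel--Young equality at the point of contact reads $g(x) = x \cdot \rho - f(\rho) = \rho \cdot \nabla_\rho f(\rho) - f(\rho)$, and the Gibbs--Duhem equation \eqref{GIBBSDUHEM3} identifies the right-hand side with the pressure, i.e.\ $g(x) = p(\rho)$. By the very definition of the thermal equation of state (Definition~\ref{Def1}, equation \eqref{arbeitvolume}) one has $V(p(\rho), \, \rho) = 1$; substituting $p(\rho) = g(x)$ and $\rho = \nabla_x g(x)$ yields $V(g(x), \, \nabla_x g(x)) = 1$, which is precisely \eqref{PDE}.

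For the converse I start from $g \in C^1(\mathbb{R}^N)$ strictly convex. Because its effective domain is the open set $\mathbb{R}^N$, which has empty boundary, $g$ is automatically essentially smooth, hence of Legendre--type on $\mathbb{R}^N$. Conjugation then gives that $f := g^*$ is of Legendre--type on the interior of its effective domain, and the duality identities show $\nabla_x g : \mathbb{R}^N \to \operatorname{int}(\operatorname{dom} f)$ is a bijection; since the image of $\nabla_x g$ is assumed equal to $\mathbb{R}^N_+$, I conclude $\operatorname{int}(\operatorname{dom} f) = \mathbb{R}^N_+$, so $f$ is of Legendre--type on $\mathbb{R}^N_+$ as claimed. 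The final identity is again Fenchel--Young: for $\rho \in \mathbb{R}^N_+$, the point $x := \nabla_\rho f(\rho)$ satisfies $f(\rho) + g(x) = \rho \cdot x$ (using $g = g^{**} = f^*$), i.e.\ $g(\nabla_\rho f(\rho)) = \nabla_\rho f(\rho) \cdot \rho - f(\rho)$. One may note in passing that, combined with \eqref{PDE} evaluated at $x = \nabla_\rho f(\rho)$, where $\nabla_x g(x) = \rho$, this shows $f$ solves \eqref{probPbulkneu}.

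The routine calculations here are short; the point requiring care is purely structural, namely the bookkeeping of effective domains and the transfer of regularity and strict convexity across conjugation. Concretely, the hard part will be to verify in the converse that $\operatorname{int}(\operatorname{dom} g^*)$ coincides \emph{exactly} with the open orthant $\mathbb{R}^N_+$ and not merely with its closure, for which the hypothesis that $\nabla_x g$ has image equal to $\mathbb{R}^N_+$ is essential; and, in the forward direction, to keep track of the co-finiteness that guarantees $g$ is finite and of class $C^1$ on the whole of $\mathbb{R}^N$ rather than only on a proper subdomain.
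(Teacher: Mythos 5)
Your proposal is correct and follows essentially the same route as the paper's own (very terse) proof: identify $p(\rho)=g(\nabla_\rho f(\rho))$ via the Gibbs--Duhem relation and Fenchel--Young equality, substitute into $V(p(\rho),\rho)=1$, and delegate all regularity, strict convexity and domain statements to the main theorem on Legendre transforms (\cite{rockafellar}, Theorem 26.5). Your additional bookkeeping of co-finiteness and of $\operatorname{int}(\operatorname{dom} g^*)=\mathbb{R}^N_+$ only makes explicit what the paper leaves implicit.
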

\begin{proof}
Under the assumptions of the Lemma, the relation \eqref{GIBBSDUHEM3} means nothing else but $p(\rho) = g(\nabla_{\rho} f(\rho))$. For $x := \nabla_{\rho} f(\rho)$, that is, $\rho = (\nabla_{\rho} f)^{-1}(x)$, we have $V(p(\rho), \, \rho) = 1$ if and only if $V(g(x), \, \nabla_{x} g(x)) = 1$. The additional properties follow from the main theorem on the Legendre transformation (\cite{rockafellar}, Theorem 26.5).
\end{proof}
In the next statement, we shall treat asymptotic limits in general, considering the limit model of converging sequences of volume functions $\{V^m\}_{m \in \mathbb{N}}$ and $\{k^m\}_{m \in \mathbb{N}}$. Afterwards, in the next section, we will consider specifically the incompressible limit.
%
\begin{lemma}\label{asymptotics}
Consider sequences
\begin{itemize}
\item $\{p_{\inf}^m\}_{m \in \mathbb{N}}$ of real numbers;
\item $\{V^m\}_{m \in \mathbb{N}}\subset C^{1,2}(]p_{\inf}^m, \, + \infty[ \times \mathbb{R}^N_+) \cap C(]p_{\inf}^m, \, + \infty[ \times \overline{\mathbb{R}}^N_+)$ of positive, in the first variable strictly decreasing functions;
\item $\{k^m\}_{m \in \mathbb{N}} \subset C^2(\mathbb{R}^N_+)$ of positively homogeneous functions.
\end{itemize}
We define $\bar{k}^m(\rho) := k^m(\rho) + p^0 \, V^m(p^0, \, \rho)$, and we denote by $H^m(p, \, y)$ the matrix $$\bar{k}_{\rho,\rho}^m(\frac{y}{V^m(p, \, y)}) + \bar{V}^m_{\rho,\rho}(p, \, \frac{y}{V^m(p, \, y)}) - \frac{V^m(p, \, y)}{V^m_p(p, \, y)} \, (V^m_{\rho}(p, \, y) \otimes V^m_{\rho}(p, \, y)) \, .$$
We assume that
\begin{enumerate}[(1)]
\item \label{pinf} $p_{\inf} := \limsup_{m\rightarrow \infty} p_{\inf}^m$ belongs to $[-\infty, \, +\infty[$;
\item \label{pointwisev} There exists the pointwise limit $V(p, \, \rho) := \lim_{m \rightarrow \infty} V^m(p, \, \rho)$ for all $(p, \, \rho) \in ]p_{\inf}, \, + \infty[ \times \mathbb{R}^N_+$;
\item \label{pointwisek} There exists the pointwise limit $k(\rho) := \lim_{m\rightarrow \infty} k^m(\rho)$ for all $\rho \in \mathbb{R}^N_+$;
\item \label{compactum} $\liminf_{m\rightarrow \infty} \inf_{y \in S^1_+} |\bar{k}^m(y) + \bar{V}^m(p, \, y)| \rightarrow + \infty$ for $p \rightarrow p_{\inf}$ and $p \rightarrow \, + \infty$.
\end{enumerate}
Moreover, for all $p_{\inf} < p_1 < p_2 < + \infty$ and $y_0 > 0$, we assume that
\begin{enumerate}[(1)]\addtocounter{enumi}{+4}
\item \label{compactmore} $\liminf_{m\rightarrow \infty} \inf_{p \in ]p_1, \, p_2[} |\bar{k}^m_{\rho}(y) + \bar{V}^m_{\rho}(p, \, y)| \rightarrow + \infty$ for $y \rightarrow \partial S^1_+$;
\item \label{Convex} $\liminf_{m \rightarrow \infty} \inf_{p \in ]p_1, \, p_2[, \, y \in S^1_+ \, : \, \inf_{i=1,\ldots, N}y_i\geq y_0} \lambda_{\min}(H^m(p, \, y)) > 0$.
\end{enumerate}
If there is a unique classical co-finite solution $f^m$ of Legendre--type to \eqref{probPbulkneu}, \eqref{probPboundaneu}, then
\begin{enumerate}[(i)]
\item \label{claimexi} For each $m \in \mathbb{N}$, there is a unique strictly convex function $g^m \in C^2(\mathbb{R}^N)$ solution to \eqref{PDE} for $V = V^m$, with boundary condition $g^m = p^0$ on
\begin{align*}
S^m_0 := & \nabla_{\rho}(p^0 \, V^m(p^0, \, \cdot) + k^m(\cdot))(\mathbb{R}^N_+) \\
 = &\{x \in \mathbb{R}^N \, : \, x = p^0 \, V^m_{\rho}(p^0, \, r) + k^m_{\rho}(r) \text{ for a } r \in \mathbb{R}^N_+\} \, ;
 \end{align*}
\item \label{claimexilimit} There is a convex function $g \in C^2(\mathbb{R}^N)$ such that $g^m \rightarrow g$ in $C^1(K)$ for all compact $K \subset \mathbb{R}^N$, and $g$ is a solution to $V(g, \, \nabla g) = 1$ in $\mathbb{R}^N$, with $g(x) = p^0$ if and only if $x \in S_0$. Here $S_0$ is the set of accumulation points of the family of manifolds $\{S_0^m\}_{m \in \mathbb{N}}$;
\item \label{claimimage} The image of $\nabla g$ satisfies
\begin{align*}
\nabla g(\mathbb{R}^N) = \bigcup_{p \in]p_{\inf}, \, +\infty[} \{ \rho \in \mathbb{R}^N_+ \, : \, V(p, \, \rho) = 1\} \, ;
\end{align*}
\item \label{claimregu} The convex conjugated function $f := g^*$ is a strictly convex function, which is finite and subdifferentiable in the set $\nabla g(\mathbb{R}^N)$ and otherwise infinite. It is a weak solution to \eqref{probPbulkneu}, \eqref{probPboundaneu};
\item \label{claimepi} $f$ is the limit of $f^m$ in the sense of epi-convergence, meaning that $$f(\rho) = \inf\{\liminf_{m\rightarrow \infty} f^m(\rho^m) \, : \, \rho^m \rightarrow \rho\} \, .$$
\end{enumerate}
\end{lemma}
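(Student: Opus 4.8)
The plan is to transfer the entire problem to the dual (pressure) variables via the Legendre correspondence of Lemma \ref{pdelemma}, where the impending degeneracy of the compressibility becomes tractable. For each $m$ the hypothesis provides a co-finite classical solution $f^m$ of Legendre--type, and Lemma \ref{pdelemma} then yields $g^m := (f^m)^*$, a strictly convex solution of $V^m(g^m, \nabla_x g^m) = 1$ on $\mathbb{R}^N$ with $\nabla_x g^m(\mathbb{R}^N) = \mathbb{R}^N_+$; since $f^m \in C^2(\mathbb{R}^N_+)$ is strictly convex with positive definite Hessian, its conjugate is $C^2$ as well, and $D^2_x g^m = (D^2_\rho f^m)^{-1} = (H^m)^{-1}$. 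This establishes the existence and regularity in (i). The boundary value is read off from the reference isobar: on $S^m_0 = \{V^m(p^0, \cdot) = 1\}$ one has $p^m = p^0$ and $\bar{V}^m(p^0, \cdot) = 0$, so \eqref{Gradienth} reduces to $\nabla_\rho f^m = \bar{k}^m_\rho$ there; because $g^m(\nabla_\rho f^m(\rho)) = p^m(\rho)$ and the one--homogeneity of $\bar{k}^m$ makes $\bar{k}^m_\rho$ constant along rays, $g^m \equiv p^0$ precisely on $\nabla_\rho \bar{k}^m(\mathbb{R}^N_+)$, as claimed. Uniqueness of $g^m$ follows from that of $f^m$ through the bijectivity of conjugation on Legendre--type functions.

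The heart of the argument is the compactness assertion (ii), for which I would establish two families of uniform a priori estimates. Two--sided local bounds on $g^m$: from the solution formula \eqref{solform} one has $f^m(\rho) = \bar{k}^m(\rho) + \sup_p(\bar{V}^m(p, \rho) - p)$, so that for any fixed pressure $P$ the inequality $f^m(\rho) \geq \bar{k}^m(\rho) + \bar{V}^m(P, \rho) - P$ gives $g^m(x) \leq P + (\bar{k}^m(\cdot) + \bar{V}^m(P, \cdot))^*(x)$. As $V^m(p, \cdot)$ and $k^m$ are positively homogeneous of degree one in $\rho$, the conjugate on the right is the indicator of a closed convex set, and assumption \eqref{compactum} ensures that for any compact $K \subset \mathbb{R}^N$ one may pick $P$ (uniformly in $m$) so that $K$ lies inside this set, whence $g^m \leq P$ on $K$; the lower bound is immediate from $g^m(x) \geq x \cdot \rho_0 - k^m(\rho_0)$ for $\rho_0$ on the reference isobar. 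Gradient estimates: the identity $D^2_x g^m = (H^m)^{-1}$ together with the positive lower bound \eqref{Convex} on $\lambda_{\min}(H^m)$ shows that $\nabla_x g^m$ is equi--Lipschitz on compacta, provided the compositions of $\nabla_x g^m(x)$ remain away from $\partial S^1_+$; this confinement is secured by \eqref{compactmore}, which forces $|\nabla_\rho f^m|$ to blow up near the boundary of the simplex and hence keeps $\nabla_x g^m(K)$ within compositions in a compact subset of the open simplex and pressures in a bounded interval.

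With these estimates, Arzel\`a--Ascoli and the stability theory of convex functions yield a subsequence with $g^m \to g$ in $C^1(K)$ for every compact $K$, the limit $g$ being finite and convex on $\mathbb{R}^N$ with a bounded (limit) Hessian, giving the $C^2$ assertion. Upgrading $V^m \to V$ to uniform convergence on compacta (by monotonicity in $p$, convexity in $\rho$ and Dini's theorem), I would pass to the limit in $V^m(g^m, \nabla_x g^m) = 1$ to obtain $V(g, \nabla_x g) = 1$, and identify $\{g = p^0\}$ with the Painlev\'e--Kuratowski limit $S_0$ of the $S^m_0$ by combining the uniform convergence with the ray--constancy of $\bar{k}^m_\rho$. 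The image statement (iii) follows because $\nabla_x g(x)$ is the density solving $V(g(x), \rho) = 1$, so its range is the union of the level sets $\{V(p, \cdot) = 1\}$; uniqueness of this limit problem promotes subsequential to full convergence. Conjugating back, set $f := g^*$: claim (iv) on the effective domain, finiteness and subdifferentiability of $f$ is (iii) read through \cite{rockafellar}, and strict convexity of $f$ holds because the finite differentiable $g$ is essentially smooth. Finally, the epi--convergence (v) is not proved by hand but deduced from the (locally uniform, hence epi--) convergence $g^m \to g$ via the bicontinuity of the Legendre--Fenchel transform with respect to epi--convergence (see \cite{rockawets}).

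The main obstacle I anticipate is exactly the uniform a priori control in the second paragraph: keeping $g^m$ two--sidedly bounded and $\nabla_x g^m$ equi--Lipschitz on compacta \emph{uniformly in $m$} while the compressibility degenerates. The delicate feature is that the rank--one contribution $-(V^m_p)^{-1}\, V^m_\rho \otimes V^m_\rho$ in $H^m$ diverges in the incompressible regime, so only a \emph{lower} ellipticity bound \eqref{Convex} survives, and it must be exploited through the inverse Hessian $D^2_x g^m$ rather than through $f^m$ directly; simultaneously these estimates must be reconciled with the boundary behaviour at $\partial S^1_+$ encoded in \eqref{compactmore}. Once this compactness is in place, the passage to the limit in the PDE and the duality of epi--convergence are comparatively soft.
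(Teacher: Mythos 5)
Your overall strategy coincides with the paper's: pass to the conjugates $g^m=(f^m)^*$ via Lemma \ref{pdelemma}, derive uniform local bounds on $g^m$, $\nabla g^m$ and $D^2g^m$ from \eqref{compactum}, \eqref{compactmore} and \eqref{Convex}, extract a limit in $C^1$ of compact sets, and conjugate back. Two of your sub-arguments genuinely differ and are acceptable: you bound $g^m$ on compacta through the representation $f^m=\bar k^m+\sup_p\,(\bar V^m(p,\cdot)-p)$ and conjugate duality, where the paper instead reads the pressure bound directly off the identity $x\cdot y=\bar k^m(y)+\bar V^m(s,y)$ with $s=g^m(x)$; and you obtain the epi-convergence \eqref{claimepi} from the bicontinuity of the Legendre--Fenchel transform under epi-convergence rather than the paper's hands-on two-inequality argument. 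Both shortcuts are legitimate, the second arguably cleaner than the original.

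However, two parts of the statement are asserted rather than proved. First, in \eqref{claimexilimit} the implication $g(x)=p^0\Rightarrow x\in S_0$ does not follow from ``uniform convergence plus ray-constancy of $\bar k^m_\rho$'': one must actually produce points of $S_0^m$ converging to $x$. The paper does this by showing that $\lambda\mapsto g^m(x+\lambda\,1^N)$ attains every value in $]p_{\inf}^m,+\infty[$ (using $V^m(g^m,\nabla g^m)=1$ to control $|\nabla_x g^m(x+\lambda\,1^N)|_1$ as $\lambda\to\pm\infty$) and then that the resulting $\lambda_m$ with $g^m(x+\lambda_m\,1^N)=p^0$ tend to zero. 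Second, and more consequentially, in \eqref{claimimage} you only justify the inclusion $\nabla g(\mathbb{R}^N)\subseteq\bigcup_p\{\rho:V(p,\rho)=1\}$; the reverse inclusion --- that every such $\rho$ is actually attained --- is precisely what makes $f=g^*$ finite on the whole of that union in \eqref{claimregu}, and it is not automatic, since the range of the gradient can shrink in the limit. The paper proves it by exhibiting the explicit preimages $x^m=\nabla_\rho\bar k^m(y)+\bar V^m_\rho(p,y)$ of $y/V^m(p,y)$ and showing they remain bounded, which uses the convexity of $\bar k^m+\bar V^m(p,\cdot)$ (a consequence of \eqref{Convex}) together with pointwise convergence. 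Your appeal to ``uniqueness of the limit problem'' to upgrade subsequential to full convergence is likewise unsupported: no uniqueness theorem for weak solutions of \eqref{probPbulkneu}, \eqref{probPboundaneu} is available at this stage. Finally, verifying the boundary condition \eqref{probPboundaneu} for the limit $f$ (part of \eqref{claimregu}) again requires approximating points of $S_0$ by points of $S_0^m$, i.e., exactly the first missing argument.
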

\begin{proof}
\eqref{claimexi}. By assumption, there is for every $m \in \mathbb{N}$ a Legendre solution $f^m$ to the problem $V^m(-f^m + \rho \cdot \nabla f^m, \,\rho) = 1$ on $\mathbb{R}^N_+$ with $f^m = k^m$ if $V^m(p^0, \, \rho) = 1$. The convex conjugated functions $g^m$ are global solutions to $V^m(g^m,\, \nabla g^m) = 1$ (cf. Lemma \ref{pdelemma}).
Since $f^m = k^m$ for $\rho \in \Gamma_0 := \{\rho \, : \, V^m(p^0, \, \rho) = 1\}$, there must exist $\lambda = \lambda(\rho)$ such that
\begin{align*}
 \nabla_{\rho}f^m(\rho) = \nabla_{\rho}k^m(\rho) +  \lambda(\rho) \, V^m_{\rho}(p^0, \, \rho)  \text{ for all } \rho \in \Gamma_0 \, .
\end{align*}
We multiply with $\rho$, use the positive homogeneity of $k^m$ and $V^m(p^0, \cdot)$, and for all $\rho \in \Gamma_0$ it follows that
\begin{align*}
p^0 + f^m(\rho) & = \rho \cdot   \nabla_{\rho}f^m(\rho) =  k^m(\rho) +  \lambda(\rho) \, V^m(p^0, \, \rho) \\
& = k^m(\rho) +  \lambda(\rho) = f^m(\rho)+  \lambda(\rho) \, .
\end{align*}
Hence $\lambda(\rho) = p^0$ and $\nabla_{\rho}f^m(\rho) = \nabla_{\rho}k^m(\rho) +  p^0 \, V^m_{\rho}(p^0, \, \rho) = \nabla_{\rho}\bar{k}^m(\rho)$ for all $\rho \in \Gamma_0$. Since for $\rho \in \Gamma_0$, it holds that $g^m(\nabla_{\rho} f^m(\rho)) = p^0$, we have shown that $g_m = p^0$ on $S^m_0 = \nabla_{\rho}(k^m(\cdot)+  p^0 \, V^m(p^0, \cdot))(\mathbb{R}^N_+)$.

\eqref{claimexilimit}.
To prove the existence of a limit, we next show that for $K \subset \mathbb{R}^N$ compact, there is $K^{\prime} \subset \mathbb{R}^N_+$ compact and independent on $m$ such that $\nabla g^m (K) \subseteq K^{\prime}$. Consider $x \in K$ arbitrary and $m$ fixed. Due to the representation \eqref{Gradienth}, there are unique $ y = \nabla g^m(x)/|\nabla g^m(x)|_1 \in S^1_+$ and $s = p^m(\nabla g^m(x))$ such that $x = \nabla_{\rho}\bar{k}^m(y) + \bar{V}^m_{\rho}(s, \, y)$. Multiplying with $y$ we see that $x \cdot y = \bar{k}^m(y)  + \bar{V}^m(s, \, y)$, and this yields
\begin{align*}
 \inf_{j\geq m, \, y \in S^1_+} | \bar{k}^j(y)  + \bar{V}^j(s, \, y)| \leq |x|_{\infty} \leq \sup_{x \in K} |x|_{\infty} \, .
\end{align*}
We infer that $\liminf_{m \rightarrow \infty} \inf_{y \in S^1_+} | \bar{k}^m(y)  + \bar{V}^m(s, \, y)| \leq \sup_{x \in K} |x|_{\infty} $. Thus, owing to assumption \eqref{compactum}, there must exist certain constants $p_1, \, p_2$, depending only on $K$, such that $p_{\inf} < p_1 < s < p_2 < + \infty$. Since $x =  \nabla_{\rho}\bar{k}^m(y) + \bar{V}^m_{\rho}(s, \, y)$, we next infer that, for these numbers,
\begin{align*}
 \sup_{x \in K} |x| \geq \inf_{j \geq m, \, p \in ]p_1, \, p_2[}   |\nabla_{\rho}\bar{k}^j(y) + \bar{V}^j_{\rho}(p, \, y)| \, ,
\end{align*}
hence $\liminf_{m\rightarrow \infty} \inf_{p \in ]p_1, \, p_2[}   |\bar{k}^m_{\rho}(y) + \bar{V}^m_{\rho}(p, \, y)| \leq  \sup_{x \in K} |x|$. This time invoking \eqref{compactmore}, we see that $y$ remains in a compact subset of $S^1_+$ being independent on $m$. We conclude for $x \in K$ that the point $(y, \, s)$ such that $x = \nabla_{\rho}\bar{k}^m(y) + \bar{V}^m_{\rho}(s, \, y)$ is in some fixed compact subset of $S^1_+ \times ]p_{\inf}, \, + \infty[$, independently on $m$.

Consider now a point $\rho$ in the image $\nabla g^m(K) \subset \mathbb{R}^N_+$. Then $\rho = |\rho|_1 \, y$ where, as just shown, $y$ belongs to some compact subset of $S^1_+$ independently of $m$. Since $V^m(s, \, \rho) = 1$, it follows that $|\rho|_1  = 1/V^m(s, \, y)$. Hence
\begin{align*}
\frac{1}{V^m(p_1, \, y)} \leq |\rho|_1 \leq  \frac{1}{V^m(p_2, \, y)} \,
\end{align*}
where $p_1, \, p_2$ depend only on $K$. This shows that $\nabla g^m(K) \subset K^{\prime}$ for all $m$, where $K^{\prime}$ is a compact subset of $\mathbb{R}^N_+$.

Now, we invoke the assumption \eqref{Convex}. Note that for $\rho \in \mathbb{R}^N_+$, the formula \eqref{Hessianh} yields $D^2f^m(\rho) = H^m(s, \, y)$ where $y = \rho/|\rho|_1$ and $s = p^m(\rho)$. Assumption \ref{Convex} implies that there is $m_0 = m_0(K)$ such that, for all $m \geq m_0$,
\begin{align}\label{liminfconv}
\inf_{\rho \in K^{\prime}} \lambda_{\min}(D^2f^m(\rho)) \geq \inf_{j \geq m, \, p_1 \leq s \leq p_2, \, \rho \in K^{\prime}} \lambda_{\inf}(H^j(s, \, \frac{\rho}{|\rho|_1}))  > 0 \, .
\end{align}
For $\rho \in K^{\prime}$ and $x \in K$ connected via $\rho = \nabla g^m(x)$, the Hessian of $f^m$ and $g^m$ are inverse to each other: $D^2g^m(x) = [D^2f^m(\rho)]^{-1}$.
Hence \eqref{liminfconv} yields $ \sup_{x \in K, \, j \geq m} \lambda_{\max}(D^2g^j(x)) < + \infty$ for all $m \geq m_0(K)$.

We thus have shown that $\limsup_{m \rightarrow \infty} \|g^m\|_{C^2(K)} < + \infty$ for any compact subset of $\mathbb{R}^N$.
This implies the existence of a convex limit function $g \in C^2(\mathbb{R}^N)$ such that (passing to a subsequence if necessary) $g^m \rightarrow g$ and $\nabla g^m \rightarrow \nabla g$ uniformly on compact subsets of $\mathbb{R}^N$. As a corollary, the limit satisfies $V(g, \, \nabla g) =1$ in $\mathbb{R}^N$, where $V$ is the pointwise limit volume function.

To finish the proof of \eqref{claimexilimit}, it remains to discuss the boundary conditions. We consider the accumulation set $S_0 = \{ x \in \mathbb{R}^N \, : \, \exists \, m_k \rightarrow \infty, \, x^k \in S^{m_k}_0, \, x^k \rightarrow x\}$. If $x \in S_0$, then obviously $g^{m_k}(x^k) = p^0$, which implies that $g(x) = p^0$ by the uniform convergence. If now $x \in \mathbb{R}^N$ is fixed and such that $g(x) = p^0$, we can find for all $m \in \mathbb{N}$ a number $\lambda_m$ such that $g^m(x+\lambda_m \, 1^N) = p^0$. To show this, we remark that $g_m(x+ \lambda \, 1^N) - g^m(x) \geq  \lambda \, \nabla_xg^m(x) \cdot 1^N = \lambda \, |\nabla_xg^m(x)|_1$ by the convexity of $g^m$. Thus $g_m(x+ \lambda \, 1^N) \rightarrow +\infty$ for $\lambda \rightarrow +\infty$. Moreover, $g_m(x+ \lambda \, 1^N) \leq g^m(x) + \lambda \, |\nabla_xg^m(x+\lambda \, 1^N)|_1$. Hence $|\nabla_xg^m(x+\lambda \, 1^N)|_1 \rightarrow 0$ for $\lambda \rightarrow - \infty$, since otherwise $p_{\inf}^m = \inf g^m = -\infty$, a contradiction to the choice of finite $p_{\inf}^m$. Now we use the identity $V^m(g^m(x+\lambda \, 1^N), \, \nabla_x g^m(x+\lambda \, 1^N)) = 1$, which implies that
\begin{align*}
 V^m(g^m(x+\lambda \, 1^N), \, y^m_{\lambda}) = \frac{1}{ |\nabla_xg^m(x+\lambda \, 1^N)|_1}, \, \quad y^m_{\lambda} := \frac{\nabla_x g^m(x+\lambda \, 1^N)}{ |\nabla_xg^m(x+\lambda \, 1^N)|_1} \, .
\end{align*}
Passing to the limit $\lambda \rightarrow - \infty$, it follows that $\lim_{\lambda \rightarrow - \infty} V^m(g^m(x+\lambda \, 1^N), \, y^m_{\lambda}) = +\infty$, which is impossible if $g^m(x+\lambda \, 1^N)$ does not tend to $p_{\inf}^m$. Thus, $\lambda \mapsto g^m(x+ \lambda \, 1^N)$ attains every value in $]p_{\inf}^m, \, + \infty[$ and we must have $p^0 = g^m(x+ \lambda_{m} \, 1^N)$ for some $\lambda_m$. Since
\begin{align*}
|p^0 - g^m(x)| =  |\lambda_m| \, \int_{0}^1 \nabla_xg^m(x + \theta\, \lambda_m \, 1^N) \cdot 1^N \, d\theta \geq \frac{|\lambda_m|}{\sup_{p = \theta \, g^m(x) + (1-\theta) \, p^0, \, y \in S^1_+} V^m(p, \, y)} \, ,
\end{align*}
we see that $\lambda_m \rightarrow 0$. Thus $x+\lambda_m \, 1^N \in S_0^m$ converges to $x$, which was chosen an arbitrary point of the level-set $g(x) = p^0$. It follows that $S_0 = \{x \, : \, g(x) = p^0\}$.

\eqref{claimimage}. Next we characterize the image of $\nabla g$. If $ \rho \in \nabla g(\mathbb{R}^N)$, then $\rho = \nabla g(x)$ for some $x$. Since $\{x\}$ is trivially a compact set, we prove as above that $\rho \in \mathbb{R}^N_+$, and for $p = g(x)$ we have $V(p, \, \rho) = 1$. Thus $\nabla g(\mathbb{R}^N) \subseteq \bigcup_{p> p_{\inf}} \{ \rho \in \mathbb{R}^N_+ \, : \, V(p, \, \rho) = 1\}$.

We show the reverse inclusion. Since $\nabla g^m(\mathbb{R}^N) = \mathbb{R}^N_+$, the image of $\nabla g^m$ contains all vectors of the form $y/V^m(p, \, y)$ where $y \in S^1_+$ and $p \in ]p_{\inf}, \, + \infty[$ are arbitrary. In fact one can show as above that
\begin{align}\label{machin}
\frac{ y}{ V^m(p, \, y)} = \nabla g^m(x^m)\, \text{ for } x^m = \nabla_{\rho}\bar{k}^m(y) + \bar{V}^m_{\rho}(p,\, y) = \nabla_{\rho} f^m(y / V^m(p, \, y) ) \, .
\end{align}
For $i = 1,\ldots,N$, the convexity of $\bar{k}^m$ yields $t \, \bar{k}^m_{\rho_i}(y)  \leq \bar{k}^m(y +t \, e^i) - \bar{k}^m(y)$ for all $|t|$ smaller than half of the distance of $y$ to $\partial \mathbb{R}^N_+$. Since $\bar{k}^m$ converges pointwise, we see that $\{|\nabla_{\rho} \bar{k}^m(y)|\}$ must be bounded. Due to the assumption \eqref{Convex}, the Hessian of $\rho \mapsto \bar{k}^m(\rho) + \bar{V}^m(p,\, \rho)$ is positive semi-definite. Thus, this function is convex, and the same argument as just seen for bounding $|\nabla_{\rho}\bar{k}^m(y)|$ implies that $|\nabla_{\rho} \bar{k}^m(y) + \bar{V}^m_{\rho}(p,\, y)|$ is also bounded. By this argument, also the sequence $\{x^m\}$ in \eqref{machin} is bounded, hence there is some accumulation point $x = \lim x_m$. Passing in \eqref{machin} to the limit, we find that $\frac{y}{ V(p, \, y)} = \nabla g(x)$. Since $y \in S^1_+$ and $p\in]p_{\inf}, \, \infty[$ were arbitrary, $\nabla g(\mathbb{R}^N)$ contains all vectors $\rho \in \mathbb{R}^N_+$ such that $V(p, \, \rho) = 1$. This proves that
\begin{align*}
 \nabla g(\mathbb{R}^N) = \bigcup_{p > p_{\inf}} \{ \rho \in \mathbb{R}^N_+ \, : \, V(p, \, \rho) = 1\}\, .
\end{align*}

\eqref{claimregu}. We define $f = g^*$. Since $g$ is smooth, $f$ is essentially strictly convex (cf. \cite{rockafellar}, Th. 26.3). Moreover, $f$ is finite and subdifferentiable at every $\rho \in \nabla g(\mathbb{R}^N)$. A point $x$ belongs to $\partial f(\rho)$ if and only if $\rho = \nabla g(x)$ and $g(x) = -f(\rho) + x\cdot \rho$ (\cite{rockafellar}, Th. 23.5). By these means, we show that
\begin{align*}
V(-f(\rho) + \rho \cdot x, \, \rho) = 1 \text{ for all }  \rho \in \mathbb{R}^N_+, \, x \in \partial f(\rho) \, .
\end{align*}
Note that in a point $\rho \in \mathbb{R}^N_+$ such that $\rho \not\in \nabla g(\mathbb{R}^N)$, we have $\partial f(\rho) = \emptyset$ and the latter condition holds vacuously.
If $-f(\rho) + \rho \cdot x = p^0$ for some $\rho \in \mathbb{R}^N_+, \, x \in \partial f(\rho)$, then $\rho = \nabla g(x)$ and $g(x) = p^0$. which means that necessarily $x \in S_0$. Thus, we can approximate $x$ with points $x^m  \in S^m_0$. For $\rho^m := \nabla g^m(x^m)$, we have by construction $k^m(\rho^m) = f^m(\rho^m) = \rho^m \cdot x^m - p^0$. Thus, as $m \rightarrow \infty$ we find that $k(\nabla g(x)) = \lim_{m \rightarrow \infty} \rho^m \cdot x^m - p^0 = \nabla g(x) \cdot x - p^0 = f(\nabla g(x))$, where $k$ is the pointwise and uniform limit of the convex functions $k^m$. This achieves to prove that $f$ is a weak solution to the bvp \eqref{probPbulkneu}, \eqref{probPboundaneu}.
\eqref{claimepi}. Let us show that the function $\tilde{f}(\rho) := \inf\{\liminf_{m \rightarrow \infty} f^m(\rho^m) \, : \, \rho^m \rightarrow \rho\}$ is nothing else but $f$.
At first, the definition of convex conjugates yields $f^m(\rho^m) \geq \rho^m \cdot x - g^m(x)$ for all $x \in \mathbb{R}^N$, and therefore $\liminf_{m\rightarrow \infty} f^m(\rho^m) \geq g^*(\rho) = f(\rho)$ for arbitrary sequences $\rho^m \rightarrow \rho$. This shows that $\tilde{f} \geq f$. On the other hand, for every fixed $x \in \mathbb{R}^N$ and $\rho^m = \nabla g^m(x)$, the definition of the subdifferential of $g^m$ yields $f^m(\rho^m) = x \cdot \nabla g^m(x) - g^m(x) \rightarrow  x \cdot \nabla g(x) - g(x) = f(\nabla g(x))$. Thus for all $\rho \in \nabla g(\mathbb{R}^N)$, we obtain that $\inf\{\liminf_{m \rightarrow \infty} f^m(\rho^m) \, : \, \rho^m \rightarrow \rho\} \leq f(\rho)$. For $\rho \not\in  \nabla g(\mathbb{R}^N)$, we have $f(\rho) = + \infty$ and $\tilde{f} \leq f$ is obviously true. Thus $\tilde{f} = f$.
\end{proof}

\section{Incompressible limits}\label{incompressible}
We now consider the case of a small isothermal compressibility, which we mathematically express by choosing a sequence of functions $\{V^m\}_{\in \mathbb{N}}$ such that $\partial_p V^m \rightarrow 0$ as $m \rightarrow \infty$. In which precise meaning we assume this convergence is stated immediately below.

For a given positively homogeneous convex function $k^m$, we solve the problem
\begin{align}\label{GDm}
 V^m(-f^m + \rho \cdot \nabla f^m, \, \rho) = & 1 \text{ in } \mathbb{R}^N_+\, , \quad  f^m =  k^m  \text{ for } V^m(p^0, \, \rho) = 1 \, ,
\end{align}
and shall exhibit the mathematical consequences of the assumptions that 1) the problems \eqref{GDm} possesses a classical solution $f^m$ of Legendre--type in the sense of Definition \ref{DEF} and 2) $\partial_p V^m \rightarrow 0$.

\subsection{Proof of Proposition \ref{MAIN}}

Under the assumption \eqref{belle0}, we can readily verify that the limit function $V^{\infty}$ is globally independent of $p$ and belongs to $C^{\infty,2}(\mathbb{R} \times \mathbb{R}^N_+) \cap C^{\infty,0}(\mathbb{R} \times \overline{\mathbb{R}}^N_+)$.

Adopting the assumptions of Proposition \ref{MAIN}, we find unique classical co-finite solutions of Legendre--type to \eqref{GDm}. We go for applying Lemma \ref{asymptotics} and want to check the assumptions \eqref{compactum}, \eqref{compactmore} and \eqref{Convex}.

In order to first verify \eqref{compactum}, we recall the definition of $\bar{k}^m(\cdot) = k^m(\cdot) + p^0 \, V^m(p^0,\cdot)$ in order to bound
\begin{align}\label{eigenschaft0}
| \bar{k}^m(y) + \bar{V}^m(p, \, y)|  & \geq |\bar{V}^m(p, \, y)| - |k^m(y)| - |p^0| \, |V^m(p^0, \, y)| \\
 & \geq \int_{p^0}^p V^m(t, \, y) \, dt - |k^m(y)| -|p^0| \, |V^m(p^0, \, y)|  \, .\nonumber
\end{align}
Therefore, exploiting the uniform convergence \eqref{belle1} and \eqref{belle2},
\begin{align}\label{eigenschaft}
& \liminf_{m \rightarrow \infty} \inf_{y \in S^1_+} |\bar{k}^m(y) + \bar{V}^m(p, \, y)| \nonumber\\
& \quad \geq  \liminf_{m \rightarrow \infty} \left|\int_{p^0}^p \inf_{y \in S^1_+} V^m(t, \, y) \, dt\right| - \limsup_{m \rightarrow \infty} \sup_{y \in S^1_+} (|k^m(y)| + |p^0| \, V^m(p^0, \, y))  \nonumber\\
& \quad =  |p-p^0| \,  \inf_{y \in S^1_+} V^{\infty}(y) - \sup_{y \in S^1_+} (|k(y)|  + |p^0| \, V^{\infty}(p^0, \, y)) \, .
\end{align}
Since we assume \eqref{belle1}, we obtain $\inf_{y \in S^1_+} V^{\infty}(y) = \min_{y \in \overline{S}^1_+} V^{\infty}(y) > 0$ because $V^{\infty}$ is assumed strictly positive. This verifies the assumption \eqref{compactum} of Lemma \ref{asymptotics}.

We next check \ref{asymptotics}, \eqref{Convex}, by proving a bound from below for $ \liminf_{m \rightarrow \infty} \lambda_{\min}(H^m(p, \, y))$ on compact subsets of $]-\infty, \, +\infty[ \times S^1_+$ with $p \in [p_1, \, p_2]$ and $\inf_{i=1,\ldots,N} y_i \geq y_0 >0$. Recall that
\begin{align*}
 H^m(p, \, y) = D^2\bar{k}^m(\frac{y}{V^m(p, \, y)}) + \bar{V}^m_{\rho,\rho}(p, \, \frac{y}{V^m(p, \, y)}) - \frac{V^m(p, \, y)}{V^m_p(p, \, y)} \, V^m_{\rho}(p, \, y) \otimes V^m_{\rho}(p, \, y) \, .
\end{align*}
Using the positive homogeneity of $\bar{V}^m$ in the second argument and the convergence \eqref{belle1}, \eqref{belle2}, it is readily verified for $(p, \, y) \in \mathbb{R} \times S^1_+$ that
\begin{align}\label{rhorho1}
 \bar{V}^m_{\rho,\rho}(p, \, \frac{y}{V^m(p, \, y)}) =  V^m(p, \, y) \,  \bar{V}^m_{\rho,\rho}(p, \, y)\rightarrow (p-p^0) \, V^{\infty}(y) \,    D^2V^{\infty}(y) \, .
\end{align}
Moreover, using that $k^m$ and $V^m(p^0,\cdot)$ are positively homogeneous, the convergence \eqref{belle2} yields
\begin{align}\label{rhorho2}
 D^2\bar{k}^m(\frac{y}{V^m(p, \, y)}) & =  D^2k^m(\frac{y}{V^m(p, \, y)}) + p^0 \, V^m_{\rho,\rho}(p^0, \, \frac{y}{V^m(p, \, y)}) \\
 = & V^m(p, \, y) \, (D^2k^m(y) + p^0 \, V^m_{\rho,\rho}(p^0, \, y)) \rightarrow V^{\infty}(y) \, (D^2k(y) + p^0 \, D^2V^{\infty}(y)) \, .\nonumber
\end{align}
Consider for $\eta \in \mathbb{R}^N$ arbitrary the vectors $\xi^m := \eta - (V^m_{\rho}(p, \, y) \cdot \eta/V^m(p, \, y)) \, y$. Since $V^m_{\rho} \rightarrow V^{\infty}_{\rho}$ uniformly of compact sets of $\mathbb{R} \times S^1_+$ (this follows easily from \eqref{belle1}, \eqref{belle2}), we see that
\begin{align*}
 \xi^m \rightarrow \xi := \eta -\frac{V^{\infty}_{\rho}(y) \cdot \eta}{V^{\infty}(y)} \, y  \, .
\end{align*}
Using that $V^m_{\rho}(p, \, y) \cdot \xi^m = 0$, we check by means of \eqref{rhorho1} and \eqref{rhorho2} that
\begin{align}\label{eigenschaft2}
& H^m(p, \, y) \xi^m \cdot \xi^m =  (D^2\bar{k}^m(\frac{y}{V^m(p, \, y)}) + \bar{V}^m_{\rho,\rho}(p, \, \frac{y}{V^m(p, \, y)})) \xi^m \cdot \xi^m\nonumber\\
&\quad \rightarrow  V^{\infty}(y) \, D^2(k + p \,  V^{\infty})(y) \xi \cdot \xi =  V^{\infty}(y) \, D^2(k + p \,  V^{\infty})(y) \eta \cdot \eta\, .
 \end{align}
If now $D^2V^{\infty}(y)$ would possess a negative eigenvalue, then we would find some $p \in ]0, \, +\infty[$ such that, in view of \eqref{eigenschaft2}, the matrix $H^m(p, \, y)$ already possesses a strictly negative eigenvalue for $m$ sufficiently large. But recall that $H^m(p, \, y) = D^2f^m(y/V^m(p, \, y))$, with $D^2f^m$ strictly positive definite on $\mathbb{R}^N_+$ by assumption. Hence, avoiding a contradiction is possible only if all eigenvalues of $D^2V^{\infty}(y)$ are nonnegative.

Next, for every positive eigenvalue $\ell$ of $D^2V^{\infty}(y)$, we can choose some eigenvector $\eta$ such that $D^2V^{\infty}(y) \eta \cdot \eta = \ell$. With the help of \eqref{eigenschaft2} again, it follows that $ H^m(p, \, y) \xi^m \cdot \xi^m < 0$ if $p$ is sufficiently small, namely if $\ell \, p < - \lambda_{\max}(D^2k^m(y))$ and $m$ is sufficiently large. Since $f^m$ is a $C^2$ strictly convex function for all $m$, all eigenvalues of $D^2V^{\infty}(y)$ can only be trivial, meaning that the limit function $V^{\infty}$ must be linear in $\rho$.

In other words, we have shown that there exists a vector $\bar{\upsilon} \in \mathbb{R}^N$ such that $V^{\infty}(\rho) = \bar{\upsilon} \cdot \rho$ on $\mathbb{R}^N_+$. In order to conclude that $\bar{\upsilon} \in \mathbb{R}^N_+$, it is sufficient to recall that, by assumption, $V^{\infty}(y) > 0$ for all $y \in \overline{S}^1_+$.

We now come to the direct verification of the assumption \eqref{Convex} of Lemma \ref{asymptotics}. For $\eta \in S^2_+$ arbitrary,
\begin{align}\label{darstellungbis}
 H^m(p, \, y) \eta \cdot \eta = V^m(p, \, y) \, (D^2\bar{k}^m(y) + \bar{V}^m_{\rho,\rho}(p, \, y)) \eta \cdot \eta - \frac{V^m(p, \, y)}{V^m_p(p, \, y)} \, |V^m_{\rho}(p, \, y) \cdot \eta|^2 \, .
\end{align}
Consider arbitrary $p_{\inf} < p_1 < p_2 < + \infty$ and $y_0 > 0$. Since $V^{\infty}$ is linear, $ D^2\bar{V}^{\infty}(y) = 0$. For $m\rightarrow +\infty$, the assumption \eqref{belle2} implies that
\begin{align}\label{tozero}
\sup_{p \in [p_1,p_2], \, \inf y \geq y_0} \lambda_{\max}(\bar{V}^m_{\rho,\rho}(p, \, y)) \leq \sup_{p \in [p_1, \, p_2], \, \inf  y \geq y_0} |\bar{V}^m_{\rho,\rho}(p, \, y) - D^2\bar{V}^{\infty}(y)| \rightarrow 0 \, .\end{align}
Similarly, $\sup_{p \in [p_1,p_2], \, \inf y \geq y_0} \lambda_{\max}(D^2V^m(p^0, \, y)) \rightarrow 0$. Using \eqref{belle2} again, we then verify that the matrix $D^2\bar{k}^m(y) + \bar{V}^m_{\rho,\rho}(p, \, y)$ possesses $N-1$ strictly positive eigenvalues for large $m$. In particular, this matrix is strictly positive definite on $\{y\}^{\perp}$. For fixed $y$, we now choose some orthonormal basis $\{\xi^1,\ldots,\xi^{N-1}, \, y/|y|\}$ of $\mathbb{R}^N$. For arbitrary $\eta \in \mathbb{R}^N$, it is then possible to find coefficients $a_1,\ldots,a_{N-1}$ and $b$ to represent $$ \eta = \sum_{i=1}^{N-1} a_i \, (\xi^i - \frac{V^m_{\rho}(p, \, y)\cdot \xi^i}{V^m(p, \, y)} \, y) + b \, y \, .$$
Due to \eqref{belle1}, \eqref{belle2}, $V^m_{\rho} \rightarrow \bar{\upsilon}$ and $V^m \rightarrow V^{\infty} > 0$ uniformly on compact subsets of $\mathbb{R} \times S^1_+$. Hence, for all $y \in S^1_+$ such that $\inf y \geq y_0 > 0$ and all $p \in [p_1, \, p_2]$, there is a constant independent on $m$ such that $|\eta| \leq c \, (|a|+|b|)$. Using \eqref{darstellungbis} it is readily seen with $\Pi = [\xi^1, \ldots,\xi^{N-1}]$ that\begin{align*}
 H^m(p, \, y) \eta \cdot \eta = & V^m(p, \, y) \, \Pi^{\sf T}(D^2\bar{k}^m(y) + \bar{V}^m_{\rho,\rho}(p, \, y)) \Pi \, a \cdot a - \frac{(V^m(p, \, y))^3}{V^m_p(p, \, y)} \, |b|^2 \\
 \geq & V^m(p, \, y) \, [\lambda_{\min}^+(D^2\bar{k}^{m}(y))-\lambda_{\max}(\bar{V}^m_{\rho,\rho}(p, \, y))] \, |a|^2 - \frac{(V^m(p, \, y))^3}{V^m_p(p, \, y)} \, |b|^2\\
 \geq & \frac{1}{c} \, V^m(p, \, y) \, \min\{[\lambda_{\min}^+(D^2\bar{k}^m(y))-\lambda_{\max}(\bar{V}^m_{\rho,\rho}(p, \, y)),\, - \frac{(V^m(p, \, y))^2}{V^m_p(p, \, y)}\} \, |\eta|^2 \, .
 \end{align*}
 Here $\lambda_{\min}^+$ denotes the smallest strictly positive eigenvalue of a symmetric, positive semi--definite matrix. Since for some $c_0 > 0$ independent on $m$, we have $\inf_{p \in [p_1,p_2], \, \inf y \geq y_0} V^m(p, \, y) \geq c_0$, we deduce that
\begin{align*}
& \inf_{p \in [p_1,p_2], \, \inf y \geq y_0} \lambda_{\min}(H^m(p, \, y)) \\
& \qquad \geq \tilde{c} \, \min_{p \in [p_1,p_2], \, \inf y \geq y_0}\Big\{\lambda_{\min}^+(D^2\bar{k}^m(y))-\lambda_{\max}(\bar{V}^m_{\rho,\rho}(p, \, y)),\, \frac{(V^m(p, \, y))^2}{V^m_p(p, \, y)}\Big\} \, .
\end{align*}
Making use of the fact that $D^2\bar{k}^m(y)$ is strictly positive on $\{y\}^{\perp}$ (cf.\ \eqref{belle2}) and of \eqref{tozero}, we infer that $H^m$ satisfies the requirement \eqref{Convex} of Lemma \ref{asymptotics}.

We also notice that the assumptions \eqref{belle1} and \eqref{belle2} imply that $V^m_{\rho} \rightarrow V^{\infty}_{\rho} = \bar{\upsilon}$ uniformly on compact subsets of $\mathbb{R} \times \mathbb{R}^N_+$. Thus, for all $-\infty < p_1 \leq p \leq p_2 < + \infty$ and $y \in S^1_+$, we see that
\begin{align*}
\limsup_{m \rightarrow \infty} | \bar{V}^m_{\rho}(p, \, y)| \leq (|p_1|+|p_2|) \, |V^{\infty}_{\rho}(y)| = (|p_1|+|p_2|) \, |\bar{\upsilon}| \, .
\end{align*}
It therefore also follows that
\begin{align}\label{eigenschaft3}
 \liminf_{m \rightarrow \infty} \inf_{p \in ]p_1, \, p_2[ } |\nabla_{\rho}\bar{k}^m(y) + \bar{V}^m_{\rho}(p, \, y) | \geq |\nabla_{\rho}k(y)| - |p^0| \,|\bar{\upsilon}| -  (|p_1|+|p_2|) \, |\bar{\upsilon}| \, ,
\end{align}
and \ref{asymptotics}, \eqref{compactmore} follows because we assume in \eqref{belle2} that the limit $k$ is essentially smooth on $S^1_+$.

The assumptions of Lemma \ref{asymptotics} are verified, which ends the proof of the main Proposition \ref{MAIN}.

\begin{proof}[Proof of Proposition \ref{MAIN}, main argument]
Since the assumptions of Lemma \ref{asymptotics} are verified, we find a strictly convex co-finite weak solution $f$ to the problem \eqref{probPbulkneu}, \eqref{probPboundaneu} for $V = V^{\infty}(\rho)$. The convex conjugate $g := f^*$ belongs to $C^2(\mathbb{R}^N)$ and is a convex solution to $\bar{\upsilon} \cdot \nabla g = 1$ in $\mathbb{R}^N$. Moreover, due to \ref{asymptotics}, \eqref{claimimage}, the image $\nabla g(\mathbb{R}^N)$ is nothing else than $S = \{\rho \, : \, \bar{\upsilon} \cdot \rho = 1\}$. Thus $f$ is finite only on states satisfying $\bar{\upsilon} \cdot \rho = 1$. Due to \eqref{probPboundaneu}, $f = k$ in this set. 

In order that $\mu \in \partial f^{\infty}(\rho)$, we must have $\nabla g(\mu) = \rho$. This means that $\rho \in \nabla g(\mathbb{R}^N)$, hence $\bar{\upsilon} \cdot \rho = 1$ is necessary. Under this condition, the set $\partial f^{\infty}(\rho)$ is not empty, and it consists of all $\mu \in \mathbb{R}^N$ such that $f^{\infty}(r) \geq k(\rho) + \mu \cdot(r-\rho)$ for all $r \in \mathbb{R}^N_+$. Thus, restricting to $r \in S$, we see that $\mu \in \partial f^{\infty}(\rho)$ implies that $k(r) \geq k(\rho) + \mu \cdot(r-\rho)$ for all $r \in S$. Since $S$ is planar, the tangential part of $\mu$ can only be the tangential gradient of $k$ at $\rho$, and there must exist a $p \in \mathbb{R}$ such that $\mu = \nabla_{\rho} k(\rho) + p \, \bar{\upsilon}$.

Conversely, if there is some $\mu^0$ in $\partial f^{\infty}(\rho)$, then clearly $\mu^0 + p \, \bar{\upsilon} \in \partial f^{\infty}(\rho)$ for all $p \in \mathbb{R}$. This is due to the fact that the $g$ is affine with slope one in the direction of the vector $\bar{\upsilon}$, which belongs to the kernel of the Hessian $D^2g$.
This achieves to show the characterization of $\partial f^{\infty}$, hence the claim.
\end{proof}

\begin{proof}[Proof of Lemma \ref{MUELLER}]

The condition $\partial^2_{T,T} f^m <0$ means that
\begin{align}\label{FEconcTunhere}
0 >  & \partial^2_{T,T}k^m(T, \, \rho) + p^0 \, \partial^2_{T,T} V^m(T, \, p^0, \, \rho) + \int_{p^0}^{p^m(T,\rho)}  \partial^2_{T,T}V^m(T, \, p^{\prime}, \, \rho) \, dp^{\prime} \nonumber\\
& - \frac{(\partial_TV^m(T, \, p(T,\rho), \, \rho))^2}{\partial_pV^m(T, \, p^m(T,\rho), \, \rho)} \, .
\end{align}
Now we fix $(T, \, p, \, x)$ and choose $\rho_i^m = \hat{\rho}^M_i(T, \, p, \, x)$ which converges to $M_i \, x_i/ \hat{\upsilon}^{\infty}(T, \, p^0, \, x) = \bar{\rho}$. Owing to the convergence assumed in the statement of Lemma \ref{MUELLER}, it follows that
\begin{align*}
 0 \geq & \partial^2_{T,T}k(T, \, \bar{\rho}) + p^0 \, \partial^2_{T,T} V^{\infty}(T, \, p^0, \, \bar{\rho}) + \int_{p^0}^{p}  \partial^2_{T,T}V^{\infty}(T, \, p^{\prime}, \, \bar{\rho}) \, dp^{\prime} \nonumber\\
& + \liminf_{m\rightarrow \infty}  \frac{(\partial_TV^m(T, \, p, \, \rho^m))^2}{|\partial_pV^m(T, \, p, \, \rho^m)|} \, .
\end{align*}
Using the definition of $V^m$, we then see that
\begin{align*}
\liminf_{m\rightarrow \infty}  \frac{(\partial_TV^m(T, \, p, \, \rho^m))^2}{|\partial_pV^m(T, \, p, \, \rho^m)|} = \liminf_{m \rightarrow \infty} \frac{1}{\hat{\upsilon}^m(T, \, p, \, x)} \,  \frac{(\partial_T\hat{\upsilon}^m(T,p,x))^2}{|\hat{\upsilon}_p^m(T, \, p, \,x)|} \, ,
\end{align*}
and the latter quantity is positive infinite unless $\partial_T\hat{\upsilon}^m(T,p,x) \rightarrow 0$.
This completes the proof of Lemma \ref{MUELLER}.
\end{proof}

\subsection{Proof of Proposition \ref{MAINNN}}\label{localargu}

In view of the assumption \eqref{belle0pr}, the limit function $V^{\infty}(T) = V^{\infty}(T, \, p, \, \rho)$ is independent of $p$ in the set of all $$(T, \, p, \, \rho) \quad\text{s.\ t.\ } \quad T \in ]T_{\inf}, \, T_{\sup}[, \, \rho \in \mathbb{R}^{N}_+, \, p \in ]a(T,x(\rho)), \, b(T,x(\rho)[ \, ,$$
but it continues to depend on $p$ elsewhere.

Next we proceed to checking the assumptions of Lemma \ref{asymptotics} for this new case. In checking the assumption \eqref{compactum}, we argue like in \eqref{eigenschaft0}, \eqref{eigenschaft} and we now find that
\begin{align*}
& \liminf_{m \rightarrow \infty} \inf_{y \in S^1_+} |\bar{k}^m(y) + \bar{V}^m(p, \, y)|\\
& \quad \geq \liminf_{m \rightarrow \infty}  \inf_{y \in S^1_+} \left| \int_{p^0}^pV^m(t, \, y) \, dt \right| - \limsup_{m \rightarrow \infty} \sup_{y \in S^1_+} (|k^m(y)|+p^0 \, |V^m(p^0, \, y)|)  \nonumber\\
& \quad = \inf_{y \in S^1_+} \left| \int_{p^0}^{p}   V^{\infty}(t, \, y) \, dt \right| - \sup_{y \in S^1_+} (|k(y)|+p^0 \, |V^{\infty}(p^0, \, y)|) \, .
\end{align*}
Next we observe that
\begin{align*}
\int_{p^0}^{p} V^{\infty}(t, \, y) \, dt = \begin{cases}
-\int_p^{a} V^{\infty}(t, \, y) \, dt - (p^0-a) \,  V^{\infty}(p^0, \, y) & \text{ for } p < a \, ,\\
(p-p^0) \, V^{\infty}(p^0, \, y) & \text{ for } a \leq p \leq b \, ,\\
\int_b^{p} V^{\infty}(t, \, y) \, dt + (b-p^0) \,  V^{\infty}(p^0, \, y) & \text{ for } b < p \, .
\end{cases}
\end{align*}
Making use of \eqref{bien5strich}, property \eqref{compactum} of Lemma \ref{asymptotics} is obvious.

We can check the requirement of Lemma \ref{asymptotics}, \eqref{Convex} with the same arguments as in the preceding section. The only difference is that we must not require that the limit function $V^{\infty}$ is necessarily linear. Recall that for all $a\leq p  \leq b$, the matrix $D^2V^{\infty}(p, \, y) = D^2V^{\infty}(p^0, \, y)$ is independent on $p$. The argument following \eqref{eigenschaft2} implies that, for $p \in [a, \, b]$,
\begin{align*}
V^{\infty}(y) \, D^2(k + p \,  V^{\infty})(y) \eta \cdot \eta \geq 0 \quad \text{ for all } \eta \in \mathbb{R}^n, \, y \in S^1_+ \, .
\end{align*}
Hence, in the local case, the requirement of convexity implies that the thresholds $a, \, b$ are restricted by the condition
\begin{align*}
D^2(k + p \,  V^{\infty})(y) \quad \text{ positive definite for all } y \in S^1_+ \text{ and all } p \in [a(T,y), \, b(T,y)] \, .
\end{align*}

Finally, we use \eqref{bien6strich} and we find that $$ \liminf_{m \rightarrow \infty}  \inf_{p \in ]p_1, \, p_2[ } |\nabla_{\rho}\bar{k}^m(y) + \bar{V}^m_{\rho}(p, \, y) | \geq |\nabla_{\rho}k(y)| - C \, (|p^0| + |p_1| + |p_2|) \, ,$$ and the condition \eqref{compactmore} of Lemma \ref{asymptotics} is valid whenever $k$ is essentially smooth on $S^1_+$. Thus, by means of the Lemma \ref{asymptotics}, we can establish the following claims.
\begin{lemma}\label{redonde}
Under the assumptions of Proposition \ref{MAINNN}, the function $\rho \mapsto V^{\infty}(p^0, \, \rho)$ is convex on $\mathbb{R}^N_+$. For all $\rho \in \mathbb{R}^N_+$ such that $V^{\infty}(p^0, \, \rho) < 1$, there is a unique $p(\rho) \in ]p_{\inf}, \, a[$ such that $V^{\infty}(p(\rho), \, \rho) = 1$ and the map $\rho \mapsto p(\rho)$ belongs to $C^1(\{\rho \, : \, V^{\infty}(p^0, \, \rho) < 1\})$. Similarly for all $\rho \in \mathbb{R}^N_+$ such that $V^{\infty}(p^0, \, \rho) > 1$, there is a unique $p(\rho) \in ]b, \, +\infty[$ such that $V^{\infty}(p(\rho), \, \rho) = 1$ and the map $\rho \mapsto p(\rho)$ belongs to $C^1(\{\rho \, : \, V^{\infty}(p^0, \, \rho) > 1\})$. The function
\begin{align}
& f^{\infty}(T,\rho) := \nonumber \\
  & \begin{cases}
                         k(T,\rho) + p^0 \, V^{\infty}(T,p^0,\, \rho) +\bar{V}(T,p(T,\rho), \,\rho) - p(T,\rho) & \text{ for } \rho  \text{ s.\ t.\ } V^{\infty}(T,p^0,\, \rho) \neq 1 \, ,\\
                         k(T,\rho) & \text{ for } \rho  \text{ s.\ t.\ } V^{\infty}(T,p^0,\, \rho) = 1 \, ,
                         \end{cases} \nonumber
\end{align}
is a co-finite strictly convex weak solution to the problem \eqref{probPbulkneu}, \eqref{probPboundaneu} for $V = V^{\infty}$. We have $f^m \rightarrow f^{\infty}$ (epi-convergence). A point $\mu \in \mathbb{R}^N$ belongs to $\partial f^{\infty}(\rho)$ if and only if one of the two following conditions is valid
\begin{itemize}
\item[\text{Either}] $V^{\infty}(p^0, \, \rho) \neq 1$ and $\mu = p^0 \, V^{\infty}_{\rho}(p^0,\, \rho) + \nabla_{\rho}k(\rho) + \bar{V}_{\rho}(p(\rho), \,\rho)$;
\item[\text{Or}] $V^{\infty}(p^0, \, \rho) = 1$ and $\mu = p \, V^{\infty}_{\rho}(p^0, \, \rho) + \nabla_{\rho}k(\rho)$ for a $a \leq p \leq b$.
\end{itemize}
\end{lemma}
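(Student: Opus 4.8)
The whole statement is engineered to be harvested from Lemma~\ref{asymptotics}, whose hypotheses \eqref{compactum}, \eqref{compactmore} and \eqref{Convex} have just been verified for the present sequences $V^m,k^m$ under \eqref{belle0pr}, \eqref{bien5strich} and \eqref{bien6strich}. The plan is therefore to invoke that lemma and then decode its abstract output through the dual variables of Lemma~\ref{pdelemma}. Concretely, Lemma~\ref{asymptotics} produces a convex limit $g\in C^2(\mathbb{R}^N)$ of the conjugates $g^m=(f^m)^*$, solving $V^\infty(g,\nabla g)=1$ in $\mathbb{R}^N$ with $g=p^0$ exactly on the accumulation set $S_0=\{g=p^0\}$; its conjugate $f:=g^*$ is then a co-finite, strictly convex weak solution of \eqref{probPbulkneu}, \eqref{probPboundaneu} for $V=V^\infty$, and $f^m\to f$ in the sense of epi-convergence. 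What is left is to read off the explicit shape of $f$, the solvability of the thermal equation of state, the structure of $\partial_\rho f$, and the convexity of $V^\infty(p^0,\cdot)$.

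I would next settle the thermal equation of state away from the incompressibility surface $\Sigma:=\{\rho:V^\infty(p^0,\rho)=1\}$. For $\rho$ with $V^\infty(p^0,\rho)>1$, assumption \eqref{belle0pr} makes $p\mapsto V^\infty(p,\rho)$ constant on $[a(x(\rho)),b(x(\rho))]$ with value $V^\infty(p^0,\rho)>1$, while \eqref{bien5strich} makes it decrease strictly to $0$ for $p>b$; the intermediate value theorem gives a unique root $p(\rho)>b$ of $V^\infty(p(\rho),\rho)=1$, and the implicit function theorem (using $\partial_pV^\infty<0$ there) yields $p\in C^1$. The mirror argument for $V^\infty(p^0,\rho)<1$ produces $p(\rho)\in\,]p_{\inf},a[$. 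On either open region $V^\infty$ is strictly $p$-monotone, so the representation of Proposition~\ref{formula} applies verbatim with $V=V^\infty$ and reproduces $k+p^0V^\infty(p^0,\cdot)+\bar V(p(\cdot),\cdot)-p(\cdot)$, whereas on $\Sigma$ the boundary condition \eqref{probPboundaneu} gives $f=k$. This identifies $f$ with the announced piecewise function $f^\infty$; continuity across $\Sigma$ follows because $p(\rho)\to b$ (resp. $a$), so that the mechanical excess $p^0V^\infty(p^0,\rho)+\bar V(p(\rho),\rho)-p(\rho)$ tends to $0$.

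The one genuinely delicate algebraic point is the subdifferential on $\Sigma$, where $f^\infty$ is not differentiable. Off $\Sigma$ the gradient is the single vector of \eqref{Gradienth}, giving the first alternative. On $\Sigma$ I would compute the two one-sided gradient limits: as $p(\rho)\to b$ from $\{V^\infty(p^0,\cdot)>1\}$ the gradient tends to $bV^\infty_\rho(p^0,\rho)+\nabla_\rho k(\rho)$, and as $p(\rho)\to a$ from $\{V^\infty(p^0,\cdot)<1\}$ to $aV^\infty_\rho(p^0,\rho)+\nabla_\rho k(\rho)$, the two endpoints of the claimed segment. That the entire segment, and nothing more, constitutes $\partial f^\infty(\rho)$ is cleanest in the dual picture: since $g\in C^1$, Fenchel duality gives $\partial f^\infty(\rho)=(\nabla g)^{-1}(\rho)$, and along $\mu(p):=pV^\infty_\rho(p^0,\rho)+\nabla_\rho k(\rho)$ one checks, using Euler's identities $V^\infty_\rho(p^0,\rho)\cdot\rho=V^\infty(p^0,\rho)=1$ and $\nabla_\rho k(\rho)\cdot\rho=k(\rho)$ together with $f^\infty(\rho)=k(\rho)$ on $\Sigma$, that $g(\mu(p))=p$ and $\nabla g(\mu(p))=\rho$ for every $p\in[a,b]$; hence $g$ is affine with slope $\rho$ precisely along this segment, which is exactly $(\nabla g)^{-1}(\rho)=\partial f^\infty(\rho)$.

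Finally, the convexity of $\rho\mapsto V^\infty(p^0,\rho)$ is where I expect the real work to lie. The limit \eqref{eigenschaft2}, specialised to the window $p\in[a,b]$ where $D^2_{\rho,\rho}V^\infty(p,\cdot)=D^2_{\rho,\rho}V^\infty(p^0,\cdot)$, shows that positivity of the $f^m$-Hessians forces $D^2_{\rho,\rho}(k+pV^\infty)(y)\succeq0$ on $\{y\}^\perp$ for all $p\in[a,b]$; since by \eqref{belle2} the matrix $D^2_{\rho,\rho}k(y)$ has $N-1$ strictly positive eigenvalues there, this is a nontrivial coupling of $k$ and $V^\infty(p^0,\cdot)$ across the whole window. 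The obstacle is that this window condition alone does not fix the sign of $D^2_{\rho,\rho}V^\infty(p^0,y)$: the convexity must be closed off by combining it with the global convexity of $f^\infty=g^*$ and with the strictly $p$-monotone, strictly positive behaviour of $V^\infty$ on the compressible tails $p>b$ and $p<a$ furnished by \eqref{bien5strich}. Pushing this coupling through—equivalently, showing that $\Sigma$ bounds a convex region even though $\hat\upsilon^\infty$ is permitted to be nonlinear in the composition—is the crux; the remaining conclusions (epi-convergence, the equation of state, and the gradient formulas) are then direct transcriptions of Lemma~\ref{asymptotics} and Proposition~\ref{formula}.
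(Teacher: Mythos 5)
Your route is the paper's route: the authors verify hypotheses \eqref{compactum}, \eqref{compactmore} and \eqref{Convex} of Lemma~\ref{asymptotics} under \eqref{belle0pr}, \eqref{bien5strich}, \eqref{bien6strich} and then simply assert that the lemma is ``established by means of Lemma~\ref{asymptotics}''. Your decoding of the abstract output --- intermediate value theorem plus implicit function theorem for $p(\rho)$ off the surface $V^{\infty}(p^0,\rho)=1$, Proposition~\ref{formula} for the explicit form of $f^{\infty}$ there, continuity across the surface from $b\,(V^{\infty}(p^0,\rho)-1)\to 0$, and the dual identification of $\partial f^{\infty}(\rho)$ with the segment $\{p\,V^{\infty}_{\rho}(p^0,\rho)+\nabla_{\rho}k(\rho)\,:\,p\in[a,b]\}$ via $g(\mu(p))=p$ --- is consistent with, and more explicit than, what the paper writes down. (For $\nabla g(\mu(p))=\rho$ you should avoid a slight circularity by approximating: set $\rho^m(p):=y/V^m(p,y)$ and $\mu^m(p):=\nabla_{\rho}f^m(\rho^m(p))=\nabla_{\rho}\bar{k}^m(y)+\bar{V}^m_{\rho}(p,y)$, use $\nabla g^m(\mu^m(p))=\rho^m(p)$ and the local $C^1$ convergence of $g^m$; this is exactly the device of Lemma~\ref{asymptotics}~\eqref{claimimage}.)

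The claim you leave open --- convexity of $\rho\mapsto V^{\infty}(p^0,\rho)$ --- is a genuine gap, and your diagnosis of why is correct: from $D^2_{\rho,\rho}\big(k+p\,V^{\infty}(p^0,\cdot)\big)\succeq 0$ for $p\in[a,b]$ alone one cannot extract a sign for $D^2_{\rho,\rho}V^{\infty}(p^0,\cdot)$, because the decisive step in the global case (Proposition~\ref{MAIN}) is precisely to let $p$ range over an unbounded set in $D^2(k^m+p\,V^m)\eta\cdot\eta\geq 0$, whereas here $p$ is confined to the bounded window $[a,b]$. Your hope of recovering it from the compressible tails is also not obviously workable: for $p>b$ or $p<a$ the rank--one term $-V^m\,(V^m_p)^{-1}\,V^m_{\rho}\otimes V^m_{\rho}$ stays bounded, so positivity of $H^m$ constrains only the combination $D^2(\bar{k}^m+\bar{V}^m(p,\cdot))$ and never isolates $D^2V^{\infty}(p^0,\cdot)$. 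Be aware that the paper offers no argument for this claim either: Section~\ref{localargu} derives only the necessary condition on the thresholds $a,b$ and then states the lemma. So this particular assertion should be treated as an additional hypothesis (or verified by a separate argument the paper does not supply) rather than as a consequence of the stated assumptions; none of the remaining conclusions, as you derive them, actually uses it. Everything else in your proposal stands.
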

Note that, since the threshold functions $a$ and $b$ are finite, the convergence of Lemma \ref{asymptotics} can be improved to uniform convergence. We start from the representation
\begin{align}\label{lesfem}
 f^{m}(\rho) :=  k^m(\rho) + p^0 \, V^{m}(p^0,\, \rho) +\bar{V}^m(p^m(\rho), \,\rho) - p^m(\rho) \quad \text{ for } \rho \in \mathbb{R}^N_+ \, .
\end{align}
We show that for all $\rho$ such that $V^{\infty}(p^0, \, \rho) < 1$ the pressures converge, i.e.\ $p^m(\rho) \rightarrow p(\rho)$, where $p(\rho) < a$. Indeed, if $V^{\infty}(p^0, \, \rho) < 1$, then $V^{m}(p^0, \, \rho) < 1$ for $m$ large. Thus the number $p^m(\rho)$ belongs to the interval $]p_{\inf}^m, \, p^0[$, implying that the sequence $p^m(\rho)$ is bounded. Let $\pi$ be any accumulation point. As $V^m$ converges uniformly, we get $1 = V^{\infty}(\pi, \, \rho)$. If $\pi \geq a(T, \, x(\rho))$, then $V^{\infty}(\pi, \, \rho) \leq V^{\infty}(p^0, \, \rho) < 1$. Thus $\pi < a(T, \, x(\rho))$. But in the interval $]p_{\inf}, \, a[$, $V^{\infty}$ is strictly decreasing in the first variable. Hence $\pi = p(\rho)$ is the unique solution in $]p_{\inf}, \, a[$ to $V^{\infty}(\pi, \, \rho) = 1$.

Similarly, for all $\rho$ subject to $V^{\infty}(p^0, \, \rho) > 1$, we show that $p^m(\rho) \rightarrow p(\rho)$ with $p(\rho) > b$ being the unique solution to $V^{\infty}(\pi,\rho) = 1$. To show that the sequence $p^m(\rho)$ is bounded, we employ the following argument: Suppose that $\limsup p^m(\rho) =+ \infty$. Then for $K$ sufficiently large, we find $m_0$ appropriate such that $V^{m}(K, \, \rho) \geq 1$ for $m \geq m_0$. Passing to the limit, $V^{\infty}(K, \, \rho) \geq 1$, and for $K \rightarrow \infty$ we must violate the assumptions \eqref{bien5strich}.

Hence the sequence $\{f_m(\rho)\}$ defined in \eqref{lesfem} is bounded for every $\rho \in \mathbb{R}^N_+$. A classical result of convex analysis \cite{rockafellar} shows that $f_m$ converges uniformly on compact subsets of its domain to its pointwise limit.  This pointwise limit being finite, it must be identical with the epi--limit $f^{\infty}$ of Lemma \ref{redonde}.

\newcommand{\etalchar}[1]{$^{#1}$}

\appendix

\section{Pieces of thermodynamics}\label{pieces}

\vspace{0.2cm} {\bf Variables.} We consider a fluid mixture consisting of $N$ chemical species $\ce{A}_1,\ldots,\ce{A}_N$. Locally it is characterized by the quantities
\begin{equation}\label{Athermo0}
    T\quad\text{-~(absolute) temperature}\qquad\text{and}\qquad (n_i)_{i=1,2,\ldots,N}\quad\text{-~mole densities}~.
\end{equation}
These $N+1$ quantities constitute the set of basic variables.\footnote{ Following the viewpoint adopted so far, the main variables are $T$ and the partial mass densities $\rho_1, \, \rho_2, \ldots, \rho_N$.

With the simple connection $n_i = \rho_i/M_i$ where $M_i > 0$ is the constant molecular mass, we can associate with every thermodynamic function of the main variables $f(T,\rho_1,\ldots,\rho_N)$ a transformed function of the variables in \eqref{Athermo0} via $\tilde{f}(T, n_1, \ldots,n_N) := f(T, m_1 \, n_1 ,\ldots,m_N \, n_N) \, .$
Working directly with the transformed function $\tilde{f}$ significantly simplifies the calculations resulting from the change of variables. From now and for the remainder of the appendix, we use $T$ and $n_1, \ldots,n_N$ as main variables and we skip the $\tilde{}$ or {\rm mol} superscripts.} In the context of experimental data and microscopic modeling other but equivalent sets of variables will occur.


Total mass density, total mole density and the molar volume of the mixture are defined as in \eqref{thermo1}.
%
%
%
We recall that the mole/number fraction of constituent $\ce{A}_i$ is given by $x_i:= n_i/n=\upsilon n_i$ (cf.\ \eqref{thermo8}).
The \textit{(Helmholtz) free energy density} has the form
\begin{equation}\label{Athermo1}
    \varrho\psi=\varrho\psi(T,n_1,n_2, \ldots ,n_N) \, .
\end{equation}
Comparing with \eqref{thermo8a}, some small readjustments are necessary. We now introduce the (molar--based) chemical potentials via
\begin{equation}\label{Athermo2}
    \mu_i = \mu_i^{\rm mol} :=\frac{\partial \varrho\psi}{\partial~ n_i}\quad\text{for}\quad i=1,2 \ldots,N.
\end{equation}
While the definitions \eqref{thermo8c}$_2$ and \eqref{thermo8c}$_3$ of internal energy and entropy are unaffected, the Gibbs--Duhem equation \eqref{thermo8c}$_1$ assumes the form
\begin{equation}\label{Athermo3}
    p=-\varrho\psi+\sum_{i=1}^N n_i \, \mu_i \, ,
\end{equation}
and the positivity requirement of \eqref{thermo8d}$_1$ is equivalent with
\begin{equation}\label{Athermo4}
   \left\{\frac{\partial \mu_i}{\partial n_j}\right\}_{i,j=1,\ldots,N} \quad \text{is symmetric, positive definite} \, .
\end{equation}
%

\vspace{0.2cm} {\bf Objectives and motivations.} The knowledge on the constitutive functions results either from experiments or from atomistic models. The exploitation of atomistic models relies on the laws of \textit{statistical mechanics}. If statistical mechanics is fully exploited, it directly yields the free energy density as a function of the basic variables in \eqref{Athermo0}.

Then the other constitutive functions are calculated  by means of the thermodynamic relations \eqref{thermo8b} (resp.\ \eqref{Athermo3}) and \eqref{thermo8c}. However, frequently one meets the situation that atomistic models are not fully exploited. Rather, they solely provide relations between some of the derivatives of the free energy function. Usually this restricted knowledge is supplemented by experimental data on other derivatives to achieve the full picture. For example, pressure, specific volume, specific heats and chemical potentials may be constructed from experimental data essentially resulting from different sources. In this case we are confronted with the problem of consistency of all these data.

The main problem we want to solve next answers the question: Which derivatives of the free energy function can be given consistently and which properties must they have in order to construct the free energy function in terms of these derivatives?

Different representation theorems are possible, depending on the provided data. An important aspect concerns the employed variables since, in experiments, the basic variables are not under control.

\vspace{0.2cm} {\bf Change of variables.} Most thermodynamic experiments control at least temperature and pressure or, alternatively, temperature and the specific volume. Particularly in chemical experiments the pressure is controlled and should be included in the list of variables. In the following we treat both cases and to this end we change the variables according to (cf.\ \eqref{thermo10})
\begin{equation}\label{Athermo10}
    (T,n_1,\ldots,n_N)\quad\leftrightarrow\quad(T,\upsilon,x_1, \ldots,x_{N-1})\quad\leftrightarrow\quad(T,p,x_1,\ldots,x_{N-1})~.
\end{equation}
The variable transformation \eqref{Athermo10}$_1$ is simple. It uses $n_i=\upsilon^{-1}x_i$ for $i=1,\ldots,N$ and $x_N=1-\sum_{j=1}^{N-1} x_j$. The change of variables \eqref{Athermo10}$_2$ needs the \textit{thermal equation of state} (see \eqref{thermo10a})
\begin{equation*}
    p=\bar p(T,\upsilon,x)\quad\leftrightarrow\quad \upsilon=\hat\upsilon(T,p,x)~.
\end{equation*}

Next we use a generic function $f(T,n_1,\ldots,n_N)$ to define two new functions,
\begin{equation}\label{Athermo11}
    \bar f(T,\upsilon,x):= f\Big(T,\frac{1}{\upsilon} \, x\Big)\qquad\textrm{and} \qquad \hat f(T,p,x):= \bar{f}(T,\upsilon,x)_{|\upsilon=\hat{\upsilon}(T,p,x)}~.
\end{equation}
The derivatives of the new functions $\bar f$ and $\hat f$ can easily be calculated. We have for \eqref{Athermo11}$_1$
\begin{equation}\label{Athermo12}
\partial_T \bar f  = \partial_T f
,\qquad  \partial_{\upsilon} \bar f =-\frac{1}{\upsilon^2}\sum_{j=1}^N \partial_{n_j} f \, x_j \, ,\qquad \partial_{x_i} \bar f =\frac{1}{\upsilon}\left(\partial_{n_i} f - \partial_{n_N} f \right)~,
\end{equation}
and for \eqref{Athermo11}$_2$ we obtain
\begin{equation}\label{Athermo12a}
\partial_T \hat f = \partial_T \bar f +\partial_{\upsilon} \bar f\, \partial_T \hat \upsilon \,
,\quad \partial_p \hat f =\partial_{\upsilon} \bar f \, \partial_p \hat \upsilon \, ,\quad\partial_{x_i} \hat f = \partial_{x_i} \bar f+\partial_{\upsilon} \bar f \,
\partial_{x_i} \hat \upsilon \, .
\end{equation}

\vspace{0.2cm} {\bf Consequences for the variables $(T,\upsilon,x)$.} We apply the transformation identities \eqref{Athermo12} to the thermodynamic equations \eqref{Athermo1}--\eqref{Athermo3} (see also \eqref{thermo8a}--\eqref{thermo8c}). The simple proofs of the following results are left to the reader.

In the new variables, the mass density obeys $\bar\varrho(\upsilon,x):=\upsilon^{-1}M(x)$. For the specific free energy function we now have $\bar \psi(T,\upsilon,x):=\psi(T,\upsilon^{-1}x)$.
Then, we obtain
\begin{equation}\label{Athermo13}
\partial_T \bar \psi =-\bar s(T,\upsilon,x),\quad\partial_{\upsilon} \bar \psi =-\frac{1}{M(x)}\bar p(T,\upsilon,x),\quad \partial_{x_i} (\bar{\varrho} \bar \psi) =\frac{1}{\upsilon}\left(\bar\mu_i(T,\upsilon,x)-\bar\mu_N(T,\upsilon,x)\right)~.
\end{equation}
The specific internal energy transforms as
$\bar u(T,\upsilon,x):=u(T,\upsilon^{-1}x)$. This representation of the specific internal energy with respect to the variables $(T,\upsilon,x)$ is known as the \textit{caloric equation of state}. Its derivative
\begin{equation}\label{Athermo14}
    \bar c_\upsilon:= \partial_T \bar u(T,\upsilon,x)
\end{equation}
defines the specific heat (at constant volume). Two further derivatives of the caloric equation of state read
\begin{equation}\label{thermo15}
    \partial_{\upsilon} \bar u =T \, \partial_T\bar p(T,\upsilon,x)-\bar p(T,\upsilon,x)\qquad\textrm{implying}\qquad    \partial_{\upsilon} \bar{c}_\upsilon =T\, \partial^2_{T,T}\bar p(T,\upsilon,x)~.
\end{equation}
We conclude that the volume dependence of both the specific internal energy and the specific heat is already given by the thermal equation of state. Thus their measuring is not needed!

The specific entropy transforms as $\bar s(T,\upsilon,x):= s(T,\upsilon^{-1}x)$ and satisfies
\begin{equation}\label{Athermo16}
\partial_T \bar s =\frac{\bar{c}_\upsilon}{T},\quad \partial_{\upsilon} \bar s =\frac{1}{M(x)} \partial_T\bar p\, ,\quad
\bar s=-\frac{\upsilon}{M(x)} \, \partial_T\bar p+\frac{1}{M(x)} \, \sum_{j=1}^N
\partial_T \bar\mu_j \, x_j~.
\end{equation}
In analogy to the caloric equation of state we find that the volume dependence of the specific entropy is likewise given by the thermal equation of state.

In the new variables, the chemical potentials are given by
$\bar \mu_i(T,\upsilon,x):= \mu_i(T,\upsilon^{-1}x)$. Their derivatives with respect to $T$ and $\upsilon$ read
\begin{equation}\label{Athermo17}
\partial_T( \bar \mu_i-\bar \mu_N) =-\upsilon \, \partial_{x_i}(\bar\varrho\bar s)(T,\upsilon,x),\qquad \partial_{\upsilon} (\bar \mu_i-\bar \mu_N) =- \partial_{x_i}\bar p(T,\upsilon,x)~.
\end{equation}
Once again the important role of the thermal equation of state shows up. It determines the volume dependence of the chemical potentials as well.

The equations of this paragraph constitute the basis to derive representation theorems for the Helmholtz free energy as a function of the variables $T,\upsilon,x$. These representation theorems must be consulted to answer the question which data can be used to construct a free energy function consistently.

\vspace{0.2cm} {\bf Consequences for the variables $(T,p,x)$.} Next we apply the transformation identities \eqref{Athermo12a} to the thermodynamic equations \eqref{Athermo13}-\eqref{Athermo17}.
However, with respect to the variables $(T,p,x)$ the most suited potential is not the Helmholtz free energy density but the specific Gibbs free energy $g :=\psi+p~\varrho^{-1}$. The following results are represented in an analogous manner as before.

Consider the Gibbs free energy function $M g=M(x)\hat g(T,p,x):=M(x)\hat\psi(T,p,x)+p~\hat \upsilon(T,p,x)$ and obtain from \eqref{Athermo1}-\eqref{Athermo3}
\begin{align}\begin{split}\label{Athermo13b}
\partial_T (M\hat g) =& -M(x)  \hat s(T,p,x),\qquad\partial_p (M\hat g) =\hat\upsilon(T,p,x) \, ,\\
& \partial_{x_i} (M \hat g) =\hat\mu_i(T,p,x)-\hat\mu_N(T,p,x)~.
\end{split}
\end{align}
Note that here the potential property of the free energy function concerns $Mg$ rather than $\varrho g$ as one would expect at first glance.

In the $(T,p,x)$ setting, the heat density, which is also called enthalpy density, $\varrho h:=\varrho u+p$ is more important than the internal energy density $\varrho u$. We calculate the enthalpy according to $M h=M(x)\hat h(T,p,x):=M(x) u(T,\hat\upsilon(T,p,x),x)+p~\hat\upsilon(T,p,x)$. The derivative of the specific enthalpy with respect to $T$,
\begin{equation}\label{thermo14b}
    \hat c_p:= \partial_T \hat h(T,p,x)
\end{equation}
defines the specific heat (at constant pressure). Its difference to the specific heat at constant volume,\eqref{Athermo14}, is easily calculated by means of \eqref{Athermo12a}$_1$, to the result
\begin{equation}\label{Athermo14c}
    c_p-c_\upsilon=-\frac{T}{M(x)} \, (\partial_T \bar p)^2 \, \partial_p \hat\upsilon \, .
\end{equation}
Corresponding to the equations \eqref{thermo15}, we now have
\begin{equation}\label{Athermo15b}
    \partial_p (M \, \hat h) =\hat \upsilon(T,p,x)- T\, \partial_T\hat \upsilon(T,p,x)\qquad\text{implying}\qquad
    M \, \partial_p \hat c_p=-T\, \partial^2_{T,T}\hat \upsilon(T,p,x)~.
\end{equation}
In the $(T,p,x)$ variables, we observe a corresponding behavior to the $(T,\upsilon,x)$ setting: The pressure dependence of the enthalpy function and the specific heat as well is already given by the thermal equation of state, now represented by $\hat\upsilon(T,p,x)$.

This is also true of the specific entropy, which is given by $s=\hat s(T,p,x):=\bar s(T,\hat\upsilon(T,p,x),x)$. It satisfies
\begin{equation}\label{Athermo16b}
\partial_T \hat s =\frac{\hat c_p(T,p,x)}{T},\quad\partial_p \hat s =-\frac{1}{M(x)} \, \partial_T\hat\upsilon(T,p,x),\quad
\hat s=-\frac{1}{M(x)} \, \sum_{j=1}^N
\partial_T \hat\mu_j \, x_j~.
\end{equation}

Finally, we define the $(T,p,x)$-representation of the chemical potentials, viz.\ $$\mu_i=\hat \mu_i(T,p,x):=\bar \mu_i(T,\hat\upsilon(T,p,x),x) \, .$$ The functions $\hat \mu_i(T,p,x)$ satisfy
\begin{equation}\label{Athermo17b}
\partial_T (\hat \mu_i-\hat \mu_N) =-\partial_{x_i} (M(x)\hat s)(T,p,x),\qquad\partial_p (\hat \mu_i-\hat \mu_N) = \partial_{x_i}\hat \upsilon(T,p,x)~.
\end{equation}

The equations of this paragraph constitute the basis to derive representation theorems for the Gibbs free energy function.

\vspace{0.2cm} {\bf Inequalities.} It is useful to list and prove the relevant inequalities between some of the introduced quantities. From the inequalities \eqref{thermo8d} (cf.\ \eqref{Athermo4}), we may derive further inequalities, viz.
\begin{equation}\label{Athermo17c}
\partial_{\upsilon} \bar p <0~,\qquad\hat c_p>\bar c_\upsilon>0~,\qquad\text{and}\quad
(\partial_T \hat \upsilon)^2\leq - \frac{\hat c_p \, M}{T} \, \partial_p\hat \upsilon \, .
\end{equation}
The proof of \eqref{Athermo17c}$_1$ starts from the Gibbs-Duhem equation \eqref{Athermo3}$_1$ for $p=\bar p$. Differentiation with respect to $\upsilon$ yields, after some simple rearrangements,
\begin{equation}\label{Athermo17d}
\partial_{\upsilon} \bar p =-\frac{1}{\upsilon^3}\sum_{i,j=1}^Nx_i\frac{\partial \mu_j}{\partial n_i} \, x_j<0~,
\end{equation}
where the inequality in \eqref{Athermo17d} is due to \eqref{Athermo4}$_1$.

The proof of $\bar c_\upsilon>0$ relies on the inequality \eqref{thermo8d}$_2$ and chooses $ f= u$ and $\bar f=\bar u$ in the identity \eqref{Athermo12}$_1$. Then, together with \eqref{Athermo17d}, the inequality $\hat c_p>\bar c_\upsilon$ is a direct consequence of \eqref{Athermo14c}.

\vspace{0.2cm} {\bf Free energy representations in general.} Two crucial facts can be read off from the experimental literature:
\begin{itemize}
\item[(i)] Exclusively the functions
\begin{equation}\label{Athermo18}
\upsilon=\hat \upsilon(T,p,x),\qquad c_p=\hat c_p(T,p,x),\qquad\mu_i=\hat \mu_i(T,p,x)~.
\end{equation}
are directly accessible in experiments.
\item[(ii)] The most simple task is the measurement of the thermal equation of state \eqref{Athermo18}$_1$. Calorimetric experiments to determine the specific heats \eqref{Athermo18}$_{1,2}$  are much more involved than pressure-volume-mole fraction measurements. The most complex procedure is needed for the experimental determination of the chemical potentials \eqref{Athermo18}$_3$.
\end{itemize}
Based on these facts we \underline{assume} that the thermal equation of state, i.e. \eqref{Athermo18}$_1$, is completely given. In other words, we know the function \eqref{Athermo18}$_1$ with respect to all variables $(T,p,x)$.

Then from \eqref{Athermo15b}$_2$ we conclude that the specific heat \eqref{Athermo18}$_2$ must be measured with respect to the variables $(T,x)$ for only one pressure value $p=p^0$. Below we see that the chemical potentials must solely be measured with respect to the mole fractions $x$ for a single pair $(T=T^0,p=p^0)$.

\vspace{0.2cm} {\bf Representation theorem for the specific entropies $s=\bar s(T,\upsilon,x)$ and $s=\hat s(T,p,x)$.} The representations of the free energy functions need in advance a corresponding representation of the specific entropies. It reads
\begin{equation}\label{Athermo19}
\bar s(T,\upsilon,x) =\frac{1}{M(x)}\int_{\upsilon_0}^\upsilon \partial_T\bar p(T,V,x) \, dV +
    \int_{T^0}^T\frac{\bar c_\upsilon(\vartheta,\upsilon_0,x)}{\vartheta}d\vartheta+\bar s(T^0,\upsilon_0,x)~.
\end{equation}
We conclude that $\bar s(T,\upsilon,x)$ can be calculated via the complete thermal equation of state $\bar p(T,\upsilon,x)$ and the specific heat $\bar c_\upsilon(T,\upsilon_0,x)$ for one value of the specific volume. There remains to determine the composition dependence of the specific entropy at a single pair $(T^0,\upsilon_0)$.

For $s=\hat s(T,p,x)$ we obtain
\begin{equation}\label{Athermo19a}
\hat s(T,p,x) =-\frac{1}{M(x)} \, \int_{p^0}^p\partial_T\hat\upsilon(T,p^{\prime},x) \, dp^{\prime} +
    \int_{T^0}^T\frac{\hat c_p(\vartheta,p^0,x)}{\vartheta}d\vartheta+\hat s(T^0,p^0,x)~.
\end{equation}

The proof of \eqref{Athermo19a} starts with \eqref{Athermo16b}$_2$, yielding
\begin{equation}\label{Athermo20}
\hat s(T,p,x)=-\frac{1}{M(x)}\int_{p^0}^p\partial_T\hat\upsilon(T,V,x) \, dV+\bar s(T,\upsilon_0,x)~.
\end{equation}
Next we calculate $\bar s(T,\upsilon_0,x)$ by means of \eqref{Athermo16b}$_1$, resulting in
\begin{equation}\label{Athermo21}
\bar s(T,\upsilon_0,x)=\int_{T^0}^T \partial_{\vartheta} \bar c_\upsilon(\vartheta,\upsilon_0,x) \, d \vartheta+\bar s(T^0,\upsilon_0,x)~,
\end{equation}
which is inserted into \eqref{Athermo19a}. In an analogous manner we may prove \eqref{Athermo19}.

These representations for the specific entropies are among the building blocks for the construction of the free energy functions.

\vspace{0.2cm} {\bf Representation theorems for the free energy functions $\varrho\psi=\bar \varrho(\upsilon,x)\bar\psi(T,\upsilon,x)$ and $Mg=M(x)\hat g(T,p,x)$.} We show that the free energy functions can be constructed from the functions
\begin{alignat}{2}\label{Athermo22}
    &\bar p(T,\upsilon,x),\qquad \bar c_\upsilon(T,\upsilon_0,x),\qquad\bar \mu_i(T^0,\upsilon_0,x) & \nonumber\\
    \text{or} \qquad & & \quad \\
    &\hat\upsilon(T,p,x),\qquad \hat c_p(T,p^0,x),\qquad\hat\mu_i(T^0,p^0,x)\quad  \text{respectively.}\nonumber
\end{alignat}
The representation for the free energy function $\varrho\bar\psi(T,\upsilon,x)$ reads
\begin{align}
    \varrho\bar \psi(T,\upsilon,x) =-\frac{1}{\upsilon}\int_{\upsilon_0}^\upsilon\bar p(T,V,x)dV &-\frac{M(x)}{\upsilon}
    \int_{T^0}^Td\vartheta\int_{T^0}^\vartheta d\vartheta'\frac{\bar c_\upsilon(\vartheta',p^0,x)}{\vartheta'}\nonumber\\
      &+\frac{M(x)}{\upsilon}\left(\bar\psi(T^0,\upsilon_0,x)-(T-T^0)\bar s(T^0,\upsilon_0,x)\right)~.\label{Athermo23}
\end{align}
This representation of the free energy function makes essential use of the assumption that the constitutive functions \eqref{Athermo22} are known, either by experiments or from microscopic modelling.

To prove \eqref{Athermo23}, we start with \eqref{Athermo13}$_2$. Integration with respect to $\upsilon$ yields
\begin{equation}\label{Athermo24}
\bar \psi(T,\upsilon,x)=\bar \psi(T,\upsilon_0,x)-\frac{1}{M(x)}\int_{\upsilon_0}^\upsilon\bar p(T,V,x)dV~.
\end{equation}
Then we use \eqref{Athermo13}$_1$ to analogously determine $\bar \psi(T,\upsilon_0,x)$. The result is
\begin{equation}\label{Athermo25}
\bar \psi(T,\upsilon_0,x)=\bar \psi(T^0,\upsilon_0,x)-\int_{T^0}^T\bar s(\vartheta,\upsilon_0,x)d\vartheta~.
\end{equation}
The term $\bar \psi(T^0,\upsilon_0,x)$ is calculated from \eqref{Athermo3}$_1$, and reads
\begin{equation}\label{Athermo26}
\bar \psi(T^0,\upsilon_0,x)=-\frac{\upsilon_0}{M(x)}\bar p(T^0,\upsilon_0,x)+\frac{1}{M(x)}\sum_{j=1}^N\bar\mu_j(T^0,\upsilon_0,x)x_j~.
\end{equation}
These three intermediate results are inserted into each other. The resulting equation is finally multiplied by $\varrho$, which completes the proof of the construction of the free energy function \eqref{Athermo23}.

The representation for the free energy function $M(x)\hat g(T,p,x)$ reads\begin{align}
    M\hat g(T,p,x) =\int_{p^0}^p\hat \upsilon(T,p^{\prime},x)dp^{\prime} &-M(x)
    \int_{T^0}^T d\vartheta'\int_{T^0}^{\vartheta'} d\vartheta\frac{\hat c_p(\vartheta,p^0,x)}{\vartheta}\nonumber\\
      &-(T-T^0)M(x)\hat s(T^0,p^0,x)+M(x)\hat g(T^0,p^0,x).\label{Athermo40}
\end{align}
There are two options to express the second line. We may assume that the specific entropy $s(T^0,p^0,x)$ and the specific enthalpy $h(T^0,p^0,x)$ are given. Then we have
\begin{equation}\label{Athermo41}
-(T-T^0)M(x)\hat s(T^0,p^0,x)+M(x)\hat g(T^0,p^0,x)=-T M(x)\hat s(T^0,p^0,x)+M(x)\hat h(T^0,p^0,x)~.
\end{equation}
Secondly, if we know the chemical potentials $\mu_i(T^0,p^0,x)$ and their derivatives $\partial_T\mu_i(T^0,p^0,x)$, the functions $s(T^0,p^0,x)$ and $g(T^0,p^0,x)$ may be substituted by
\begin{equation}\label{Athermo42}
\hat s(T^0,p^0,x)=-\sum_{i=1}^N \partial_T\mu_i(T^0,p^0,x) \, x_i\qquad\textrm{and}\qquad
\hat g(T^0,p^0,x)=\sum_{i=1}^N\mu_i(T^0,p^0,x) \, x_i~.
\end{equation}
%
To prove the representation \eqref{Athermo40}, at first we integrate \eqref{Athermo13b}$_2$ with respect to the volume. Then \eqref{Athermo13b}$_1$ at $(p^0,\, x)$ is integrated with respect to temperature. This leads to \eqref{Athermo40}. The first option \eqref{Athermo41} simply follows from the decomposition of the Gibbs free energy \eqref{Athermo40} into entropy and enthalpy. The second option relies on the Gibbs-Duhem equation \eqref{Athermo3} specialized to the Gibbs free energy and \eqref{Athermo16b}$_3$.

\section{Ideal mixture}\label{idmix}
Assume that the chemical potentials obey the additive ansatz
\begin{align}\label{celui}
\hat{\mu}_i(T, \, p, \, x) = g_i(T, \, p) + a_i(T, \, x_i)\text{ for } i =1,\ldots,N \, .
\end{align}
Here we consider given regular functions $g_i(T, \cdot): \, ]p_{\inf}, \, +\infty[ \rightarrow \mathbb{R}$ and $ a_i(T,\cdot): \, ]0, \, +\infty[ \rightarrow \mathbb{R}$.

First using that $\mu_i = \partial_{n_i}\varrho\psi$, we obtain via differentiation of \eqref{celui} that
\begin{align}\label{diffcelui}
 \partial^2_{n_i,n_j}\varrho\psi = \partial_p g_i(T, \, p) \, \partial_{n_j} p + a_i^{\prime}(T, \, x_i) \, \partial_{n_j}x_i \, .
\end{align}
We multiply with $n_j \, n_i$ and sum over $i,j=1,\ldots,N$. We have $\sum_j n_j \, \partial_{n_j}x_i = 0$ and, moreover, $\sum_{j} n_j \, \partial_{n_j} p = D^2_{n,n}(\varrho\psi) n \cdot n > 0$, we obtain that
\begin{align*}
 D^2_{n,n}(\varrho\psi) n \cdot n \, (1 - \sum_{i=1}^N \partial_p g_i(T, \, p) \, n_i) = 0 \Longleftrightarrow \sum_{i=1}^N \partial_p g_i(T, \, p) \, n_i = 1 \, .
\end{align*}
Now, we multiply \eqref{diffcelui} with $n_i$ and sum over $i=1,\ldots,N$. Since $\partial_{n_j} p = \sum_i \, n_i \, \partial^2_{n_i,n_j}\varrho\psi$, it follows that
\begin{align*}
 \sum_{i=1}^N n_i \, a_i^{\prime}(T, \, x_i) \, \partial_{n_j}x_i = 0 \text{ for all } j = 1,\ldots,N \, ,
\end{align*}
which we can rephrase as
\begin{align*}
x_j \, a_j^{\prime}(T, \, x_j) = \sum_{i=1}^N x_i^2 \, a_i^{\prime}(T, \, x_i) \text{ for all } j = 1,\ldots,N \, .
\end{align*}
From the latter relation we readily see that, necessarily,
\begin{align*}
& a_j^{\prime}(T, \, x_j) = \frac{c(T)}{x_j} \, \text{ for } j = 1,\ldots,N \, \\
\text{implying that } \quad & a_j(T, \, x_j) = c(T) \, \ln x_j + C_j(T) \, .
\end{align*}
With the molar based chemical potentials, we have the general relationship $\sum_i \hat{\mu}_i \, x_i = M(x) \, \hat{\psi} + p \, \hat{\upsilon}$. It implies that $\hat{\upsilon} = \sum_i x_i \, \partial_p \hat{\mu}_i$. For an ideal mixture, it thus follows that
\begin{align*}
 \sum_{i=1}^N \partial_pg_i(T, \, p) \, x_i = \hat{\upsilon}(T, \, p, \, x) \, .
\end{align*}
Hence we must require that
\begin{align*}
 \sum_{i = 1}^N \partial^2_{p,p} g_i(T, \, p) \, x_i < 0 \text{ for all } x \, ,
\end{align*}
which is possible only if $p \mapsto g_i(T, \, p)$ is strictly concave for all $i$.

 If $\sum_{i=1}^N x_i \, \partial^2_{p,p}g_{i}(T, \, p) < 0$ for all $x$, we introduce $p(T,n_1, \ldots,n_N)$ as the root of the equation $\sum_{i=1}^N n_i \, \partial_pg_{i}(T, \, p(T, \, n)) = 1$, and $\varrho\psi$ is, up to a function solely of temperature, defined via the formula
\begin{align*}
\varrho\psi = n \, \left(\sum_{i=1}^N x_i \,  g_i(T, \, p(T, \, n_1,\ldots,n_N)) + c(T) \, x \cdot \ln x +  C(T) \cdot x \right) - p(T, \, n_1,\ldots,n_N) \, ,
\end{align*}
in which $x_i = x_i(n_1,\ldots,n_N) = n_i/n$.

To verify that $\varrho\psi$ is concave in $T$, we compute the second derivatives, given as
\begin{align*}
\partial^2_{T,T}\varrho\psi = &\sum_{i=1}^N n_i \,  (c^{\prime\prime}(T) \, \ln x_i + C_i^{\prime\prime}(T)) \\
& + \sum_{i=1}^N n_i \, \partial^2_{T,T} g_i(T, \, p(T,n)) - \frac{(\sum_{i=1}^N n_i \, \partial^2_{T,p} g_i(T, \, p(T,n)))^2 }{\sum_{i=1}^N n_i \, \partial^2_{p,p} g_i(T, \, p(T,n)) } \ .
\end{align*}
Sufficient conditions for $\partial^2_{T,T} \varrho\psi < 0$ are, therefore, $c^{\prime\prime}(T) \geq 0$ and
\begin{align*}
 T \mapsto C_i(T), \quad (T, \, p) \mapsto g_i(T, \, p) \text{ concave for all } i = 1,\ldots,N \, .
\end{align*}
Up to the functions $c$ and $C$ of temperature, the free energy is completely determined from the assumption \eqref{celui}.

\section{Proof of the Propositions \ref{formula} and \ref{exi}} \label{detrop}

Defining $f$ as in the statement of Prop.\ \ref{formula}, the assumptions yield directly $f \in C^1(\mathbb{R}^N_+)$. By direct computation, use of $\partial_p\bar{V}(p(\rho), \, \rho) = V(p(\rho), \, \rho) = 1$ implies that
\begin{align}\label{gradh}
 \partial_{\rho_i} f(\rho) = & p^0 \, V_{\rho_i}(p^0, \, \rho) + k_{\rho_i}(\rho) + \bar{V}_{\rho_i}(p(\rho), \, \rho) + (\bar{V}_p(p(\rho), \, \rho) - 1)  \,  p_{\rho_i}(\rho) \nonumber\\
 = & p^0 \, V_{\rho_i}(p^0, \, \rho) +  k_{\rho_i}(\rho) + \bar{V}_{\rho_i}(p(\rho), \, \rho) \, .
\end{align}
Hence, both $k$ and $\bar{V}$ being positively homogeneous in $\rho$, we see that $$\rho \cdot \nabla_{\rho} f(\rho) = \rho \cdot \big(p^0 \, V_{\rho}(p^0, \, \rho) + k_{\rho}(\rho) + \bar{V}_{\rho}(p(\rho), \, \rho\big) = p^0 \, V(p^0, \, \rho) + k(\rho) + \bar{V}(p(\rho), \, \rho)\, .$$ Thus, $\rho \cdot \nabla_{\rho} f- f = p$, which is \eqref{probPbulk}. For $\rho$ satisfying $p(\rho) = p^0$, the definition of the pressure implies that $V(p^0, \,\rho) = 1$. Since $\bar{V}(p^0, \, \rho) = 0$ by definition, it holds that $f(\rho) = p^0 \, V(p^0, \, \rho) + k(\rho)  - p^0 = k(\rho)$. This verifies \eqref{probPbounda}.
In order to show that the solution is unique, assume that $f^1, \, f^2$ both solve \eqref{probPbulk}, \eqref{probPbounda}. We see that $f^1-f^2 =: d$ is positively homogeneous, and we must have $d(\rho) = 0$ whenever $p(\rho) = p^0$. For each fixed $\rho$, we have $V(p^0, \, \lambda \, \rho) = 1$ for $\lambda = 1/V(p^0, \, \rho)$. Thus, $p^0 = p(\lambda \, \rho)$, and $\lambda \, d(\rho) = d(\lambda \, \rho) = 0$ implies that $d \equiv 0$. This shows that $f^1 = f^2$.

Next we prove the additional properties of the solution $f$.

Since $V$ and $k$ are both of class $C^2(\mathbb{R}^N_+)$, the formula \eqref{gradh} shows that $\nabla_{\rho} f$ is differentiable, and that
\begin{align}\label{hessh}
&  \partial^2_{\rho_j,\rho_i} f = p^0 \, V_{\rho_j,\rho_i}(p^0, \, \rho) + k_{\rho_j,\rho_i}(\rho) + \bar{V}_{\rho_j,\rho_i}(p(\rho), \, \rho) + V_{\rho_i}(p(\rho), \, \rho)  \,  p_{\rho_j}(\rho)\\
& \qquad = p^0 \, V_{\rho_j,\rho_i}(p^0, \, \rho) + k_{\rho_j,\rho_i}(\rho) + \bar{V}_{\rho_j,\rho_i}(p(\rho), \, \rho) - \frac{1}{V_p(p(\rho), \,\rho)} \, V_{\rho_i}(p(\rho), \, \rho)  \,   V_{\rho_j}(p(\rho), \, \rho) \, .\nonumber
\end{align}
Here, we also used the equation $V(p(\rho), \, \rho) = 1$ to compute $\partial_{\rho}p = -V_{\rho}/V_p$. Thus, if the Hessians $ D^2_{\rho,\rho}(p^0 \, V(p^0) + k + \bar{V}(p))(\rho)$ are positive semi-definite for all $p \in ]p_{\inf}, \, p_{\sup}[$, it follows that $D^2 f$ is positive semi-definite.

Since $\rho \mapsto p^0 \, V(p^0, \, \rho) + k(\rho) + \bar{V}(p, \, \rho)$ is positively homogeneous, these Hessians possess necessarily a kernel containing $\{\rho\}$. We can verify for arbitrary $\eta \in S^2_+$ and $p = p(\rho)$ that
\begin{align}\label{darstellung}
  D^2_{\rho,\rho} f(\rho) \eta \cdot \eta = (p^0 \, V_{\rho,\rho} (p^0, \, \rho) + k_{\rho,\rho} (\rho) + \bar{V}_{\rho,\rho} (p, \, \rho)) \eta \cdot \eta - \frac{1}{V_p(p, \, \rho)} \, |V_{\rho}(p, \, \rho) \cdot \eta|^2 \, .
\end{align}
Choose now an orthonormal basis $\{\xi^1,\ldots,\xi^{N-1}, \, \rho/|\rho|\}$. For $\rho \in \mathbb{R}^N_+$ and $p > p_{\inf}$, we can verify that the $N-1$ vectors $\xi^i - (V_{\rho}(p, \, \rho)\cdot \xi^i/V(p, \, \rho)) \, \rho$, together with $\rho$, form again a basis of $\mathbb{R}^N$. For any $\eta \in \mathbb{R}^N$, we thus find coefficients $a_1, \ldots, a_{N-1}$ and $b$ to represent $$ \eta = \sum_{i=1}^{N-1} a_i \, (\xi^i - \frac{V_{\rho}(p, \, \rho)\cdot \xi^i}{V(p, \, \rho)} \, \rho) + b \, \rho\,.$$
Hence, we find a number $c = c(\rho, \, p)$ such that $|\eta| \leq c \, (|a|+|b|)$. Define the matrix $\Pi \in \mathbb{R}^{N\times (N-1)}$ with columns given by the vectors $\xi^1,\ldots,\xi^{N-1}$. Using \eqref{darstellung}, it is readily seen that
\begin{align}\label{plusloin}
&  D^2_{\rho,\rho} f(\rho) \eta \cdot \eta =  \Pi^{\sf T}(p^0 \, V_{\rho,\rho}(p^0, \, \rho)  + k_{\rho,\rho}(\rho) + \bar{V}_{\rho,\rho}(p, \, \rho)) \Pi \, a \cdot a - \frac{(V(p, \, \rho))^2}{V_p(p, \, \rho)} \, |b|^2 \nonumber \\
& \qquad \geq  \lambda_{\min}(\Pi^{\sf T}(p^0 \, V_{\rho,\rho}(p^0, \, \rho)  + k_{\rho,\rho}(\rho) + \bar{V}_{\rho,\rho}(p, \, \rho)) \Pi) \, |a|^2 - \frac{(V(p, \, \rho))^2}{V_p(p, \, \rho)} \, |b|^2 \, .
 \end{align}
If $p^0 \, V_{\rho,\rho}(p^0, \, \rho)  + k_{\rho,\rho}(\rho) + \bar{V}_{\rho,\rho}(p, \, \rho)$ possesses $N-1$ strictly positive eigenvalues, it is positive definite on $\{\rho\}^{\perp}$. Hence, the choice of the $\xi^i$ implies that $\lambda_{\min}(\Pi^{\sf T} D^2(p^0 \, V(p^0, \, \rho)  + k(\rho) + \bar{V}(p, \, \rho)) \Pi) > 0$. It follows from \eqref{plusloin} that $ D^2f(\rho) \eta \cdot \eta \geq c \, |\eta|^2$, showing that $D^2f$ is strictly positive definite a every point of $\mathbb{R}^N_+$.
%

In particular, $f$ is strictly convex. To show that $f$ is of Legendre--type, it remains to prove that $f$ is essentially smooth. Consider an arbitrary sequence $\{\rho^m\} \subset \mathbb{R}^N_+$ such that $\rho^m \rightarrow \rho \in \partial \mathbb{R}^N_+$. The {\bf first case} is $\rho = 0$. In this case the equations $V(p(\rho^m), \,\rho^m) = 1$ imply for the fractions $y^m := \rho^m/|\rho^m|_1$ that $V(p(\rho^m), \,y^m) = 1/|\rho^m|_1$. In turn, this yields $\limsup_{m \rightarrow \infty} p(\rho^m) = p_{\inf}$. Otherwise, we would find a subsequence, a number $p_1 > p_{\inf}$, and $y \in \overline{S}^1_+$, such that $V(p_1, \, y) = \lim_{k\rightarrow \infty} V(p(\rho^{m_k}), \, y^{m_k}) = + \infty$ in contradiction to the fact that $V \in C(]p_{\inf}, \, p_{\sup}[ \times \overline{S}^1_+)$. Next, using the homogeneity of degree zero of $V_{\rho}$ and $k_{\rho}$, we compute
\begin{align*}
 y^m \cdot \nabla_{\rho} f(\rho^m) = & y^m \cdot (p^0 \, V_{\rho}(p^0, \, \rho^m) + k_{\rho}(\rho^m) + \bar{V}_{\rho}(p(\rho^m), \, \rho^m))\\
 = & p^0 \, V(p^0, \, y^m) + k(y^m) + \bar{V}(p(\rho^m), \, y^m) \\
 \leq &  \sup_{y \in S^1_+} (p^0 \, V(p^0, \, y) + k(y) + \bar{V}(p(\rho^m), \, y)) \, .
\end{align*}
The assumption \eqref{suprem} now implies that $\limsup_{m\rightarrow \infty} y^m \cdot \nabla_{\rho} f(\rho^m)  = - \infty$.
Clearly, it follows that $|\nabla_{\rho} f(\rho^m)| \rightarrow + \infty$. The {\bf second case} is $\rho \neq 0$, so that the sequence $\{\rho^m\}$ is uniformly bounded and bounded away from zero. The equations $V(p(\rho^m), \, \rho^m) = 1$ imply that $p(\rho^m)$ remains in a compact set $[p_1, \, p_2] \subset ]p_{\inf}, \, p_{\sup}[$. Moreover, since $|\rho^m|_1 \geq c_0 > 0$, we see that the fractions $y^m = \rho^m/|\rho^m|_1$ must tend to a boundary point of $S^1_+$. Hence
\begin{align*}
 |\nabla_{\rho} f(\rho^m)| \geq \inf_{p \in ]p_1, \, p_2[} |p^0 \, V_{\rho}(p^0, \, y^m) +  k_{\rho}(y^m) + \bar{V}_{\rho}(p, \, y^m)| \, ,
\end{align*}
and in view of the assumption \eqref{condition4prime}, it follows that $\liminf_{m\rightarrow \infty} |\nabla_{\rho} f(\rho^m)| = + \infty$. This concludes the proof that $f$ is essentially smooth.

We now discuss the conditions in order that $f$ is co-finite. Recall that $p_{\sup} = + \infty$ in the statement of Prop.\ \ref{exi}. First we show that $|\nabla_{\rho} f(\rho^m)| \rightarrow + \infty$ for $|\rho^m|\rightarrow +\infty$. Since $V(p(\rho^m), \,y^m) = 1/|\rho^m|_1$, we find that $\liminf_{m \rightarrow \infty} p(\rho^m) = +\infty$. Otherwise $V(p_1, \, y) = \lim_{k\rightarrow \infty} V(p(\rho^{m_k}), \, y^{m_k}) = 0$ for some subsequence, finite $p_1$ and $y \in \overline{S}^1_+$. Thus
\begin{align*}
 y^m \cdot \nabla_{\rho} f(\rho^m) = &  p^0 \, V(p^0, \, y^m) + k(y^m) + \bar{V}(p(\rho^m), \, y^m)\\
 \geq & \inf_{y \in S^1_+} (p^0 \, V(p^0, \, y) + k(y) + \bar{V}(p(\rho^m), \, y)) \, .
\end{align*}
In view of the assumption \eqref{inf2}, we infer $\liminf_{m\rightarrow \infty} |\nabla_{\rho} f(\rho^m)| = + \infty$.

Consider the set $\nabla_{\rho} f(\mathbb{R}^N_+) := \{\mu \in \mathbb{R}^N \, : \, \mu = \nabla_{\rho} f(\rho) \text{ for some } \rho \in \mathbb{R}^N_+\}$. This set is closed. Indeed, if $\{\mu^m\} \subset \nabla f(\mathbb{R}^N_+)$ converges to some $\mu \in \mathbb{R}^N$, then any sequence $\{\rho^m\} \subset \mathbb{R}^N_+$ such that $\mu^m = \nabla_{\rho} f(\rho^m)$ must remain in a compact of $\mathbb{R}^N_+$. Otherwise, we can find a subsequence such that either $\{\rho^{m_k}\}$ tends to a boundary point of $\mathbb{R}^N_+$ or to $+\infty$. Then the previous arguments show that $|\mu^{m_k}| = |\nabla_{\rho} f(\rho^{m_k})|$ tends to infinity, in contradiction to the fact $\mu^{m_k} \rightarrow \mu$. Thus $\nabla_{\rho} f(\mathbb{R}^N_+)$ is closed. Hence we infer that $\nabla_{\rho} f(\mathbb{R}^N_+)$ does not possess any boundary points. For if $\mu$ would be such a boundary point, then $\mu \in  \nabla_{\rho} f(\rho)$ for some $\rho \in \mathbb{R}^N_+$, and the inverse mapping theorem implies that there are neighborhoods of $\rho$ in $\mathbb{R}^N_+$ and of $\mu \in \mathbb{R}^N$ such that $\nabla_{\rho} f$ is a bijection therein.
\end{document}